\documentclass[american,aps,prl,reprint]{revtex4-2}
\usepackage[T1]{fontenc}
\usepackage[latin9]{inputenc}
\usepackage{color}
\usepackage{babel}
\usepackage{amsmath}
\usepackage{amsthm}
\usepackage{amssymb}
\usepackage{graphicx}
\usepackage[pdfusetitle,
 bookmarks=true,bookmarksnumbered=false,bookmarksopen=false,
 breaklinks=false,pdfborder={0 0 0},pdfborderstyle={},backref=false,colorlinks=true]
 {hyperref}
\hypersetup{
 allcolors=magenta}

\makeatletter
\theoremstyle{plain}
\newtheorem{thm}{\protect\theoremname}
\theoremstyle{plain}
\newtheorem{lem}[thm]{\protect\lemmaname}
\theoremstyle{plain}
\newtheorem{prop}[thm]{\protect\propositionname}
\theoremstyle{plain}
\newtheorem{coro}[thm]{\protect\corollaryname}

\usepackage{times}
\usepackage{txfonts}
\usepackage{braket}
\usepackage{colortbl}

\makeatother

\providecommand{\lemmaname}{Lemma}
\providecommand{\propositionname}{Proposition}
\providecommand{\theoremname}{Theorem}
\providecommand{\corollaryname}{Corollary}

\begin{document}
\title{Asymptotic Limits of Entanglement Transmission}

\author{Piotr Masajada}
\author{Aby Philip}
\author{Alexander Streltsov}
\email{streltsov.physics@gmail.com}
\affiliation{Institute of Fundamental Technological Research, Polish Academy of Sciences, Pawi\'{n}skiego 5B, 02-106 Warsaw, Poland}

\begin{abstract}
Reliable distribution of quantum entanglement over long distances is a central challenge in quantum information science, fundamentally limited by decoherence in noisy communication channels. In this work, we investigate the asymptotic limits of entanglement distribution across homogeneous linear repeater chains with arbitrary intermediate LOCC processing. We establish a strict dichotomy: the asymptotic preservation of entanglement over arbitrarily long distances is possible if and only if the underlying quantum channel admits a correctable subspace. For channels lacking such a subspace, we prove that the transmitted state converges exponentially fast to the set of separable states, rendering standard LOCC filtering insufficient. To counteract this exponential degradation, we analyze ``networks'' employing parallel channel uses per link. We derive a fundamental lower bound on the required number of parallel channel uses per elementary link, proving that for broad classes of channels without a correctable subspace, the number of parallel channels per link must scale at least logarithmically with the number of intermediate stations to sustain a non-zero amount of entanglement. This provides a code-independent bound for the number of physical links for the considered homogeneous one-way architecture. For depolarizing noise below the corresponding coding threshold, the logarithmic lower bound is achievable.
\end{abstract}

\maketitle

\emph{\textbf{Introduction.}} Quantum entanglement~\cite{Horodeckientanglementreview} is a fundamental resource for a wide array of quantum information processing tasks, including quantum key distribution~\cite{BennettQKD}, quantum teleportation~\cite{Bennettteleportofstate}, and distributed quantum computing~\cite{Kimblequantuminternet,Meterdq}. A central challenge in realizing these technologies is the reliable distribution of entanglement between distant parties. In realistic scenarios, quantum states must be transmitted through noisy quantum channels, which inevitably interact with the environment and induce decoherence. This continuous degradation strictly limits the amount of entanglement that can be shared and maintained between remote parties. 

In the standard entanglement distribution scenario~\cite{Palnotmest,Streltsovdistribution,Krisnandadistributionentanglement,Zuppardentanglementdistribution} two distant parties, Alice and Bob, have access to a noisy quantum channel \( \Lambda \). Alice first prepares a bipartite quantum state \( \rho \) in her laboratory, keeping one subsystem and using the channel to transmit the other subsystem to Bob. Because the transmission is imperfect, the channel acts nontrivially on the particle sent to Bob, while the particle retained by Alice remains unchanged. As a result, the initially prepared state \( \rho \) is transformed into the final bipartite state \(\Lambda \otimes \openone [\rho] \). Any entanglement initially present in \( \rho \) is generally degraded by the action of the channel, and the properties of \( \Lambda \) determine how much entanglement can ultimately be shared between the two parties.

If the noisy channel \( \Lambda \) is used sequentially, the resulting state shared by Alice and Bob is given by  
\begin{equation}
    (\Lambda \circ \cdots \circ \Lambda) \otimes \openone [\rho].
    \label{eq::lambdaserial}
\end{equation}
 This setting is relevant, for example, in long-distance quantum communication through a chain of noisy transmission segments, where the quantum system effectively undergoes several consecutive applications of the same channel before reaching Bob. In general, the detrimental effects of noise accumulate with the number of channel uses, leading to a progressive degradation of the shared entanglement~\cite{LamiEBindices}. In many physically relevant situations, this entanglement eventually disappears altogether in the limit of infinitely many channel applications~\cite{Rahamaneb}.

To counteract the accumulation of noise, one can introduce intermediate stations to process the transmitted systems between successive channel segments. This general principle underlies quantum repeaters, although different repeater architectures employ different physical resources and classical communication methods~\cite{Briegelrepeater,Muralidharannetwork,Duanrepeaters, Muralidharanlogarithmics,Munrorepeater,Azumarepeaterstation,Azumarepeaters}. Here, we consider a homogeneous chain in which every elementary link is described by the same finite-dimensional channel $\Lambda$. Each transmission round consists of a trace-preserving LOCC operation $\mathcal{F}_i$ followed by an application of $\Lambda\otimes\openone$. The operations may depend on all previous measurement outcomes; equivalently, the classical transcript required for such adaptivity may be included in the state, see Fig.~\ref{fig:Setting}. Denoting the overall transformation after $n$ rounds by $\Lambda_n$, we have
\begin{equation}
    \label{eq:LambdaN}
    \Lambda_n[\rho]
    =
    (\Lambda\otimes\openone)\circ\mathcal{F}_n\circ\cdots\circ
    (\Lambda\otimes\openone)\circ\mathcal{F}_1[\rho].
\end{equation}

For deterministic protocols, our entanglement measure of choice is the
entanglement of formation
\cite{Hillentanglementofformation,Woottersentanglementofformation},
\begin{equation}
    E_{\mathrm f}(\rho^{AB})
    =
    \min_{\{p_i,\psi_i^{AB}\}}
    \sum_i p_i
    S\!\left(\operatorname{Tr}_B\psi_i^{AB}\right),
\end{equation}
where the minimum is taken over all pure-state decompositions
$\rho^{AB}=\sum_i p_i\psi_i^{AB}$ and $S(\rho^{AB})$ denotes von Neumann entropy. For two-qubit states,
$E_{\mathrm f}$ is a strictly increasing function of the concurrence
$C$,
\begin{equation}
    E_{\mathrm f}(\rho)
    =
    h_2\!\left(
        \frac{1+\sqrt{1-C(\rho)^2}}{2}
    \right),
\end{equation}
where $h_2$ is the binary entropy. For the definitions of concurrence see Supplementary Material. Thus, for two qubits, convergence of concurrence
to zero is equivalent to convergence of the entanglement of formation
to zero.

We say that a deterministic protocol
asymptotically preserves entanglement if, for some initial state
$\rho^{AB}$ and some admissible sequence of intermediate operations,
\begin{equation}
    \label{eq:NonzeroLimit}
    \lim_{n\rightarrow\infty}
    E_{\mathrm f}\!\left(\Lambda_n[\rho^{AB}]\right)>0.
\end{equation}
Without intermediate processing, the related problem was studied in
the theory of asymptotically entanglement-saving channels
\cite{Lamieas}. Our setting additionally permits arbitrary adaptive
operations between successive channel uses.

We subsequently consider another transmission architecture in which
each occurrence of $\Lambda$ is replaced by $m$ parallel uses
$\Lambda^{\otimes m}$. The intermediate operations may include
decoding, quantum error correction, recovery, and re-encoding.
Consequently, deterministic error-correction-based
third-generation repeaters form a special case of our model
\cite{Muralidharannetwork}. Concrete realizations of this architecture
include direct one-way protocols based on teleportation-based error
correction, encoded Bell measurements, block quantum codes, and code
concatenation
\cite{Muralidharanlogarithmics,namikirepeater,Ewertrepeater,Muralidharanrepeater,Worepeater,NiuqLDPC,Glaudellrepeater}.
Our analysis does not assume any particular code, decoding procedure,
or recovery scheme.

This problem is complementary to the literature on quantum network
capacities, routing, and cut-based bounds
\cite{Azumanetwork,Rigovaccanetworks,Pirandolanetwork,
Baumlnetwork,Pantnetworks}. These frameworks typically fix a network
and optimize an asymptotic communication rate, allowing general
topologies, adaptive protocols, and different routing strategies.
Instead, we let the number of links $n$ in a homogeneous chain grow
and ask how the number of physical channels $m=m(n)$ assigned to every link must scale
for the end-to-end entanglement to remain nonzero. Within this setting, we derive
code-independent constraints on asymptotic entanglement distribution
and determine when the resulting resource scaling is tight. Techniques like network capacities
and min-cut bounds do not directly determine this finite-block
scaling with the chain length.

\begin{figure}
\includegraphics[width=0.8\columnwidth]{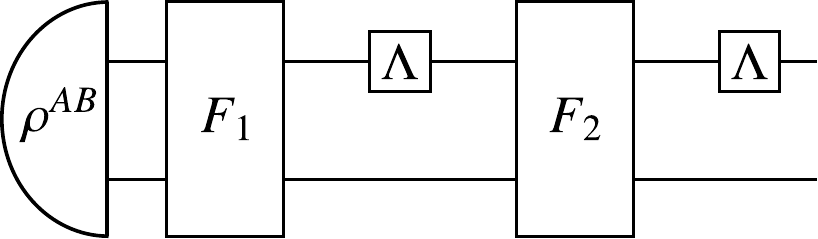}
\caption{\label{fig:Setting}
\textbf{Entanglement distribution through a noisy channel with intermediate repeater stations.} An initial bipartite state \( \rho^{AB} \) is shared between parties \( A \) and \( B \). Subsystem \( A \) is transmitted through a sequence of \( n \) noisy channel uses \( \Lambda \), which may be understood as propagation through \( n \) successive transmission segments with intermediate stations located between Alice and Bob. After each channel use, the parties, possibly assisted by these intermediate stations, may apply an LOCC filter \( F_i \). The figure illustrates the resulting state in Eq.~\eqref{eq:LambdaN} for $n=2$.}
\end{figure}

\medskip

\emph{\textbf{Single-qubit channels.}} To begin, we investigate entanglement distribution through a single-qubit quantum channel $\Lambda$. We assume Alice holds a single-qubit memory that she aims to entangle with Bob's system across a sequence of $n$ transmission segments, each linked via the quantum channel $\Lambda$. We recall that the Choi state of  \(\Lambda\) is defined as~\cite{CHOIch}
\begin{equation}
\Gamma = (\Lambda \otimes \openone)(\phi^{+}),
\end{equation}
where \(\ket{\phi^{+}} = (\ket{00} + \ket{11})/\sqrt{2}\) is a Bell state. 

We consider the scenario in which the intermediate filtering operations $F_i$ are allowed to be stochastic local operations and classical communication (SLOCC)~\cite{Durslocc} rather than deterministic LOCC. This broader framework is important because probabilistic filters can, upon success, increase the entanglement of the post-selected state which is not possible with deterministic LOCC. The key question is thus whether this additional freedom provided by SLOCC can preserve a nonzero amount of entanglement after many sequential applications of the channel by conditioning on successful filtering events. In the following, we show that although SLOCC filters may temporarily improve the quality of the surviving states, this does not overcome the fundamental limitations of the channel: any such apparent entanglement preservation comes at the price of a success probability that necessarily decreases with the number of channel uses. To show this, we use concurrence whose definition we recall in the Supplementary Material. Note that going forward, $P(C \geq c,n)$ denotes the probability that the final state $\Lambda_n[\rho^{AB}]$ has concurrence at least $c$.

\begin{thm}\label{thm::qubit_slocc}
Let $\Lambda$ be a single-qubit channel and consider an adaptive
$n$-round protocol in which an SLOCC instrument is followed by
$\Lambda\otimes\openone$ in each round. The instrument in round $i$
may depend on the previous outcomes
$h_{i-1}=(x_1,\ldots,x_{i-1})$. If $\rho_{H_n}$ denotes the random
final state, then, for every $c>0$,
\begin{equation}
    \Pr\!\left[C(\rho_{H_n})\geq c\right]
    \leq
    \frac{q^n C(\rho_0)}{c}
    \leq
    \frac{q^n}{c},
\end{equation}
where
$q=C[(\Lambda\otimes\openone)(\phi^+)]$ is the concurrence of Choi state of $\Lambda.$
\end{thm}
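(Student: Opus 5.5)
The plan is to show that the expected concurrence of the final ensemble is at most $q^n C(\rho_0)$, and then conclude with Markov's inequality.

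Concretely, let $\{p_{h_n},\rho_{h_n}\}$ be the ensemble produced by the protocol: $p_{h_n}$ is the probability of history $h_n$, and $\rho_{h_n}$ is the normalized state conditioned on that history. The target is
\begin{equation}
\mathbb{E}\,C(\rho_{H_n})=\sum_{h_n}p_{h_n}C(\rho_{h_n})\le q^n C(\rho_0).
\end{equation}
Once this holds, $\Pr[C\ge c]\le \mathbb{E}C/c$ gives the first inequality. The second follows from $C(\rho_0)\le 1$.

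The expectation bound will be proved by induction on rounds, using two facts about a single round.

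\textbf{(a) Channel step.} For any two-qubit state $\sigma$ we need $C((\Lambda\otimes\openone)\sigma)\le q\,C(\sigma)$. This is the known factorization law for concurrence under one-sided qubit channels (Konrad et al.), which says equality holds for pure states. The intended derivation is as follows.
\begin{itemize}
\item For a pure state $\ket{\psi}=(\openone\otimes M)\ket{\phi^+}$ we have $C(\psi)=2|\det M|$.
\item The channel then gives $(\Lambda\otimes\openone)\psi=(\openone\otimes M)\Gamma(\openone\otimes M)^\dagger$.
\item Concurrence of a two-qubit operator transforms under local $\openone\otimes M$ by the factor $|\det M|$; this can be read off from the spin-flipped $R$-matrix, and the unnormalized normalization works out.
\item Hence $C=2|\det M|\,C(\Gamma)=q\,C(\psi)$.
\item For mixed $\sigma$, take an optimal pure-state decomposition of $\sigma$ and use convexity of $C$.
\end{itemize}

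\textbf{(b) SLOCC step.} For a local instrument on a two-qubit state $\sigma$ with outcomes $x$, probabilities $p_x$ and posts $\sigma_x$, we need $\sum_x p_x C(\sigma_x)\le C(\sigma)$. That is, concurrence is an entanglement monotone on average under LOCC, including stochastic branches. I would cite the known result that for two qubits $C$ is nonincreasing on average under LOCC. One can also argue directly: a local Kraus operator $A\otimes B$ scales the unnormalized concurrence by $|\det A||\det B|$. Applying this to an optimal decomposition and using $\sum_k|\det A_k|^2\le\ldots$ together with AM–GM bounds the total by $C(\sigma)$ for a complete instrument. Classical communication only chains such steps adaptively.

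\textbf{Induction.} Define $E_i=\mathbb{E}\,C(\rho_{H_i})$ for the state after round $i$. Conditioned on $h_{i-1}$, apply (b) to the instrument chosen at that history, and then (a) to each branch. Averaging over $h_{i-1}$ gives $E_i\le qE_{i-1}$, so $E_n\le q^nC(\rho_0)$.

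A subtlety concerns failed filtering outcomes. The statement counts the probability that the final state has $C\ge c$, so failure branches are kept in the ensemble as outcomes (with any state, or a separable flag). The instrument is then trace preserving overall and (b) applies. If failure is instead discarded, the bound is still valid, because dropping terms only lowers both sides consistently with treating failure as $C=0$.

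The main obstacle is (b) in full generality: average monotonicity for arbitrary multi-outcome local instruments with adaptive two-way communication. I would reduce this to a single local Kraus step on one side, where the determinant argument applies. I would verify the key inequality $\sum_k|\det A_k|\le 1$ for $\sum_k A_k^\dagger A_k=\openone$ on a qubit, which follows from $|\det A|\le \tfrac12\operatorname{Tr}A^\dagger A$.
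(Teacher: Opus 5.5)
Your proposal is correct and follows the paper's proof essentially step for step. The ingredients are the same: average non-increase of concurrence under the (complete) SLOCC instrument chosen at each history, the factorization law $C((\Lambda\otimes\openone)\sigma)\le q\,C(\sigma)$ applied branchwise, the tower property giving $\mathbb{E}[C_n]\le q^nC(\rho_0)$, and then Markov's inequality. Your added sketches of the two cited facts are sound, and in fact $\sum_k|\det A_k||\det B_k|\le 1$ holds for any complete product-Kraus instrument on two qubits. There is one harmless normalization slip: for normalized $(\openone\otimes M)\ket{\phi^+}$ one has $C=|\det M|$, not $2|\det M|$, and the extra factor cancels in $C=q\,C(\psi)$.
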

\noindent The proof of the theorem can be found in the supplemental materials. If we demand that  $P(C\geq c,n)=1$ in Theorem~\ref{thm::qubit_slocc}, we recover the following statement.

\begin{coro}{\label{thm::qubit}}
For any single-qubit channel $\Lambda$ the amount of concurrence
decreases as 
\begin{equation}
C(\Lambda_{n}[\rho^{AB}])\leq q^{n}.
\end{equation}
\end{coro}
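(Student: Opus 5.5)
The plan is to obtain Corollary~\ref{thm::qubit} directly from Theorem~\ref{thm::qubit_slocc}, using the fact that deterministic LOCC is a special case of SLOCC. In the deterministic setting of Eq.~\eqref{eq:LambdaN}, each trace-preserving LOCC map $\mathcal{F}_i$ is a one-outcome SLOCC instrument, or a coarse-graining of one if it is implemented with classical outcomes. The final state is then a fixed state $\Lambda_n[\rho^{AB}]$ rather than a random one. So the hypotheses of Theorem~\ref{thm::qubit_slocc} hold with the random state $\rho_{H_n}$ equal to $\Lambda_n[\rho^{AB}]$ with probability one.

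Set $c:=C(\Lambda_n[\rho^{AB}])$. If $c=0$, the claim is trivial. Otherwise $\Pr[C(\rho_{H_n})\geq c]=1$, and Theorem~\ref{thm::qubit_slocc} gives $1\leq q^n C(\rho_0)/c$. Hence $c\leq q^n C(\rho_0)\leq q^n$, since the concurrence of any two-qubit state is at most one.

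The one subtle point is the case where the deterministic LOCC maps are realized through intermediate measurements whose transcript is kept in the state. In that case the final state is an ensemble $\{p_h,\rho_h\}$ and need not be a single state. I would handle this without using the probabilistic bound. The supplementary proof of Theorem~\ref{thm::qubit_slocc} presumably establishes the averaged inequality $\mathbb{E}[C(\rho_{H_n})]\leq q^n C(\rho_0)$, from which Markov's inequality yields the stated tail bound. Convexity of concurrence then gives
\begin{equation}
C\Big(\sum_h p_h\rho_h\Big)\leq\sum_h p_h C(\rho_h)\leq q^n .
\end{equation}
This settles the corollary in full generality. The only step needing care is this passage from the random ensemble to its average, and convexity of concurrence handles it.

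For the case where the relevant qubit is not the one stored with the transcript, I would also note that tracing out the classical flags, or ignoring them, is itself LOCC. It therefore cannot increase concurrence, and so it does not change the bound.
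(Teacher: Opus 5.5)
Your argument is correct, and its core step is the derivation the paper gives in the main text. There, the corollary is obtained by setting the probability to one in Theorem~\ref{thm::qubit_slocc}, with each deterministic $\mathcal{F}_i$ viewed as a one-outcome instrument. The proof written out in the supplement takes a more direct route. It is an induction using, once per round, the factorization law $C\bigl((\Lambda\otimes\mathrm{id})[\sigma]\bigr)\le C(\Gamma)\,C(\sigma)$ and LOCC monotonicity of concurrence, with no probabilities and no Markov step. That induction is self-contained. It is simply the deterministic specialization of the recursion $\mathbb{E}[C_i\mid H_{i-1}]\le q\,C_{i-1}$ from the proof of Theorem~\ref{thm::qubit_slocc}. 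Your version is shorter once the theorem is available.

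Your version gains something in the adaptive case where a classical transcript is stored. The supplementary induction, as written, treats each intermediate $\sigma^{AB}$ as a bare two-qubit state, so adaptivity through stored outcomes is not explicitly covered. You instead condition on the history, use $\mathbb{E}[C_n]\le q^nC(\rho_0)$, and apply convexity. Your presumption is right: the paper's proof of Theorem~\ref{thm::qubit_slocc} establishes exactly this expectation bound before invoking Markov's inequality. You do need that intermediate bound rather than the theorem's statement. Integrating the tail bound alone only gives $\mathbb{E}[C_n]\le a\,(1+\ln(1/a))$ with $a=q^nC(\rho_0)$, which is weaker than $q^n$.

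Your closing remark about tracing out flags is unnecessary and slightly ill-posed. A state carrying classical registers is not a two-qubit state and has no concurrence. The flag-discarded state $\sum_h p_h\rho_h$ is already handled by your convexity step.
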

\noindent We also note that this corollary has been proven using methods in~\cite{Konradfactorization}. We provide a proof in the supplemental materials.
Concurrence is related with entanglement of formation by the following inequality:
\begin{equation}
    E_f(\rho)\leq C(\rho).
\end{equation}
Thus, if concurrence converges to 0, so does entanglement of formation. Since a two-qubit state has concurrence one only if it is a pure maximally entangled state~\cite{Woottersentanglementofformation}, the corresponding Choi state has rank one, implying that $\Lambda$ is unitary. Thus, only qubit unitary channels lead to preservation of entanglement in the limit of infinite filter--channel sequence. 

\medskip

This kind of problem was previously studied in two important aspects. First, in Ref.~\cite{LamiEBindices}, authors studied under which conditions an infinite sequence of qubit filter--channel becomes an entanglement breaking channel. We demand stronger property, we demand that entanglement of formation after infinite filter--channel sequence does not converge to 0. In Ref.~\cite{Lamieas},
it was shown that, without intermediate operations, an entanglement of a state does not converge to 0 if and only if the channel is unitary.
Corollary~\ref{thm::qubit} strengthens these qualitative results by
showing that, even when intermediate LOCC operations are allowed, every
nonunitary qubit channel imposes an exponentially decaying upper bound
on the surviving concurrence. Consequently, unitary channels are the
only qubit channels for which such exponential decay can be avoided.

\emph{\textbf{General channels.}} We now consider the more general setting in which subsystem \( A \) has an arbitrary finite dimension \( d_A \geq 2 \), and the channel \( \Lambda \) acts on \( \mathcal{H}_A \). This extension goes beyond the single-qubit case and requires more refined tools, as two-qubit entanglement monotones such as concurrence are no longer directly applicable in higher dimensions. In the following, we say that a quantum channel \( \Lambda \) admits a correctable subspace \( \mathcal{H}_c \subseteq \mathcal{H} \) if there exists a recovery channel \( \mathcal{R} \) such that
\begin{equation}
\mathcal{R}\circ\Lambda[\psi]=\psi
\end{equation}
for all \( \ket{\psi} \in \mathcal{H}_c \) \cite{Blumerecovery}. 

Having established the necessary groundwork, we now prove the following theorem.

\begin{thm} \label{thm:General}
For any quantum channel $\Lambda$, there exists a state $\rho^{AB}$ such that
\begin{equation}
\lim_{n\rightarrow\infty}E_\mathrm f(\Lambda_{n}[\rho^{AB}])>0,\label{eq:Theorem2Inequality}
\end{equation}
if and only if $\Lambda$ contains a correctable subspace.
\end{thm}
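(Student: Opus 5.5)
The "if" direction is a direct construction, and the "only if" direction is proved by contraposition using a contraction argument. For the \emph{if} direction, assume $\Lambda$ admits a correctable subspace $\mathcal{H}_c$ of dimension at least $2$ with recovery channel $\mathcal{R}$. Choose $\rho^{AB}=\ket{\psi}\!\bra{\psi}$ maximally entangled between $\mathcal{H}_c\subseteq\mathcal{H}_A$ and a two-dimensional subspace of $B$. Take every intermediate operation $\mathcal{F}_i$ to be $\mathcal{R}\otimes\openone$, which is a local channel and hence trace-preserving LOCC. Since $\mathcal{R}\circ\Lambda$ acts as the identity on $\mathcal{H}_c$, it also acts as the identity on operators supported on $\mathcal{H}_c$ (by the Knill--Laflamme characterization), so $(\mathcal{R}\circ\Lambda)\otimes\openone$ fixes $\psi$. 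Hence $\Lambda_n[\rho]=(\Lambda\otimes\openone)[\psi]$ for every $n$. Its entanglement of formation is $E_{\mathrm f}(\psi)=\log 2>0$, because $\mathcal{R}\otimes\openone$ can map it back to $\psi$ and LOCC cannot increase $E_{\mathrm f}$. I read "contains a correctable subspace" as a nontrivial one, of dimension at least $2$, since every one-dimensional subspace is trivially correctable.

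For the \emph{only if} direction, suppose $\Lambda$ has no correctable subspace of dimension at least $2$. The aim is to show $\sup_{\sigma}E_{\mathrm f}(\Lambda\otimes\openone\circ\mathcal{F}[\sigma])\le \kappa\,\sup_\sigma E_{\mathrm f}(\sigma)$ with $\kappa<1$, or more generally that one round strictly and uniformly contracts some faithful entanglement quantity. I would use compactness. The function $f(\sigma)=E_{\mathrm f}((\Lambda\otimes\openone)\sigma)/E_{\mathrm f}(\sigma)$ must be bounded away from $1$. The set of LOCC operations need not be closed, but the separable or PPT-preserving channels form a compact superset, and the output of $\Lambda\otimes\openone$ depends only on the reduced input on $A$ together with its purification. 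Because $E_{\mathrm f}$ is a convex roof and each pure state $\ket{\varphi}^{AB}$ has Schmidt rank at most $d_A$, it suffices to bound $E_{\mathrm f}((\Lambda\otimes\openone)\varphi)\le \kappa\,E(\varphi)$ over pure inputs. Step one is therefore a \emph{strict} data-processing inequality: if $E_{\mathrm f}((\Lambda\otimes\openone)\varphi)=E(\varphi)>0$, then $\Lambda$ would be perfectly invertible by LOCC on the support of $\varphi_A$. That support has dimension at least $2$, and invertibility there yields a correctable subspace (via the equality conditions for monotonicity, or Petz recovery). Step two uses compactness of pure states to upgrade "no equality" to a uniform factor $\kappa<1$. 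The difficulty is near product states, where the ratio is $0/0$. There I would instead compare against the geometric distance to separable states, $D(\sigma)=\min_{\tau\in\mathrm{SEP}}\|\sigma-\tau\|_1$, and show that one round shrinks $D$ by a factor $\kappa<1$. This gives exponential convergence to the separable set, as announced in the abstract, and by continuity of $E_{\mathrm f}$ (Fannes-type bounds) it forces $E_{\mathrm f}(\Lambda_n[\rho])\to 0$.

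The main obstacle is step two, namely obtaining a uniform strict contraction. One concern is that intermediate LOCC steps can concentrate entanglement, for example by using memory on $B$ or adaptive classical transcripts. I would handle this with a sequential product argument: define $\kappa=\sup_{\varphi}E_{\mathrm f}((\Lambda\otimes\openone)\varphi)/E(\varphi)$ with the supremum over entangled pure states of arbitrary $B$-dimension. Schmidt decomposition reduces $B$ to dimension $d_A$, so this is a compact problem apart from the $0/0$ boundary. At that boundary I would linearize around product states and use that the derivative ratio is governed by the same correctability condition restricted to a two-dimensional subspace. If this linearization fails, the fallback is to use a monotone such as the negativity-like quantity $\|(\Lambda\otimes\openone)(X)\|$ on the operator space spanned by off-diagonal blocks. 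In that setting, the absence of correctable subspaces means that the induced map on coherences has norm strictly less than $1$.
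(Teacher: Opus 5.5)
Your ``if'' direction is fine up to one slip, and your step one is the right key lemma. The genuine gap is step two of your ``only if'' direction. You commit to a \emph{uniform multiplicative} contraction: either $E_{\mathrm f}((\Lambda\otimes\openone)\varphi)\le\kappa E(\varphi)$ with $\kappa<1$ for all pure $\varphi$, or a per-round contraction of the trace distance $D$ to $\mathrm{SEP}$. You correctly note that compactness breaks down at the product-state boundary, but none of the proposed repairs is carried out, and none is routine. Near product states the ratio is decided by a convex-roof optimization over decompositions that depend on $\varphi$. For qubit dephasing, the fixed Kraus decomposition $\{\sqrt{1-p}\,\openone,\sqrt{p}\,Z\}$ gives ratio exactly $1$, while the true ratio tends to $(1-2p)^2$ as $\varphi$ approaches a product state. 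So a ``linearization governed by a two-dimensional correctability condition'' is not evident. The per-round contraction of $D$ is only asserted; the paper's exponential bound in Theorem~\ref{thm:exponential} comes from a multi-step $G_k$-concurrence argument, not a per-round factor. The negativity/coherence-norm fallback also fails. Even granting a norm bound below $1$ on coherences, negativity-type quantities are not faithful. Their decay only drives the state toward the PPT set, which contains entangled states once $d_Ad_B>6$, so it cannot by itself give $E_{\mathrm f}\to0$. As written, the ``only if'' direction is not proved.

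The missing idea is that no uniform contraction is needed, because the hypothesis supplies a floor. $E_{\mathrm f}(\Lambda_n[\rho])$ is non-increasing, so if its limit is $c>0$, every state in the protocol has $E_{\mathrm f}\ge c$ and stays away from $\mathrm{SEP}$. The paper uses this in Lemma~\ref{lem:StableState} to obtain an entangled state with exact equality $E_{\mathrm f}((\Lambda\otimes\openone)\sigma)=E_{\mathrm f}(\sigma)$. It then reduces to an entangled pure state with equality via an optimal decomposition and convexity, and runs your step one. In your own language the argument goes as follows. The gap $g(\psi)=E(\psi)-E_{\mathrm f}((\Lambda\otimes\openone)\psi)$ is continuous. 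By step one it is strictly positive on the compact set of pure states of $\mathbb{C}^{d_A}\otimes\mathbb{C}^{d_A}$ with $E(\psi)\ge c/2$; Schmidt decomposition makes a larger $B$ irrelevant. Hence $g\ge\delta>0$ there. An optimal decomposition of any $\sigma$ with $E_{\mathrm f}(\sigma)\ge c$ puts weight at least $c/(2\log d_A)$ on such states. So every round lowers $E_{\mathrm f}$ by at least $\delta c/(2\log d_A)$, contradicting convergence to $c$.

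Step one itself still needs a real proof rather than a pointer to Petz recovery, since the equality you have is for a convex roof, not a relative entropy. The paper applies strict concavity of entropy to every Kraus decomposition, which gives $\phi_i^B=\psi^B$ for all branches. Hence $\mu^{RB}=\mu^R\otimes\psi^B$ for the Stinespring environment, and the recovery follows by comparing purifications. Finally, in your ``if'' direction take $\mathcal F_1=\mathrm{id}$ rather than $\mathcal R\otimes\openone$, since $\mathcal R$ need not fix $\psi$.
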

\noindent We refer to the supplemental material for the proof.

A related problem was studied in Ref.~\cite{Lamieas}, where the authors consider asymptotically entanglement-saving (AES) channels, that preserve a nonzero amount of entanglement after inifinitely many repeated applications of the channel without intermediate operations. Our setting is more general, since we allow intermediate LOCC recovery operations.
To see that this leads to a strictly larger class of channels, consider a channel $\Lambda_{\mathrm{ex}}$ acting on a four-dimensional system, with Kraus operators
\begin{equation}
    K_1=\ket{2}\bra{0}+\ket{3}\bra{1},
    \qquad
    K_2=\ket{0}\bra{2},
    \qquad
    K_3=\ket{0}\bra{3}.
\end{equation}
The subspace
$\mathcal{C}=\operatorname{span}\{\ket{0},\ket{1}\}$
is correctable, since it is mapped isometrically onto
$\mathcal{V}=\operatorname{span}\{\ket{2},\ket{3}\}$.
However,
\begin{equation}
\Lambda_{\mathrm{ex}}^{2}(\rho)
=
\operatorname{Tr}(P_{\mathcal{C}}\rho)\ket{0}\bra{0}
+
\operatorname{Tr}(P_{\mathcal{V}}\rho)\ket{2}\bra{2},
\end{equation}
which is a measure-and-prepare channel and is therefore entanglement
breaking~\cite{HorodeckiEB}. Hence, $\Lambda_{\mathrm{ex}}$ is not AES, although
entanglement can be preserved indefinitely by applying a recovery
operation between successive channel uses. Therefore, the class
characterized in Theorem~\ref{thm:General} strictly contains the AES class.

Theorem~\ref{thm:General} establishes a strict qualitative dichotomy for asymptotic entanglement based entirely on the existence of a correctable subspace. A natural question that follows is how the system behaves when such a subspace is absent. In this regime, the entanglement must vanish, but understanding the rate at which it disappears is crucial for characterizing the channel's noise properties. The following theorem provides a quantitative bound, showing that without a correctable subspace, the output state converges exponentially fast to the set of separable states, denoted by $\mathcal{S}$. In the following, $\Vert X\Vert_1= \mathrm{Tr}\sqrt{X^\dagger X}$ denotes the trace norm of a matrix $X$.
\begin{thm}\label{thm:exponential}
For any quantum channel $\Lambda$ acting on a subsystem $A$ of dimension $d_A$ that possesses no correctable subspace, there exists a constant $\kappa \in (0,1)$ such that for any initial bipartite state $\rho^{AB}$, the trace distance to the set of separable states $\mathcal{S}$ satisfies
\begin{equation}
\min_{\sigma^{AB}\in\mathcal{S}}\Vert\Lambda_{n}[\rho^{AB}]-\sigma^{AB}\Vert_{1}\leq 4\kappa^{n/[2(d_{A}-1)]}(\sqrt{d_{A}}-1).
\end{equation}
\end{thm}
\noindent We refer to the supplemental material for the proof.

This exponential decay has severe implications for long-distance quantum communication and entanglement distribution. If we restrict each connection to a single use of a noisy channel lacking a correctable subspace, the state inevitably becomes separable over large distances. To combat this exponential degradation, we will extend the scenario to parallel uses of channels between the links. By utilizing multiple channels simultaneously, we can attempt to offset the noise accumulation. The next theorem illustrates the physical resource cost of this approach, quantifying exactly how many parallel channel uses are required to successfully establish entanglement across a ``network'' of a given length.

First, we assume that Alice and Bob are connected via linear ``network'' consisting of $n$ sequential links. Second, the connection between any two adjacent nodes is modeled by $m$ independent identical channels. Third, at each node we can apply arbitrary LOCC filter, which aims to improve the entanglement between the parties. Specifically, we can apply decoding, error correction and encoding operations. Consequently, direct one-way idealized quantum repeaters of the 3rd generation (in classification from \cite{Muralidharannetwork}) are included as a special case of our model. We assume that all LOCC operations at the repeater stations are noiseless. Since noisy local operations cannot improve the performance of the protocol, a lower bound on the number of channel uses per link obtained in this idealized setting remains valid for realistic noisy repeaters.

Finally, the operational goal is to distribute a bipartite entangled state across the entire ``network'' such that the endpoint state retains a strictly positive amount of entanglement. Under these conditions, Theorem~\ref{th::scaling} provides a lower bound on the resources required to counter the
exponentially fast decay of entanglement:
maintaining a nonvanishing amount of end-to-end entanglement over
$n$ links requires $m=\Omega(\log n)$ channel uses per elementary
link. In the following theorem, by establishing entanglement between $A$ and $B$, we mean that:
\begin{equation}
    \liminf_{n\rightarrow\infty}
    E_f\!\left(\rho_{n,m(n)}^{AB}\right)
    >0,
\end{equation}
where $\rho_{n,m(n)}^{AB}$ is a state after $n$ segments of a network.

\begin{thm}{\label{th::scaling}}
    Assume that we have a quantum ``network'' consisting of $n$ segments. Assume that we send a bipartite quantum state $\rho_{0}^{A_1\cdots A_m|B}=\rho_0^{A|B}$ through this ``network''. Each segment consists of LOCC filter which can perform arbitrary $A\times B\rightarrow A\times B$ LOCC operation in bipartition $A|B.$ After LOCC filter we have $m$ parallel uses of a channel $\Lambda.$ If 
    \begin{enumerate}
        \item $\Lambda$ can be decomposed as
    \begin{equation}
        \Lambda=p\mathcal{E}+(1-p)\mathcal{M},
        \label{eq::th5decom}
    \end{equation}
    where $\mathcal{E}$ is some EB channel and $\mathcal{M}$ is an arbitrary quantum channel, we can establish entanglement at the output of a ``network'' only if $m$ scales as:
    \begin{equation}
        m=\Omega (\ln n).
    \end{equation}
    \item $\Lambda$ is a qudit amplitude damping channel with $\eta_j>0$, we can establish entanglement at the output of the network only if  $m$ scales as:
    \begin{equation}
        m=\Omega (\ln n).
    \end{equation}
    The Kraus operators of a qudit amplitude damping channel are:
    \begin{equation}
    \begin{split}
        &K_0=\sum_{j=0}^{d-1} \sqrt{q_j}\ket{j}\bra{j}\\
        &K_{i\leftarrow j}=\sqrt{\gamma_{i\leftarrow j}}\ket{i}\bra{j},
    \end{split}
\end{equation}
where $\eta_j =\sum_{i<j}\gamma_{i\leftarrow  j}$ is total decay probability of the $j$-th level and $q_j=1-\eta_j$ is no decay probability. If for every $j>0$ $\eta_j>0$ the channel does not have a correctable subspace. 
    \end{enumerate}
\end{thm}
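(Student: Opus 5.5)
The proof rests on a single mechanism: with probability bounded away from zero, every one of the $m$ parallel uses in a link acts as an entanglement-breaking channel. If so, the whole link breaks entanglement in that round, and across $n$ rounds this happens at least once unless $m$ grows like $\ln n$.

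\textbf{Case 1.} Expand each link by multilinearity using Eq.~\eqref{eq::th5decom}:
\begin{equation}
\Lambda^{\otimes m}=p^{m}\mathcal{E}^{\otimes m}+(1-p^{m})\mathcal{N}_m ,
\end{equation}
where $\mathcal{N}_m$ is a channel collecting all the other terms, normalized. The tensor product $\mathcal{E}^{\otimes m}$ of entanglement-breaking channels is itself entanglement breaking, since each factor has a measure-and-prepare form. Now insert this decomposition into every round of the protocol in Eq.~\eqref{eq:LambdaN} and expand. Every branch in which $\mathcal{E}^{\otimes m}$ occurs in at least one round outputs a separable state in the cut $A|B$. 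Once the state is separable it stays separable, because the later LOCC operations and the local channels $\Lambda^{\otimes m}\otimes\openone$ preserve separability. Only the branch with no EB round can be entangled, and its weight is $(1-p^m)^n$. Hence
\begin{equation}
\rho_{n,m}=(1-p^m)^n\tau+\bigl[1-(1-p^m)^n\bigr]\sigma,\qquad \sigma\in\mathcal S .
\end{equation}
By convexity of $E_f$ and the bound $E_f\le\log d$, where $d$ is the dimension of Bob's side (or of the final output, which is fixed), we get
\begin{equation}
E_f(\rho_{n,m})\le (1-p^m)^n\log d\le \exp(-np^m)\log d .
\end{equation}
Keeping $\liminf_n E_f>0$ requires $np^{m(n)}$ to stay bounded, i.e. $m\ge (\ln n - O(1))/\ln(1/p)$, which is $m=\Omega(\ln n)$. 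I would add a remark that the dimension factor is harmless: after the filter the state can always be compressed to Bob's fixed system, or one can use the entanglement dimension of the end nodes.

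\textbf{Case 2.} The approach is to reduce to Case~1 by exhibiting a decomposition of the form \eqref{eq::th5decom} for the qudit amplitude damping channel. Take $\mathcal{E}(\rho)=\mathrm{Tr}(\rho)\ket{0}\bra{0}$, full decay to the ground state, which is entanglement breaking. I want $\Lambda-p\mathcal{E}$ to be completely positive. Composing $d-1$ uses of the channel, or using a single use, the probability that every level ends in $\ket{0}$ is positive whenever all $\eta_j>0$. It is cleaner to work with the $(d-1)$-fold composition $\Lambda^{d-1}$: every level $j>0$ can reach $\ket 0$ along a chain of decays in at most $d-1$ steps, each with positive probability. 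I would then show that the Choi matrix of $\Lambda^{d-1}$ dominates $p\,\ket0\bra0\otimes\openone/d$ for some $p>0$. This is the main obstacle, because Kraus cross terms and coherences from $K_0$ could spoil positivity. I would handle it by writing $\Lambda^{d-1}$ as a sum of CP maps indexed by decay paths. The paths ending in $\ket0$ from basis state $\ket{j}$ give terms $c_j\ket0\bra j\cdot\ket j\bra0$, and the CP map $\rho\mapsto\sum_j c_j\bra j\rho\ket j\ket0\bra0$ is measure-and-prepare, hence EB. The rest is CP. This gives $\Lambda^{d-1}=p\mathcal E'+(1-p)\mathcal M$ with $\mathcal E'$ EB, after normalizing by $p=\min_j c_j$ and moving the excess into $\mathcal M$.

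Grouping the $n$ links into $\lfloor n/(d-1)\rfloor$ blocks of $d-1$ consecutive links does not directly compose channels, since filters sit in between. So I would instead apply the decomposition per single use with $\mathcal E$ replaced by the measure-and-prepare map $\rho\mapsto\sum_j\gamma_{0\leftarrow j}\bra j\rho\ket j\ket0\bra0$. This requires $\gamma_{0\leftarrow j}>0$. If only indirect decay is available, I fall back to showing that the quantity ``weight of the branches that are EB within the block'' is still at least $p^{m(d-1)}$ for every block regardless of the filters. The argument is that the decay branch of each use is a measure-and-prepare map to lower levels, and $d-1$ such maps in sequence, with arbitrary LOCC in between acting on other systems, end separable. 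The Case~1 estimate with $n\to n/(d-1)$ then again yields $m=\Omega(\ln n)$.
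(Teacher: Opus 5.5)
Your Case~1 argument is correct, and it takes a different, more elementary route than the paper. The paper bounds the $G_k$-contraction parameter of $\Lambda^{\otimes m}$ by $\kappa\le 1-p^m$, using the fact that the all-$\mathcal E$ Kraus branches are rank one. It inserts this into Theorem~\ref{thm:exponential} with $k_m\le d^m$, applies the continuity bound of Lemma~\ref{le::epsilondistance}, and ends with $m\ge(\ln n-\ln\ln n)/\ln(d/p)-O(1)$. Your expansion over all $n$ rounds shows directly that $\rho_{n,m}$ puts weight at least $1-(1-p^m)^n$ on $\mathcal S$. It needs neither $G_k$ nor Theorem~\ref{thm:exponential}, and it gives the better denominator $\ln(1/p)$. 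One fix: Bob's dimension is not assumed fixed, so bound $E_f(\tau)\le\log_2 d^m=m\log_2 d$ using Alice's side. The resulting condition $np^m\le\ln(m\log_2 d/E_0)$ changes $\ln n$ only by a lower-order term, so your conclusion stands.

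Case~2 has a genuine gap: the reduction to Case~1 is impossible, which is exactly why the paper treats amplitude damping separately.
\begin{itemize}
\item If $\Lambda_{AD}-p\,\mathcal E$ is CP, the Kraus operators of $\mathcal E$ lie in $\operatorname{span}\{K_0,K_{i\leftarrow j}\}$. They therefore have the form $aK_0+N$ with $N$ strictly upper triangular.
\item Unless $\Lambda_{AD}$ is itself EB, some $q_j>0$ with $j\ge1$. Then for $a\neq0$ the operator $aK_0+N$ is upper triangular with at least two nonzero diagonal entries, so its rank is at least $2$.
\item Hence every rank-one Kraus operator of an EB $\mathcal E$ has $a=0$ and annihilates $\ket{0}$. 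This contradicts trace preservation for $p>0$.
\item The same holds for $\Lambda_{AD}^{d-1}$, whose Kraus operators are $K_0^{d-1}$ plus strictly upper-triangular products. In particular $\ket{0}_{\mathrm{out}}\ket{0}_{\mathrm{in}}$ is not in the support of its Choi matrix, so the domination by $p\ket{0}\bra{0}\otimes\openone/d$ you aimed for is false. The $K_0$ coherences you flagged are fatal.
\end{itemize}

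Your fallback does not repair this. It uses the sub-channel $\rho\mapsto\sum_j\gamma_{0\leftarrow j}\bra{j}\rho\ket{j}\,\ket{0}\bra{0}$, whose weight depends on the input and vanishes on $\ket{0}$. The filters act on $A$ itself, not on ``other systems'', and can apply arbitrary local unitaries. So nothing enforces a uniform weight $p^{m(d-1)}$ per block. For example, take $m\ge2$ and Alice's support in $\operatorname{span}\{\ket{0\cdots0},\ket{10\cdots0}\}$. The branch in which all $m$ uses decay then has weight zero. Entanglement is degraded instead by branches such as $K_{0\leftarrow1}\otimes K_0^{\otimes(m-1)}$, which are only rank-deficient on the state's support.

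What is missing is a contraction bound that is uniform over the Schmidt support and captures such branches. The paper gets it in three steps:
\begin{itemize}
\item it writes $\kappa\le\sum_\beta[\det_{\mathcal P}(\mathcal P E_\beta\mathcal P)]^{1/k}$;
\item it applies the Cartwright--Field inequality to the compressions $\mathcal P E_\beta\mathcal P$;
\item it uses $\sigma_{\min}(T^{\otimes m})^2=\mu^m$, where $\mu>0$ precisely because all $\eta_j>0$.
\end{itemize}
This gives $\kappa\le1-\frac{k-1}{2}(\mu/d)^m$, which is then fed into Theorem~\ref{thm:exponential} and Lemma~\ref{le::epsilondistance}.
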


The complete proof can be found in the supplemental material. We provide a brief proof outline here. We first
bound the contraction parameter $\kappa$ for
$\Lambda^{\otimes m}$. For channels admitting the decomposition
\eqref{eq::th5decom}, we exploit the existence of a rank-one Kraus
representation of the entanglement-breaking component. For the
amplitude-damping channel, we instead use a suitable explicit Kraus
representation. We then apply Theorem~\ref{thm:exponential} and a
uniform continuity bound for the entanglement of formation to relate
the distance from the separable set to the number of physical channels per link $m$ and the
number of links $n$. Solving the resulting inequality yields the
logarithmic lower bound.

A related logarithmic lower bound was obtained for fault-tolerant
qubit circuits under local i.i.d.\ noise
\cite{FawziSpaceOverhead}. They showed that preserving an arbitrary
unknown logical qubit over $n$ sequential noisy stages requires
$m=\Omega(\log n)$ noisy physical qubits per stage. Reliable qubit
transmission entails the preservation of entanglement with a reference
system and is therefore closely related to the task considered here.
This result differs from those presented in this paper in both operational criteria and dimension of the channels involved. Ref.~\cite{FawziSpaceOverhead} requires the effective channel to approximate the identity on every input qubit state, whereas we only require that some input state retain a nonvanishing amount of  entanglement at the very end. Also, their result is restricted to single-qubit noise, while Theorem~\ref{th::scaling} applies, under its stated assumptions, to channels of arbitrary fixed finite dimension.

Every full-Kraus-rank $d$-dimensional channel, i.e., a channel of Choi rank $d^2$, can be written in the form \eqref{eq::th5decom}~\cite{Georgeebnoneb}. However,
not every channel without a correctable subspace admits such a
decomposition. An important example is the amplitude-damping channel,
which is covered separately by Theorem~\ref{th::scaling}. The theorem
therefore applies to broad classes of physically relevant
finite-dimensional channels.

Finally, we show that the logarithmic lower bound is tight for
depolarizing noise below the correctable-error threshold of the
chosen code family. For a family of quantum codes with linear minimum
distance, the failure probability per elementary link decreases
exponentially with the number of physical channels and takes the form
$e^{-\alpha m}$. Hence, over $n$ links, the total
failure probability is bounded by $n e^{-\alpha m}$, so that
$m=\mathcal{O}(\log n)$ is sufficient. Together with
Theorem~\ref{th::scaling}, this establishes the optimal scaling
$m=\Theta(\log n)$ within our model. The
full derivation of our achievability result is provided in the
supplemental material. A related measurement-based
qLDPC repeater protocol was recently developed by Shi \textit{et al.}
\cite{Shirepeater}, who demonstrated constant-yield Bell-state
distribution over arbitrary distances and exponential suppression of
the logical error with increasing code size. Their protocol employs
entanglement distillation and swapping and uses a different resource
accounting; therefore, it does not establish the same minimal number of physical links
statement as Theorem~\ref{th::scaling}, but provides a complementary
qLDPC-based approach to long-distance entanglement distribution.

\medskip

\emph{\textbf{Discussion.}}
Our results form a sequence of structural, quantitative, and
operational limitations on long-distance entanglement distribution.
Theorem~\ref{thm:General} first identifies the exact structural condition under which
intermediate deterministic LOCC can preserve some entanglement
indefinitely: the channel must possess a nontrivial correctable
subspace. Theorem~\ref{thm:exponential} provides a quantitative
statement regarding entanglement preservation in absence of a correctable subspace: the output approaches
the set of separable states exponentially fast in the number of
channel segments, uniformly over the initial state and regardless of the adaptive
intermediate LOCC operations. By applying theorem~\ref{thm:exponential}
to tensor powers of the channel, theorem~\ref{th::scaling} provides a physical resource constraint to counter the exponential
decay. For channels with a
nonzero entanglement-breaking component---including all
full-Kraus-rank channels---and for qudit amplitude-damping channels, theorem~\ref{th::scaling} proves that maintaining
\begin{equation}
    \liminf_{n\rightarrow\infty}
    E_{\mathrm f}\!\left(\rho_{n,m(n)}^{AB}\right)>0
\end{equation}
requires
\begin{equation}
    m=\Omega(\log n).
\end{equation}
Thus, the logarithmic bound is not an isolated network estimate, but
an operational consequence of the underlying entanglement-contraction
theorem.

A logarithmic lower bound with an analogous scaling was previously
obtained for preserving a single unknown logical qubit in
fault-tolerant circuits subject to local i.i.d. qubit noise
\cite{FawziSpaceOverhead}. Theorem~\ref{th::scaling} instead provides an analogous lower bound directly in terms of nonvanishing endpoint entanglement and extends it
to the specified classes of channels of arbitrary fixed finite
dimension. For depolarizing noise in the corresponding correctable
regime, quantum-code families with linear minimum distance achieve
$m=O(\log n)$. Together with theorem~\ref{th::scaling}, this establishes the tight
scaling
\begin{equation}
    m=\Theta(\log n)
\end{equation}
within the considered model. The converse is independent of the
particular encoding, decoding, recovery operation, or quantum code
employed at the intermediate stations.

The operational meaning of $m$ should be distinguished from the
asymptotic rates studied in general quantum-network capacity
frameworks. Here, $m$ is the number of physical channels allocated to every elementary
link. It need not only represent $m$ distinct transmission lines operating
simultaneously, it could represent the same physical channel being used sequentially $m$ times,
provided the transmitted systems are stored until all relevant physical systems are received. Such serialization does not improve the resource requirement, but transfers the bulk of the cost to long-lived quantum
memory.

The scope of the result is the homogeneous sequential architecture
defined above, with noiseless intermediate processing. It captures
the idealized logical layer of a broad family of direct one-way
error-correction-based repeaters, as discussed in the introduction,
but is not intended to describe all repeater architectures. In
particular, protocols based on separately generated elementary-link
entanglement, two-way purification, and entanglement swapping employ
a different resource model. This includes recent measurement-based
distillation architectures based on locally prepared resource states
and elementary-link Bell pairs \cite{Shirepeater}. Likewise,
genuinely bosonic encoding schemes \cite{Swainrepeater} are not directly
covered by our finite-dimensional theorems without an effective
finite-dimensional reduction.

\medskip

\emph{\textbf{Acknowledgments.}} This work was supported by the National Science Centre Poland (Grant No. 2022/46/E/ST2/00115 and 2024/55/B/ST2/01590) and within the QuantERA II Programme (Grant No. 2021/03/Y/ST2/00178, acronym ExTRaQT) that has received funding from the European Union's Horizon 2020 research and innovation programme under Grant Agreement No. 101017733.

\bibliography{literature}

\newpage

\section*{Supplemental material}

\subsection*{Concurrence}

For a two-qubit state \(\rho\) the concurrence is defined as~\cite{Hillentanglementofformation}
\begin{equation}
C(\rho) = \max \{ \xi_1 - \xi_2 - \xi_3 - \xi_4, 0 \},
\end{equation}
where \(\xi_i\) are the square roots of the eigenvalues of \(\rho \tilde{\rho}\), arranged in decreasing order, and
\begin{equation}
\tilde{\rho} = (\sigma_y \otimes \sigma_y)\rho^{*}(\sigma_y \otimes \sigma_y),
\end{equation}
where \(\sigma_y\) is the Pauli \(y\)-matrix and \(\rho^{*}\) denotes complex conjugation in the computational basis. For two-qubit states, the concurrence is directly related to the entanglement of formation via
\begin{equation}
E_{\mathrm{f}}(\rho)=h\left(\frac{1}{2}+\frac{1}{2}\sqrt{1-C(\rho)^{2}}\right),
\end{equation}
where \(h(x)=-x\log_{2}x-(1-x)\log_{2}(1-x)\) is the binary entropy function.

\subsection*{Proof of Theorem~\ref{thm::qubit_slocc}}
    In the proof, we employ the concurrence factorization law~\cite{Konradfactorization}, which states that for any completely positive trace-preserving map $\Lambda$ one has
\begin{equation}
C\bigl((\Lambda \otimes \openone)[\rho^{AB}]\bigr)
\leq
C\bigl((\Lambda \otimes \openone)[\phi^{+}]\bigr)\, C(\rho^{AB})
=
C(\Gamma)\, C(\rho^{AB}).
\label{eq::concurrencerelation}
\end{equation}

Our protocol consists of $n$ rounds. In each round, an SLOCC
instrument is applied first, followed by the channel $\Lambda$.
The SLOCC instrument used in the $i$-th round may depend on the complete
history of all outcomes obtained in the previous rounds.
Let $h_{i-1}=(x_1,\ldots,x_{i-1})$
denote a particular history before the $i$-th round. At the beginning
of this round, the parties share the normalized state
$\rho_{h_{i-1}}$.
Let $\left\{
    \mathcal{F}^{h_{i-1}}_{i,x_i}
    \right\}_{x_i}$
be the SLOCC instrument chosen after observing the history
$h_{i-1}$. The outcome $x_i$ occurs with conditional probability
\begin{equation}
    p(x_i\mid h_{i-1})
    =
    \operatorname{Tr}\left[
        \mathcal{F}^{h_{i-1}}_{i,x_i}
        \left(\rho_{h_{i-1}}\right)
    \right],
\end{equation}
and the corresponding normalized post-filter state is
\begin{equation}
    \sigma_{h_i}
    =
    \frac{
        \mathcal{F}^{h_{i-1}}_{i,x_i}
        \left(\rho_{h_{i-1}}\right)
    }{
        p(x_i\mid h_{i-1})
    },
    \qquad
    h_i=(h_{i-1},x_i).
\end{equation}

Since concurrence is an entanglement monotone, it cannot increase on
average under SLOCC. Therefore,
\begin{equation}
    \sum_{x_i}
    p\,(x_i|\,h_{i-1})\,
    C\left(\sigma_{h_i}\right)
    \leq
    C\left(\rho_{h_{i-1}}\right).
    \label{eq::filter}
\end{equation}

After the filter, the channel $\Lambda$ acts on one subsystem. The
state at the end of the $i$-th round is therefore
\begin{equation}
    \rho_{h_i}
    =
    \left(\Lambda\otimes\openone \right)
    \left(\sigma_{h_i}\right).
\end{equation}

We define
\begin{equation}
    q
    =
    C\left[
        \left(\Lambda\otimes\openone\right)
        \left(\phi^+\right)
    \right].
\end{equation}
The factorization law for concurrence implies, for every outcome
$x_i$,
\begin{equation}
    C\left(\rho_{h_i}\right)
    =
    C\left[
        \left(\Lambda\otimes\openone\right)
        \left(\sigma_{h_i}\right)
    \right]
    \leq
    q\,C\left(\sigma_{h_i}\right).
    \label{eq::channel}
\end{equation}

Multiplying Eq.~\eqref{eq::channel} by
$p(x_i\mid h_{i-1})$ and summing over all possible outcomes gives
\begin{equation}
    \sum_{x_i}
    p(x_i\mid h_{i-1})
    C\left(\rho_{h_i}\right)
    \leq
    q
    \sum_{x_i}
    p(x_i\mid h_{i-1})
    C\left(\sigma_{h_i}\right).
\end{equation}
Using Eq.~\eqref{eq::filter}, we obtain
\begin{equation}
    \sum_{x_i}
    p(x_i\mid h_{i-1})
    C\left(\rho_{h_i}\right)
    \leq
    q\,C\left(\rho_{h_{i-1}}\right).
    \label{eq::one_round_adaptive}
\end{equation}
Let $H_i=(X_1,\ldots,X_i)$ denote the random history after the
$i$-th round, and define the random variable
\begin{equation}
    C_i=C\left(\rho_{H_i}\right).
\end{equation}
For every fixed history $h_{i-1}$, the left-hand side of
Eq.~\eqref{eq::one_round_adaptive} is precisely the conditional
expectation
\begin{equation}
    \mathbb{E}\left[
        C_i\mid H_{i-1}=h_{i-1}
    \right]
    =
    \sum_{x_i}
    p(x_i\mid h_{i-1})
    C\left(\rho_{h_i}\right).
\end{equation}
Consequently,
\begin{equation}
    \mathbb{E}\left[
        C_i\mid H_{i-1}=h_{i-1}
    \right]
    \leq
    q\,C\left(\rho_{h_{i-1}}\right)
\end{equation}
for every possible history $h_{i-1}$. Equivalently,
\begin{equation}
    \mathbb{E}\left[
        C_i\mid H_{i-1}
    \right]
    \leq
    q C_{i-1}.
    \label{eq::conditional_concurrence}
\end{equation}

Taking the expectation of both sides with respect to the random
history $H_{i-1}$ and using the law of total expectations,
we find
\begin{align}
    \mathbb{E}[C_i]
    &=
    \mathbb{E}\left[
        \mathbb{E}\left[C_i\mid H_{i-1}\right]
    \right]
    \nonumber\\
    &\leq
    q\,\mathbb{E}[C_{i-1}].
\end{align}
Iterating this inequality over $n$ rounds yields
\begin{equation}
   \mathbb{E}[C_n]
    \leq
    q^n \mathbb{E}[C_0]=q^{n}C(\rho_0), 
\end{equation}
where in the last step we used a fact that we start with a fixed state so there is only one possibility, and therefore expectation value is equal to the concurrence value itself.
\begin{equation}
    \mathbb{E}[C_n]
    \leq
    q^n C(\rho_0).
    \label{eq::average_concurrence_bound}
\end{equation}

Finally, since $C_n$ is a non-negative random variable, Markov's
inequality gives, for every $c>0$,
\begin{align}
    \Pr\left[C_n\geq c\right]
    &\leq
    \frac{\mathbb{E}[C_n]}{c}
    \nonumber\\
    &\leq
    \frac{q^n C(\rho_0)}{c}
    \nonumber\\
    &\leq
    \frac{q^n}{c}.
\end{align}

\subsection*{Proof of Corollary~\ref{thm::qubit}}
In this section we present a proof of Corollary \ref{thm::qubit} using factorization law from \cite{Konradfactorization}.
We prove the claim by induction. Specifically, we establish that
\begin{equation}
C\bigl((\openone \otimes \Lambda_n)[\rho^{AB}]\bigr) \leq C(\Gamma)^n \, C(\rho^{AB}).
\label{eq::inductionqubits}
\end{equation}
To do so, we employ the factorization law for concurrence, which extends to mixed two-qubit states in the form~\cite{Konradfactorization}
\begin{equation}
C\bigl((\Lambda \otimes \openone)[\rho^{AB}]\bigr)
\leq
C\bigl((\Lambda \otimes \openone)[\phi^{+}]\bigr)\, C(\rho^{AB})
=
C(\Gamma)\, C(\rho^{AB}).
\end{equation}

For the base case \(n = 1\), we obtain
\begin{equation}
C\bigl((\Lambda \otimes \openone)\circ F_1[\rho^{AB}]\bigr)
\leq
C(\Gamma)\, C\bigl(F_1[\rho^{AB}]\bigr)
\leq
C(\Gamma)\, C(\rho^{AB}),
\end{equation}
where \(F_1\) is an LOCC filter. In the second inequality, we used the fact that LOCC operations cannot increase concurrence. We have thus established that Eq.~\eqref{eq::inductionqubits} holds for \(n = 1\). 

For the induction step, note that
\begin{equation}
\Lambda_{n+1}[\rho^{AB}]
=
(\Lambda \otimes \openone)\circ F_{n+1}\circ \Lambda_n[\rho^{AB}]
=
(\Lambda \otimes \openone)\circ F_{n+1}[\sigma^{AB}],
\end{equation}
where we defined the state $\sigma^{AB} = \Lambda_n[\rho^{AB}]$. Now assume that Eq.~\eqref{eq::inductionqubits} holds for some \(n\). For \(n+1\), we have
\begin{align}
C\bigl(\Lambda_{n+1}[\rho^{AB}]\bigr)
&=
C\bigl((\Lambda \otimes \openone)\circ F_{n+1}[\sigma^{AB}]\bigr) \\
&\leq
C(\Gamma)\, C\bigl(F_{n+1}[\sigma^{AB}]\bigr) \nonumber \\
&\leq
C(\Gamma)\, C(\sigma^{AB}). \nonumber
\end{align}
By the induction hypothesis,
\begin{equation}
C(\sigma^{AB}) \leq C(\Gamma)^n \, C(\rho^{AB}).
\end{equation}
Combining the above inequalities, we obtain
\begin{equation}
C\bigl(\Lambda_{n+1}[\rho^{AB}]\bigr)
\leq
C(\Gamma)^{n+1} \, C(\rho^{AB}),
\end{equation}
which completes the induction.

\subsection*{Proof of Theorem~\ref{thm:General}}
\begin{proof}
It is clear to see that if $\Lambda$ contains a correctable subspace then Eq.~(\ref{eq:Theorem2Inequality}) holds. We will now deal with the reverse implication.

If Eq.~(\ref{eq:Theorem2Inequality}) holds, then there must exist an entangled state \( \rho^{AB} \) such that 
\begin{equation} \label{eq:ConstantEf}
E_{\mathrm{f}}(\Lambda\otimes\openone[\rho^{AB}])=E_{\mathrm{f}}(\rho^{AB}).
\end{equation}
See Lemma~\ref{lem:StableState} in the supplemental material for proof. We now show that this condition implies the existence of a pure entangled state
\( \ket{\psi^{AB}} \) satisfying
\begin{equation}
E_{\mathrm{f}}(\Lambda\otimes\openone[\psi^{AB}])=E_{\mathrm{f}}(\psi^{AB}).\label{eq:Pure}
\end{equation}

For this, let $\{p_{i},\psi_{i}^{AB}\}$ be an optimal pure state
decomposition of $\rho^{AB}$, i.e.,
\begin{equation}
E_{\mathrm{f}}(\rho^{AB})=\sum_{i}p_{i}E_{\mathrm{f}}(\psi_{i}^{AB}).
\end{equation}
Since $E_{\mathrm{f}}$ does not increase under local operations,
for each of the states $\ket{\psi_{i}^{AB}}$ we have 
\begin{equation}
E_{\mathrm{f}}(\Lambda\otimes\openone[\psi_{i}^{AB}])\leq E_{\mathrm{f}}(\psi_{i}^{AB}).
\end{equation}
By contradiction, assume that for all entangled states $\ket{\psi_{j}^{AB}}$
this inequality is strict, i.e., 
\begin{equation}
E_{\mathrm{f}}(\Lambda\otimes\openone[\psi_{j}^{AB}])<E_{\mathrm{f}}(\psi_{j}^{AB}).
\end{equation}
Using these assumptions and convexity~\cite{Vidalentanglementmonotones} of $E_{\mathrm{f}}$ we obtain
\begin{align}
E_{\mathrm{f}}(\Lambda\otimes\openone[\rho^{AB}]) & =E_{\mathrm{f}}\left(\sum_{i}p_{i}\Lambda\otimes\openone[\psi_{i}^{AB}]\right)\\
 & \leq\sum_{i}p_{i}E_{\mathrm{f}}(\Lambda\otimes\openone[\psi_{i}^{AB}])\nonumber \\
 & <\sum_{i}p_{i}E_{\mathrm{f}}(\psi_{i}^{AB})=E_{\mathrm{f}}(\rho^{AB}).\nonumber 
\end{align}
This contradicts Eq.~(\ref{eq:ConstantEf}), which implies that Eq.~(\ref{eq:Pure})
is true for some entangled state $\ket{\psi}^{AB}$.

Let now $\ket{\psi}^{AB}$ be an entangled state such that Eq.~(\ref{eq:Pure}) holds.
In the following, let $\{K_{i}\}$ be a set of Kraus operators corresponding
to the quantum channel $\Lambda$. We further define the probabilities
$q_{i}$ and pure quantum states $\phi_{i}^{AB}$ as follows: 
\begin{align}
q_{i} & =\mathrm{Tr}\left[K_{i}\otimes\openone\psi^{AB}K_{i}^{\dagger}\otimes\openone\right],\\
\phi_{i}^{AB} & =\frac{1}{q_{i}}K_{i}\otimes\openone\psi^{AB}K_{i}^{\dagger}\otimes\openone.
\end{align}
Without loss of generality we assume that $q_{i}>0$ for all $i$. 

Consider now the average entanglement of the ensemble $\{q_{i},\phi_{i}^{AB}\}$:
\begin{equation}
\sum_{i}q_{i}E_{\mathrm{f}}(\phi_{i}^{AB})=\sum_{i}q_{i}S(\phi_{i}^{B}).
\end{equation}
By the concavity of the von Neumann entropy it holds that 
\begin{equation} \label{eq:Concavity}
\sum_{i}q_{i}S(\phi_{i}^{B})\leq S\left(\sum_{i}q_{i}\phi_{i}^{B}\right)=S(\psi^{B}).
\end{equation}
Since the von Neumann entropy is strictly concave~\cite{Nielsen_Chuang_2010}, Eq.~(\ref{eq:Concavity}) holds with equality if and only if 
\begin{equation}
\phi_{i}^{B}=\psi^{B}\,\,\,\mathrm{for\,all}\,i.
\end{equation}
By convexity of the entanglement of formation we further have 
\begin{equation}
E_{\mathrm{f}}(\Lambda\otimes\openone[\psi^{AB}])\leq\sum_{i}q_{i}E_{\mathrm{f}}(\phi_{i}^{AB})=\sum_{i}q_{i}S(\phi_{i}^{B}).
\end{equation}
Summarizing these arguments, the condition~(\ref{eq:Pure}) implies
that $\phi_{i}^{B}=\psi^{B}$ holds for all $i$. A similar argument has also been used in~\cite{PhysRevA.99.022338}.

In the next step we introduce a reference system $R$.
By the Stinespring dilation theorem~\cite{Nielsen_Chuang_2010}, for any quantum channel $\Lambda$ there exists a unitary
$U_{RA}$ on $RA$ such that 
\begin{equation}
\Lambda\otimes\openone[\psi^{AB}]=\mathrm{Tr}_{R}\left[U_{RA}\left(\ket{0}\!\bra{0}^{R}\otimes\psi^{AB}\right)U_{RA}^{\dagger}\right]
\end{equation}
with a pure state $\ket{0}^{R}$. We further define the pure state
\begin{equation}
\ket{\mu}^{RAB}=U_{RA}\left(\ket{0}^{R}\otimes\ket{\psi}^{AB}\right).
\end{equation}

Note that any rank-1 POVM $\{M_{j}^{R}\}$ on the reference system
$R$ gives rise to a set of Kraus operators $\{L_{j}\}$ for the channel
$\Lambda$, such that
\begin{equation}
\mathrm{Tr}_{R}\left[M_{j}^{R}\mu^{RAB}\right]=\left(L_{j}\otimes\openone\right)\psi^{AB}\left(L_{j}^{\dagger}\otimes\openone\right).
\end{equation}
From the arguments given above we see that Eq.~(\ref{eq:Pure}) implies
that the following equality holds for all rank-1 POVMs $\{M_{j}^{R}\}$
and all $j$:
\begin{equation}
\mathrm{Tr}_{RA}\left[M_{j}^{R}\mu^{RAB}\right]=\mathrm{Tr}_{RAB}\left[M_{j}^{R}\mu^{RAB}\right]\times\psi^{B}.\label{eq:Proof-1}
\end{equation}
We will now take a closer look at the state $\mu^{RB}$. Due to Eq.~(\ref{eq:Proof-1}),
it holds that 
\begin{equation}
\mathrm{Tr}_{R}\left[M_{j}^{R}\mu^{RB}\right]=\mathrm{Tr}_{RB}\left[M_{j}^{R}\mu^{RB}\right]\times\psi^{B}
\end{equation}
for all POVMs $\{M_{j}^{R}\}$ and all $j$. As we show in Lemma~\ref{prop:Product} in the supplemental material,
this is only possible if $\mu^{RB}$ is a product state, i.e., 
\begin{equation}
\mu^{RB}=\mu^{R}\otimes\psi^{B}.\label{eq:Product}
\end{equation}

Recalling that the state $\mu^{RAB}$ is pure and using Eq.~(\ref{eq:Product}),
we see that it is always possible to attach an ancillary system $A'$
on Alice's side in a pure state $\ket{0}^{A'}$ and perform a unitary
$U_{AA'}$ on Alice's part of the total state $\ket{\mu}^{RAB}\otimes\ket{0}^{A'}$
such that 
\begin{equation}
U_{AA'}\left(\ket{\mu}^{RAB}\otimes\ket{0}^{A'}\right)=\ket{\nu}^{RA'}\otimes\ket{\psi}^{AB},
\end{equation}
where $\ket{\nu}^{RA'}$ is a purification of $\mu^{R}$. This implies
that the state $\mu^{AB}\otimes\ket{0}\!\bra{0}^{A'}$ is transformed
as follows:
\begin{equation}
U_{AA'}\left(\mu^{AB}\otimes\ket{0}\!\bra{0}^{A'}\right)U_{AA'}^{\dagger}=\ket{\psi}\!\bra{\psi}^{AB}\otimes\nu^{A'}.
\end{equation}

Recalling that $\mu^{AB}=\Lambda\otimes\openone[\psi^{AB}]$, these arguments imply that there exists a quantum operation $\Lambda'$ on
Alice's subsystem such that 
\begin{equation}
(\Lambda'\circ\Lambda)\otimes\openone[\psi^{AB}]=\psi^{AB}.
\end{equation}
Since $\ket{\psi}^{AB}$ is entangled, it has a Schmidt decomposition
of the form $\ket{\psi}^{AB}=\sum_{i}\sqrt{\lambda_{i}}\ket{ii}$
with at least two nonzero Schmidt coefficients $\lambda_i$. Consider now a pure state $\ket{\eta}$ of the form 
\begin{equation}
\ket{\eta}=\sum_{i}c_{i}\ket{i}^{A},
\end{equation}
where $c_i$ are complex coefficients and $\ket{i}^A$ are local basis states on Alice's subsystem for which the state $\ket{\psi}^{AB}$ has a nonzero Schmidt coefficient. Using Lemma~\ref{prop:CorrectableSubspace} in the supplemental material, it follows that
\begin{equation}
\Lambda'\circ\Lambda[\eta]=\eta
\end{equation}
and the proof of the theorem is complete.
\end{proof}

\begin{lem} \label{lem:StableState}
    Let $\Lambda$ be a quantum channel and $\rho^{AB}$ an initial state. If
    \begin{equation}
        \lim_{n\rightarrow\infty}E_\mathrm f(\Lambda_{n}[\rho^{AB}])>0,
    \end{equation}
    then there exists an entangled state $\mu^{AB}$ such that 
    \begin{equation}
        E_{\mathrm{f}}(\Lambda\otimes\openone[\mu^{AB}])=E_{\mathrm{f}}(\mu^{AB}).
    \end{equation}
\end{lem}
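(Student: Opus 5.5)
We argue by compactness and monotonicity. The system $B$ may grow in dimension because of the classical transcript, but since it is only acted on by LOCC we can fix a finite-dimensional register; all states live on a fixed finite-dimensional space $\mathcal{H}_A\otimes\mathcal{H}_B$. Consider the sequence $e_n=E_{\mathrm f}(\Lambda_n[\rho^{AB}])$. Each round consists of an LOCC map $\mathcal{F}_{n+1}$ followed by $\Lambda\otimes\openone$, and both are LOCC, so $E_{\mathrm f}$ is non-increasing along the sequence. Hence $e_n$ converges to $e_\infty=\lim_n e_n>0$, and in particular $e_n-e_{n+1}\to 0$.

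The key steps are as follows. Define $\tau_n=\mathcal{F}_{n+1}\circ\Lambda_n[\rho^{AB}]$, the state just before the $(n+1)$-th channel use. Then
\begin{equation}
e_{n+1}=E_{\mathrm f}\bigl((\Lambda\otimes\openone)[\tau_n]\bigr)\leq E_{\mathrm f}(\tau_n)\leq e_n ,
\end{equation}
so
\begin{equation}
0\leq E_{\mathrm f}(\tau_n)-E_{\mathrm f}\bigl((\Lambda\otimes\openone)[\tau_n]\bigr)\leq e_n-e_{n+1}\to 0,
\end{equation}
and $E_{\mathrm f}(\tau_n)\geq e_{n+1}\geq e_\infty$. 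The set of states on the fixed finite-dimensional space is compact, so there is a subsequence $\tau_{n_k}\to\mu^{AB}$.

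To pass to the limit, we use that $E_{\mathrm f}$ is continuous in finite dimensions (uniform continuity, e.g. Nielsen/Winter-type bounds) and that $\Lambda\otimes\openone$ is continuous. These give
\begin{equation}
E_{\mathrm f}(\mu^{AB})-E_{\mathrm f}\bigl((\Lambda\otimes\openone)[\mu^{AB}]\bigr)=\lim_k\Bigl[E_{\mathrm f}(\tau_{n_k})-E_{\mathrm f}\bigl((\Lambda\otimes\openone)[\tau_{n_k}]\bigr)\Bigr]=0 .
\end{equation}
In addition, $E_{\mathrm f}(\mu^{AB})\geq e_\infty>0$, so $\mu^{AB}$ is entangled, which is exactly the claim of Lemma~\ref{lem:StableState}.

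The main obstacle is the dimension issue. If the adaptive protocol appends classical transcripts to $B$, or ancillas on either side, the Hilbert space grows with $n$ and compactness fails. The first fix to try is to note that the transcript is classical and can be absorbed, since a flagged state has $E_{\mathrm f}$ equal to the average over flags. A cleaner fix is to restrict attention to the model in Eq.~\eqref{eq:LambdaN}, where each $\mathcal{F}_i$ maps the fixed space $A\times B$ to itself, as the paper's setup implicitly assumes.

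If $B$ nonetheless must be allowed to be large, we would bound its effective dimension. Only the support of the reduced state on $B$ matters, and after each channel use the relevant part of $B$ can be compressed via the Schmidt rank across $A|B$, which is at most $d_A^2$ times the purification dimension; this step would need care. The continuity of $E_{\mathrm f}$ we invoke is the standard one, with modulus depending only on the local dimension.
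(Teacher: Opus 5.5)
Your proof is correct, and its core is the same as the paper's. Both arguments use LOCC-monotonicity of $E_{\mathrm f}$ under $\mathcal{F}_{n+1}$ and under $\Lambda\otimes\openone$ to sandwich the pre-channel state, $e_{n+1}\le E_{\mathrm f}(\tau_n)\le e_n$. Hence the one-step loss $E_{\mathrm f}(\tau_n)-E_{\mathrm f}((\Lambda\otimes\openone)[\tau_n])$ is at most $e_n-e_{n+1}\to 0$.

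You differ in the closing step, and yours is the more rigorous one. The paper ends with ``since $\varepsilon>0$ can be chosen arbitrarily small, the proof is complete.'' But the state $\sigma^{AB}=F_{k+1}\circ\Lambda_k[\rho^{AB}]$ depends on $\varepsilon$, so that argument only yields a sequence of different states whose loss tends to zero. It does not give a single state with exact equality, which is what the statement claims and what the contradiction argument in the proof of Theorem~\ref{thm:General} actually needs.

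Your subsequence $\tau_{n_k}\to\mu^{AB}$ in the compact state space, together with continuity of $E_{\mathrm f}$ and of $\Lambda\otimes\openone$, supplies exactly this. It also makes $E_{\mathrm f}(\mu^{AB})\ge e_\infty>0$ explicit. The price is that you need a fixed finite-dimensional $B$. That is the paper's working assumption for Eq.~\eqref{eq:LambdaN}, where the filters are LOCC maps of $A\times B$ to itself. So your last paragraph about compressing a growing $B$ is unnecessary, and as written it is not an argument.

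If you do want to carry a classical transcript, your flag remark can be finished directly:
\begin{itemize}
\item $E_{\mathrm f}$ of a locally flagged state equals the flag-average, so both the one-step loss and $E_{\mathrm f}$ split over branches.
\item Each branch has $E_{\mathrm f}\le\log_2 d_A$. A Markov-type count then gives, for large $n$, a branch $h$ with loss at most $\sqrt{e_n-e_{n+1}}$ and $E_{\mathrm f}(\tau_{n,h})\ge e_\infty/2$.
\item That branch lives on the fixed quantum space, where your compactness argument applies unchanged.
\end{itemize}
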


\begin{proof}
Let
\begin{equation}
c = \lim_{n \to \infty} E_{\mathrm f}\bigl(\Lambda_n[\rho^{AB}]\bigr),
\end{equation}
so that \(c > 0\) by assumption. Then, for every \(\varepsilon > 0\), there exists \(N\) such that
\begin{equation}
\left| E_{\mathrm f}\bigl(\Lambda_k[\rho^{AB}]\bigr) - c \right| < \varepsilon \label{eq:EfBound-1}
\end{equation}
for all \(k > N\). In particular, for all \(k > N\) we also have
\begin{equation}
\left| E_{\mathrm f}\bigl(\Lambda_{k+1}[\rho^{AB}]\bigr) - c \right| < \varepsilon. \label{eq:EfBound-2}
\end{equation}

Now observe that
\begin{equation}
\Lambda_{k+1}[\rho^{AB}]
=
(\Lambda \otimes \openone)\circ F_{k+1}\circ \Lambda_k[\rho^{AB}]
=
(\Lambda \otimes \openone)[\sigma^{AB}], \label{eq:LambdaK1}
\end{equation}
where we define
\begin{equation}
\sigma^{AB} = F_{k+1}\circ \Lambda_k[\rho^{AB}].
\end{equation}
Since the entanglement of formation is non-increasing under LOCC, it follows that
\begin{equation}
E_{\mathrm f}\bigl(\Lambda_{k+1}[\rho^{AB}]\bigr)
\leq
E_{\mathrm f}(\sigma^{AB})
\leq
E_{\mathrm f}\bigl(\Lambda_k[\rho^{AB}]\bigr).
\end{equation}
Combining this with Eqs.~\eqref{eq:EfBound-1} and \eqref{eq:EfBound-2}, we obtain
\begin{equation}
c - \varepsilon < E_{\mathrm f}(\sigma^{AB}) < c + \varepsilon,
\end{equation}
or equivalently,
\begin{equation}
\left| E_{\mathrm f}(\sigma^{AB}) - c \right| < \varepsilon.
\end{equation}
    
Using the triangle inequality together with Eq.~\eqref{eq:LambdaK1}, we therefore obtain
\begin{align}
\left|E_{\mathrm{f}}(\Lambda\otimes\openone[\sigma^{AB}])-E_{\mathrm{f}}(\sigma^{AB})\right| & \leq\left|E_{\mathrm{f}}(\Lambda\otimes\openone[\sigma^{AB}])-c\right|\\
& +\left|E_{\mathrm{f}}(\sigma^{AB})-c\right|<2\varepsilon.\nonumber 
\end{align}
Since \(\varepsilon > 0\) can be chosen arbitrarily small, the proof is complete.
\end{proof}

\begin{lem}
\label{prop:Product}A quantum state $\rho^{AB}$ fulfills 
\begin{equation}
\mathrm{Tr}_{A}\left[M_{i}^{A}\rho^{AB}\right]=\mathrm{Tr}_{AB}\left[M_{i}^{A}\rho^{AB}\right]\times\rho^{B}
\label{eq::propositionsteering}
\end{equation}
for all POVMs $\{M_{i}^{A}\}$ on the subsystem $A$ and all $i$
if and only if $\rho^{AB}$ is of the form 
\begin{equation}
\rho^{AB}=\rho^{A}\otimes\rho^{B}.
\end{equation}
\end{lem}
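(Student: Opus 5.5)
The plan is to treat the two directions separately; the reverse direction is trivial. If $\rho^{AB}=\rho^A\otimes\rho^B$, then for any POVM element $M_i^A$ we get $\mathrm{Tr}_A[M_i^A\rho^{AB}]=\mathrm{Tr}[M_i^A\rho^A]\,\rho^B$. The same computation shows $\mathrm{Tr}_{AB}[M_i^A\rho^{AB}]=\mathrm{Tr}[M_i^A\rho^A]$, so Eq.~(\ref{eq::propositionsteering}) holds.

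For the forward direction, I would reduce the statement to a condition on arbitrary operators on $A$. Any operator $0\le X\le\openone$ on $A$ occurs as an element of the two-outcome POVM $\{X,\openone-X\}$. The hypothesis then gives
\begin{equation}
\mathrm{Tr}_A\bigl[(X\otimes\openone)\rho^{AB}\bigr]=\mathrm{Tr}\bigl[(X\otimes\openone)\rho^{AB}\bigr]\,\rho^B
\end{equation}
for all such $X$. Both sides are linear in $X$, and the effects $0\le X\le\openone$ span the full operator space on $\mathcal{H}_A$. Indeed, every Hermitian operator is a real combination of scaled positive operators, and every operator is $H_1+iH_2$ with $H_1,H_2$ Hermitian. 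Hence the identity holds for every operator $X$ on $A$. Because $\mathrm{Tr}_{AB}[(X\otimes\openone)\rho^{AB}]=\mathrm{Tr}[X\rho^A]$, we obtain
\begin{equation}
\mathrm{Tr}_A\bigl[(X\otimes\openone)\rho^{AB}\bigr]=\mathrm{Tr}[X\rho^A]\,\rho^B
\end{equation}
for all $X$.

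Next I would specialize to matrix units $X=\ket{k}\bra{j}$ in an orthonormal basis of $A$. We write $\rho^{AB}=\sum_{j,k}\ket{j}\bra{k}\otimes\rho_{jk}$ with blocks $\rho_{jk}$ acting on $B$. Then $\mathrm{Tr}_A[(\ket{k}\bra{j}\otimes\openone)\rho^{AB}]=\rho_{jk}$, while $\mathrm{Tr}[\ket{k}\bra{j}\rho^A]=\bra{j}\rho^A\ket{k}=\mathrm{Tr}\,\rho_{jk}$. The identity therefore reads $\rho_{jk}=(\rho^A)_{jk}\,\rho^B$ for all $j,k$. Summing back over the blocks gives $\rho^{AB}=\rho^A\otimes\rho^B$.

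The only step that needs some care is the passage from POVM elements to arbitrary operators. Here one must check that the two-outcome construction $\{X,\openone-X\}$ is legitimate for every effect $X$, and that the effects indeed span the operator space. Both facts are elementary, so I do not expect a genuine obstacle.
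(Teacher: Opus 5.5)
Your proof is correct, and its key ingredient is the one the paper's argument also relies on, so this is mainly a difference in bookkeeping. The paper writes $\rho^{AB}=\sum_k\sqrt{\lambda_k}\,E_k\otimes F_k$ in operator-Schmidt form and compares coefficients of the linearly independent $F_k$ to get $\mathrm{Tr}(M_i^AE_k)=\frac{b_k}{\sqrt{\lambda_k}}\mathrm{Tr}(M_i^A\rho^A)$. It then infers $E_k\propto\rho^A$ for every $k$, i.e.\ that the operator-Schmidt rank is one. That last inference tacitly uses exactly the fact you isolate and prove: POVM effects span the whole operator space on $A$, so an identity linear in the effect extends to all operators.

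Your route goes through the two-outcome POVM $\{X,\openone-X\}$ and the matrix units $\ket{k}\bra{j}$, giving the block identity $\rho_{jk}=(\rho^A)_{jk}\,\rho^B$. It is more elementary and fully explicit. You need neither to complete $\{E_k\}$ to a basis before expanding $M_i^A$, nor to justify the ``for all $M_i^A$'' step. The paper's phrasing instead gives a structural reading: the condition forces the operator-Schmidt rank of $\rho^{AB}$ to be one.
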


\begin{proof} 
The implication $\Leftarrow$ is immediate. We now prove the implication $\Rightarrow$. Let us write a state $\rho^{AB}$ in its Operator-Schmidt decomposition~\cite{KhatriPQCT} such that $\rho^{AB}=\sum_k \sqrt{\lambda_k} E_k\otimes F_k,$ where $\{E_k\}$ and $\{F_k\}$ form orthonormal sets of Hermitian operators acting on systems $A$ and $B$, respectively. Without loss of generality, we restrict ourselves to $\lambda_k\ne 0.$ We also expand the operator $M_i^{A}$ in the basis $\{E_\alpha\}$ as $M_i^{A}=\sum_{\alpha}m_{\alpha}E_{\alpha},$ where $m_\alpha$ are real coefficients. We have:
\begin{equation}
\begin{split}
    &\mathrm{Tr}_{A}\left[M_{i}^{A}\rho^{AB}\right]=\sum_{k,\alpha}\sqrt{\lambda_k}m_{\alpha}\mathrm{Tr}\left(E_k E_{\alpha}\right)F_k\\
    &=\sum_{k}\sqrt{\lambda_k}m_{k}F_k=\mathrm{Tr}\left[M_{i}^{A}\rho^{A}\right]\sum_k b_kF_k,
\end{split}   
\label{eq::productexpansion}
\end{equation}
where in the last equality we used Eq.~\eqref{eq::propositionsteering} and $\mathrm{Tr}\left[M_i^{A}\rho^{AB}\right]=\mathrm{Tr}\left[M_{i}^{A}\rho^{A}\right].$ We also decomposed $\rho^{B}$ in a basis $F_k$ obtaining $\rho^{B}=\sum_k b_kF_k$ with real coefficients $b_k$. Coefficient $m_k$ satisfies $m_k=\mathrm{Tr}\left(M_i^{A}E_k\right).$ Using this, and comparing coefficients in Eq.~\eqref{eq::productexpansion} we obtain:

\begin{equation}
    \mathrm{Tr}\left(M_i^{A}E_k\right)=\frac{b_k}{\sqrt{\lambda_k}}\mathrm{Tr}\left(M_i^{A}\rho^{A}\right).
\end{equation}
It must hold for all $M_i^{A},$ thus $E_k=\frac{b_k}{\sqrt{\lambda_k}}\rho^{A}.$ Plugging it into $\rho^{AB}$ we obtain:
\begin{equation}
    \rho^{AB}=\sum_k \sqrt{\lambda_k} E_k\otimes F_k=\sum_k \rho^{A}\otimes b_k F_k=\rho^{A}\otimes \rho^{B}.
\end{equation}
\end{proof}

\begin{lem} \label{prop:CorrectableSubspace}
Let $\ket{\psi}^{AB}$ be a pure entangled state satisfying
\begin{equation}\label{eqn:recovery_condition}
    (N \otimes \openone)\bigl[\psi^{AB}\bigr] = \psi^{AB},
\end{equation}
where $N$ is a quantum channel and
\begin{equation}
    \ket{\psi}^{AB} = \sum_i \sqrt{\lambda_i} \ket{i}^{A} \ket{i}^{B},
\end{equation}
such that $\ket{\psi}^{AB}$ has at least two non-zero Schmidt coefficients. Here, $\{\ket{i}^{A}\}$ and $\{\ket{i}^{B}\}$ denote the Schmidt bases for subsystems $A$ and $B$, respectively.

Then, for any pure state $\ket{\eta}$ belonging to the subspace spanned by $\{\ket{i}^{A}\}$, i.e.
\begin{equation}
\ket{\eta} = \sum_i c_i \ket{i}^{A},
\end{equation}
with $c_i \in \mathbb{C}$, it holds that
\begin{equation}
    N[\eta] = \eta.
\end{equation}
\end{lem}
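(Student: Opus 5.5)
Let $\{K_j\}$ be Kraus operators of $N$. The goal is to show that each $K_j$, restricted to the support subspace $\mathcal{H}_S=\operatorname{span}\{\ket{i}^A:\lambda_i>0\}$, is proportional to the identity there. That gives $N[\eta]=\sum_j|\alpha_j|^2\eta=\eta$ for every $\ket{\eta}\in\mathcal{H}_S$, where $\sum_j|\alpha_j|^2=1$ follows from trace preservation.

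First, we use purity of the output. Since $(N\otimes\openone)[\psi^{AB}]=\sum_j (K_j\otimes\openone)\psi^{AB}(K_j^\dagger\otimes\openone)$ equals the rank-one projector $\psi^{AB}$, every term in this sum of positive operators has support inside $\operatorname{span}\{\ket{\psi}^{AB}\}$. Hence
\begin{equation}
(K_j\otimes\openone)\ket{\psi}^{AB}=\alpha_j\ket{\psi}^{AB}
\end{equation}
for some $\alpha_j\in\mathbb{C}$. Taking the trace gives $\sum_j|\alpha_j|^2=1$.

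Second, we transfer this to an operator identity on $A$. Writing out both sides in the Schmidt form gives $\sum_i\sqrt{\lambda_i}\,K_j\ket{i}^A\otimes\ket{i}^B=\alpha_j\sum_i\sqrt{\lambda_i}\ket{i}^A\otimes\ket{i}^B$. The vectors $\ket{i}^B$ are orthonormal, so we can compare the coefficients of each $\ket{i}^B$ with $\lambda_i>0$. Dividing by $\sqrt{\lambda_i}\neq 0$ yields $K_j\ket{i}^A=\alpha_j\ket{i}^A$ for every $i$ in the support. By linearity, $K_j\ket{\eta}=\alpha_j\ket{\eta}$ for all $\ket{\eta}=\sum_ic_i\ket{i}^A$. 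Therefore
\begin{equation}
N[\eta]=\sum_j K_j\eta K_j^\dagger=\Bigl(\sum_j|\alpha_j|^2\Bigr)\eta=\eta .
\end{equation}

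I expect the only delicate point to be the first step. One has to argue carefully that a sum of positive rank-$\le 1$ operators equal to a pure state forces each summand to be a multiple of that pure state. This follows because $0\le X_j\le\psi$ implies $\operatorname{supp}X_j\subseteq\operatorname{supp}\psi$.

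The hypothesis of at least two nonzero Schmidt coefficients is not needed for the argument itself. It only ensures that $\mathcal{H}_S$ has dimension at least two, so that the subspace is a nontrivial correctable (in fact noiseless) subspace. This is the form in which the lemma is used in the proof of Theorem~\ref{thm:General}, with $N=\Lambda'\circ\Lambda$ and recovery $\mathcal{R}=\Lambda'$.
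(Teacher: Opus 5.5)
Your proof is correct, but it takes a different route from the paper's.

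\textbf{The paper's route.} It never introduces Kraus operators. It writes the fixed-point condition as $\sum_{i,j}\sqrt{\lambda_i\lambda_j}\,\bigl(N[\ket{i}\!\bra{j}]-\ket{i}\!\bra{j}\bigr)\otimes\ket{i}\!\bra{j}=0$. It then multiplies by $\openone\otimes\ket{k}\!\bra{l}$, traces out $B$ to read off $N[\ket{l}\!\bra{k}]=\ket{l}\!\bra{k}$, and concludes by linearity. This argument uses only linearity of $N$, with no positivity, Kraus decomposition or trace preservation. It is therefore shorter and applies to any linear map having $\psi^{AB}$ as a fixed point. It silently divides by $\sqrt{\lambda_k\lambda_l}$, so it also needs both indices in the Schmidt support. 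You make that point explicit by working on $\mathcal{H}_S$.

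\textbf{Your route.} You use complete positivity and the rank-one structure of $\psi^{AB}$ through the support inclusion $0\le X_j\le\psi$. This gives $(K_j\otimes\openone)\ket{\psi}=\alpha_j\ket{\psi}$ branch by branch, and you then compare vector coefficients rather than operator coefficients. The gain is a stronger structural conclusion: every Kraus operator, in any Kraus representation, satisfies $K_jP_S=\alpha_jP_S$. So $\mathcal{H}_S$ is a noiseless subspace, which immediately gives the Knill--Laflamme conditions $P_SK_i^\dagger K_jP_S=\overline{\alpha_i}\alpha_jP_S$. That is the notion of correctability invoked in Theorem~\ref{thm:General}. The paper's conclusion, that $N$ is the identity on all operators supported on $\mathcal{H}_S$, is equivalent to yours, but only after a further standard step (unitary freedom of Kraus representations of the identity channel).

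Your remark that the two-Schmidt-coefficient hypothesis plays no role in the proof, and only makes the subspace nontrivial, applies equally to the paper's argument.
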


\begin{proof}

From Eq.~\eqref{eqn:recovery_condition} and the assumed Schmidt decomposition of $\ket{\psi}^{AB}$, we obtain
\begin{align}
\sum_{i,j}\sqrt{\lambda_i \lambda_j}\,
N\!\left[\ket{i}\!\bra{j}^{A}\right]\otimes \ket{i}\!\bra{j}^{B}
=
\sum_{i,j}\sqrt{\lambda_i \lambda_j}\,
\ket{i}\!\bra{j}^{A}\otimes \ket{i}\!\bra{j}^{B}.
\end{align}

Left multiplying by $\openone\otimes \ket{k}\!\bra{l}$ and tracing over subsystem $B$, we obtain
\begin{equation}
N\!\left[\ket{l}\!\bra{k}^{A}\right] = \ket{l}\!\bra{k}^{A}.
\end{equation}

Since $N$ is a quantum channel, it is linear on operators. This completes the proof.   
\end{proof}

\subsection*{Proof of Theorem~\ref{thm:exponential}}
In the following we consider a pure bipartite state $\ket{\psi}^{AB}=\sum_{i=0}^{k-1}\sqrt{\lambda_{i}}\ket{ii}$
with Schmidt rank at most $k$~\cite{Schmidtdecomposition1a2bstate,Nielsen_Chuang_2010}, such that the Schmidt coefficients
$\lambda_{i}$ are sorted in non-increasing order. We define the $G_{k}$-concurrence as, (similar quantity was also defined in ~\cite{GourGC} )
\begin{equation}
G_{k}(\psi^{AB})=k \left(\prod_{i=0}^{k-1}\lambda_{i}\right)^{1/k}.
\end{equation}
This definition can be extended in a straightforward way also to non-normalized
vectors in $\mathcal{H}_{AB}$. We further define the state 
\begin{equation}
\ket{\phi}_{k}^{+}=\frac{1}{\sqrt{k}}\sum_{i=0}^{k-1}\ket{ii}
\end{equation} 
to be maximally entangled state of rank $k.$

Let us consider a quantum channel $\Lambda.$ We define:
\begin{equation}
    \kappa=\min_{K_i}\sum_iG_k[(K_i\otimes \openone)\,\phi_k^{+}(K_i^{\dagger}\otimes \openone)],
\end{equation}
where the minimization is taken over all Kraus decompositions of $\Lambda.$ The $\kappa$ parameter quantifies how good is channel in entanglement preservation.

\begin{lem}
\label{lem:Gk}For any pure state $\ket{\psi}^{AB}$ with Schmidt
rank at most $k$ and any matrix $M$ on $\mathcal{H}_{A},$ the following
equality holds:
\begin{multline}
G_{k}[(M\otimes\openone)\,\psi^{AB}(M^{\dagger}\otimes\openone)]\\=G_{k}[(M\otimes\openone)\,\phi_{k}^{+}(M^{\dagger}\otimes\openone)]\times G_{k}(\psi^{AB}).
\label{eq::klemma}
\end{multline}
\end{lem}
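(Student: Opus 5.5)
The plan is to express $G_k$ as a determinant, which turns the claimed equality into multiplicativity of determinants. Write $\ket{\psi}^{AB}=(X\otimes\openone)\ket{\phi_k^+}\sqrt{k}$ in the Schmidt basis. Here $X=\sum_{i=0}^{k-1}\sqrt{\lambda_i}\ket{i}\!\bra{i}$ after restricting Bob's side to the $k$-dimensional span of his Schmidt vectors; the Schmidt basis on Bob's side is used to identify $\ket{ii}$. More invariantly, any (possibly unnormalized) vector $\ket{v}$ in $\mathcal{H}_A\otimes\mathcal{H}_B$ whose $B$-support lies in a fixed $k$-dimensional subspace $\mathcal{H}_{B_k}$ corresponds to a $d_A\times k$ coefficient matrix $V$. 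Its squared Schmidt coefficients are the eigenvalues of $V^\dagger V$. Hence, for Schmidt rank at most $k$ (padding with zeros),
\begin{equation}
G_k(v)=k\,\bigl[\det (V^\dagger V)\bigr]^{1/k},
\end{equation}
where the determinant is of a $k\times k$ matrix. This holds because the product of the $k$ largest squared singular values equals $\det(V^\dagger V)$, with zeros automatically handled. The first step is to verify this identity, including the unnormalized case, which follows directly from the definition extended by homogeneity.

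Next, for $\ket{\psi}$ with coefficient matrix $\Psi$ (a $d_A\times k$ matrix), the vector $(M\otimes\openone)\ket{\psi}$ has coefficient matrix $M\Psi$. Likewise $(M\otimes\openone)\ket{\phi_k^+}$ has coefficient matrix $MP/\sqrt{k}$, where $P$ is the isometry embedding the $k$ Alice-side Schmidt vectors of $\psi$, identified with those of $\phi_k^+$. Since Schmidt rank at most $k$ lets us write $\Psi=P D$ with $D=\mathrm{diag}(\sqrt{\lambda_i})$, we get
\begin{equation}
\det\bigl[(M\Psi)^\dagger M\Psi\bigr]=\det\bigl[D\,P^\dagger M^\dagger M P\,D\bigr]=\det(D)^2\det(P^\dagger M^\dagger M P),
\end{equation}
using that all matrices involved are $k\times k$. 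Taking $k$-th roots and multiplying by $k$ gives
\begin{equation}
G_k[(M\otimes\openone)\psi]=k\Bigl(\prod_i\lambda_i\Bigr)^{1/k}\,\bigl[\det(P^\dagger M^\dagger MP)\bigr]^{1/k}.
\end{equation}
The first factor is $G_k(\psi)$. For the second, $k[\det(P^\dagger M^\dagger M P/k)]^{1/k}=[\det(P^\dagger M^\dagger MP)]^{1/k}$, which equals $G_k[(M\otimes\openone)\phi_k^+]$. This yields Eq.~\eqref{eq::klemma}.

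The main subtlety is the identification of $\phi_k^+$: it must be the maximally entangled state in the Schmidt bases of $\psi$, since the right-hand side depends on the chosen $k$-dimensional subspace of $\mathcal{H}_A$. I would state this convention explicitly. If $\psi$ has Schmidt rank strictly less than $k$, both sides vanish, because $\det D=0$, so that case is covered. I would also check that the local unitary freedom on $B$ does not affect $G_k$, which is immediate since $G_k$ depends only on the Schmidt coefficients.
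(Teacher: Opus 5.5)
Your proof is correct and follows essentially the same route as the paper. Both arguments write the product of the $k$ squared singular values as $\det\bigl[(M\Psi)^\dagger M\Psi\bigr]$ and split it as $\det(D)^2\det(P^\dagger M^\dagger M P)$, where your $P^\dagger M^\dagger M P$ is the paper's $M_k^\dagger M_k$ for the restriction $M_k=M|_{\mathcal S_A}$. Your explicit convention that $\phi_k^+$ lives on the Alice-side Schmidt support of $\psi$ is exactly what the paper assumes implicitly when it substitutes $\psi=\phi_k^+$ with the same $M_k$.
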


\begin{proof}

If $k$ is greater than Schmidt rank of $\ket{\psi^{AB}},$ both sides of the Eq.~\eqref{eq::klemma} are 0. Assume that Schmidt rank of $\ket{\psi^{AB}}$ is $k$. We can write our state as~\cite{KhatriPQCT}:
\begin{equation}
    \ket{\psi^{AB}}=\sum_{i,j} X_{i,j}\ket{i}_A\ket{j}_B=\sum_{i=1}^{k}\sqrt{\lambda_i}\ket{e_i}_A\ket{f_i}_B,
    \label{eq::klemmade}
\end{equation}
where $\ket{i}_A, \ket{e_i}_A$ are orthonormal bases in space $\mathcal{H}_{A},$ $\ket{j}_B, \ket{f_i}_B$ are orthonormal bases in space $\mathcal{H}_{B}$ and $\sqrt{\lambda_i}$ are singular values of $X$. We further have $M\otimes \openone \ket{\psi^{AB}}=\sum_{i,j} (M X)_{i,j}\ket{i}_A\ket{j}_B.$ Thus, the Schmidt coefficients of $M\otimes \openone \ket{\psi^{AB}}$ are singular values of $M X$.  

Let
\begin{equation}
    \mathcal S_A
    =
    \operatorname{span}\{\ket{e_1},\ldots,\ket{e_k}\},
    \qquad
    \mathcal S_B
    =
    \operatorname{span}\{\ket{f_1},\ldots,\ket{f_k}\}
\end{equation}
be the Schmidt supports on subsystems $A$ and $B$,
respectively. Let
\begin{equation}
    X_k
    =
    \operatorname{diag}
    \left(
        \sqrt{\lambda_1},\ldots,\sqrt{\lambda_k}
    \right)
\end{equation}
be the coefficient matrix represented in the corresponding
Schmidt bases, and let
\begin{equation}
    M_k:=M|_{\mathcal S_A}:
    \mathcal S_A\rightarrow\mathcal H_A
\end{equation}
be the restriction of $M$ to the Schmidt support. The
nonzero singular values of $MX$ coincide with the singular
values of $M_kX_k$. Singular values of $M_kX_k$ satisfy~\cite{Horn_Johnson_1991}:
\begin{equation}
\begin{split}
    &\prod_{i=1}^{k}\sigma_i(M_kX_k)
    =
    \sqrt{
        \det\left(
            X_k^{\dagger}M_k^{\dagger}M_kX_k
        \right)
    }\\
    &=
    \sqrt{\det(M_k^{\dagger}M_k)}
    \sqrt{\det(X_k^{\dagger}X_k)}=
    \prod_{i=1}^{k}s_i
    \prod_{i=1}^{k}\sqrt{\lambda_i},
\end{split}
\label{eq::klemmasingularequality}
\end{equation}
where $s_i$ are singular values for a matrix $M_k.$ Using above relation we can obtain formula for $G_k$-concurrence.
\begin{multline}
    G_{k}[(M\otimes\openone)\,\psi^{AB}(M^{\dagger}\otimes\openone)]=k\left(\prod_{i=1}^{k}s_i^{2}\prod_{i=1}^{k}\lambda_i\right)^{1/k}\\=\left(\prod_{i=1}^{k}s_i^{2}\right)^{1/k}G_k(\psi^{AB}).
    \label{eq::lemmakfactoring}
\end{multline}
Substituting $\psi^{AB}= \phi^{+}_{k}$ in Eq.~\eqref{eq::lemmakfactoring}, we see that 
\begin{equation}
    G_{k}[(M\otimes\openone)\,\phi^{+}_{k}(M^{\dagger}\otimes\openone)]=\left(\prod_{i=1}^{k}s_i^{2}\right)^{1/k}.
\end{equation}
Plugging it into Eq.~\eqref{eq::lemmakfactoring} finishes the proof.
\end{proof}
\begin{prop} \label{prop:SchmidtRankK}
For a pure state $\ket{\psi}^{AB}$ with Schmidt rank at most $k$
and a local quantum operation $\Lambda$ which has no correctable
subspace the state $\Lambda\otimes\openone[\psi^{AB}]$ can be written
as 
\begin{equation}
\Lambda\otimes\openone[\psi^{AB}]=\sum_{i}p_{i}\ket{\phi_{i}}\!\bra{\phi_{i}}^{AB}
\end{equation}
where each state $\ket{\phi_{i}}^{AB}$ has Schmidt rank at most
$k$ and 
\begin{equation}
\sum_{i}p_{i}G_{k}(\phi_{i}^{AB})=\kappa_s G_{k}(\psi^{AB})
\end{equation}
with $\kappa_s<1$.
\end{prop}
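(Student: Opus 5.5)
The plan is to take the decomposition induced by a Kraus representation of $\Lambda$ and use Lemma~\ref{lem:Gk} to factor out the $G_k$-concurrence. Fix a Kraus decomposition $\{K_i\}$ of $\Lambda$ attaining (or approaching) the minimum in the definition of $\kappa$; if the minimum is not attained, I will use compactness of the set of Kraus decompositions with a fixed number of operators (at most $d_A^2$ suffice, via the unitary freedom). Set
\begin{equation}
p_i=\mathrm{Tr}\bigl[(K_i\otimes\openone)\psi^{AB}(K_i^\dagger\otimes\openone)\bigr],\qquad \ket{\phi_i}=\frac{(K_i\otimes\openone)\ket{\psi}^{AB}}{\sqrt{p_i}} .
\end{equation}
Each $\ket{\phi_i}$ has Schmidt rank at most $k$, because applying $K_i\otimes\openone$ cannot increase the Schmidt rank. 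Since $G_k$ is homogeneous of degree one in the unnormalized projector, we have $p_iG_k(\phi_i)=G_k[(K_i\otimes\openone)\psi(K_i^\dagger\otimes\openone)]$. Lemma~\ref{lem:Gk} then gives
\begin{equation}
\sum_i p_iG_k(\phi_i)=\Bigl(\sum_i G_k[(K_i\otimes\openone)\phi_k^+(K_i^\dagger\otimes\openone)]\Bigr)G_k(\psi^{AB})=:\kappa_s\,G_k(\psi^{AB}).
\end{equation}

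Next I show that $\kappa_s\le 1$. From the proof of Lemma~\ref{lem:Gk}, $G_k[(K_i\otimes\openone)\phi_k^+(K_i^\dagger\otimes\openone)]=\det(K_{i,k}^\dagger K_{i,k})^{1/k}$, where $K_{i,k}$ is $K_i$ restricted to a $k$-dimensional subspace $\mathcal S$. This is to be read as a Schmidt support; the quantity $\kappa$ depends on which $\mathcal S$ is used, so I take the worst case (the maximum) over $\mathcal S$. The AM--GM inequality gives $\det(P_i)^{1/k}\le \mathrm{Tr}(P_i)/k$ for $P_i=K_{i,k}^\dagger K_{i,k}\ge0$. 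Summing and using $\sum_i P_i=\openone_{\mathcal S}$ from trace preservation yields $\kappa_s\le \mathrm{Tr}(\openone_{\mathcal S})/k=1$.

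The main obstacle is strictness, and I argue it by contradiction. Suppose $\kappa_s=1$ for every Kraus decomposition, or at least for the optimal one. Then equality holds in every AM--GM step with a nonzero term, so each $P_i=c_i\openone_{\mathcal S}$, where $c_i\ge0$ and $\sum_ic_i=1$. This means every Kraus operator acts on $\mathcal S$ as $\sqrt{c_i}$ times an isometry $V_i$, so $\langle u|K_i^\dagger K_j|v\rangle$ is not yet known to be proportional to $\langle u|v\rangle$ for $i\neq j$.

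To close the argument I will exploit the freedom of choosing the Kraus decomposition. For any unitary mixing $L_a=\sum_i U_{ai}K_i$, equality must also hold, which forces $P_{\mathcal S}L_a^\dagger L_aP_{\mathcal S}\propto P_{\mathcal S}$ for all unitaries $U$. Expanding, this requires $P_{\mathcal S}K_i^\dagger K_jP_{\mathcal S}=\alpha_{ij}P_{\mathcal S}$ for all $i,j$, which is exactly the Knill--Laflamme condition. Hence $\mathcal S$, of dimension $k\ge2$, is correctable, contradicting the hypothesis. Therefore $\kappa_s<1$.

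For $k=1$, $G_1$ is just the norm, so the statement must implicitly assume $k\ge2$. For a rank-deficient $\psi$, both sides vanish and the claim is trivial.

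Finally, to obtain a constant uniform in $\psi$ (needed later), I note that $\kappa_s$ depends only on the subspace $\mathcal S$. Taking the supremum over the compact Grassmannian of $k$-dimensional subspaces, with continuity of $\mathcal S\mapsto\min_{\{K_i\}}\sum_i\det(\cdot)^{1/k}$, keeps the bound strictly below $1$.
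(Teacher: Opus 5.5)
Your proof is correct. Its first half (the Kraus-induced decomposition, Schmidt rank not increasing, degree-one homogeneity of $G_k$, and the factorization by Lemma~\ref{lem:Gk}) is exactly the paper's proof. The difference is in how you obtain $\kappa_s<1$, which the paper delegates to Lemma~\ref{lem:GkBound}.

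The paper argues entropically. It reruns the proof of Theorem~\ref{thm:General} with $\phi_k^+$ (strict concavity of entropy, the product-state lemma, Stinespring, Lemma~\ref{prop:CorrectableSubspace}) to show that without a correctable subspace $E_{\mathrm f}(\Lambda\otimes\openone[\phi_k^+])<E_{\mathrm f}(\phi_k^+)$. An $E_{\mathrm f}$-optimal decomposition of this state is then realized by some Kraus decomposition. One of its branches is not maximally entangled and so has $G_k<1$.

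You instead work directly with the compressions $P_{\mathcal S}K_i^\dagger K_iP_{\mathcal S}$, which sum to $P_{\mathcal S}$. AM--GM gives $\kappa_s\le 1$, with equality only if every compression is a multiple of $P_{\mathcal S}$. Demanding this for every unitary remixing forces, by polarization, the Knill--Laflamme conditions $P_{\mathcal S}K_i^\dagger K_jP_{\mathcal S}=\alpha_{ij}P_{\mathcal S}$ on the Schmidt support.

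What your route buys:
\begin{itemize}
\item It is shorter and independent of Theorem~\ref{thm:General}.
\item It proves $\sum_iG_k[(K_i\otimes\openone)\phi_k^+(K_i^\dagger\otimes\openone)]\le 1$ for an arbitrary Kraus family, which is the kind of bound needed to absorb the filter Kraus operators in Proposition~\ref{pr::expon}.
\item It is the same determinant/AM--GM mechanism the paper later makes quantitative via Cartwright--Field in Theorem~\ref{th::scaling}.
\end{itemize}
The paper's route, in exchange, ties $\kappa<1$ directly to a strict loss of entanglement of formation.

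Two minor points:
\begin{itemize}
\item You do not need a minimizer to exist. Your contradiction already produces some decomposition with $\kappa_s<1$, which is all the proposition asks. Your parenthetical claim that $d_A^2$ Kraus operators suffice to attain the minimum is not justified as stated.
\item For uniformity over $\mathcal S$, continuity is more than you need. The infimum over decompositions is an infimum of continuous functions of $\mathcal S$, hence upper semicontinuous. It therefore attains its maximum on the compact Grassmannian, and that maximum is strictly below $1$.
\end{itemize}
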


\begin{proof}
Let $\{K_{i}\}$ be a set of Kraus operators for $\Lambda$. Using
Lemma~\ref{lem:Gk}, we have
\begin{align}
 & \sum_{i}G_{k}[(K_{i}\otimes\openone)\psi^{AB}(K_{i}^{\dagger}\otimes\openone)]\\
 & =\sum_{i}G_{k}[(K_{i}\otimes\openone)\phi_{k}^{+}(K_{i}^{\dagger}\otimes\openone)]\times G_{k}(\psi^{AB}).\nonumber 
\end{align}
As we prove in Lemma~\ref{lem:GkBound}, for any quantum channel $\Lambda$ which has no correctable subspace, the set of Kraus operators $\{K_{i}\}$
can always be chosen such that
\begin{equation}
\sum_{i}G_{k}[(K_{i}\otimes\openone)\phi_{k}^{+}(K_{i}^{\dagger}\otimes\openone)]<1.
\end{equation}
Defining
\begin{equation}
\kappa_s=\sum_{i}G_{k}[(K_{i}\otimes\openone)\phi_{k}^{+}(K_{i}^{\dagger}\otimes\openone)]
\label{eq::kappa}
\end{equation}
we obtain 
\begin{equation}
\sum_{i}G_{k}[(K_{i}\otimes\openone)\psi^{AB}(K_{i}^{\dagger}\otimes\openone)]=\kappa_s G_{k}(\psi^{AB}).
\end{equation}
The proof of the proposition is then completed by defining probabilities
$p_{i}$ and pure states $\ket{\phi_{i}}^{AB}$ as follows:
\begin{align}
p_{i} & =\mathrm{Tr}\left[K_{i}\otimes\openone\psi^{AB}K_{i}^{\dagger}\otimes\openone\right],\\
\ket{\phi_{i}}^{AB} & =\frac{1}{\sqrt{p_{i}}}K_{i}\otimes\openone\ket{\psi}^{AB},
\end{align}
and noting that
\begin{equation}
\sum_{i}G_{k}[(K_{i}\otimes\openone)\psi^{AB}(K_{i}^{\dagger}\otimes\openone)]=\sum_{i}p_{i}G_{k}(\phi_{i}^{AB}).
\end{equation}
\end{proof}

\begin{lem} \label{lem:GkBound}
    Any quantum channel $\Lambda$ which has no correctable subspace has a set of Kraus operators $\{K_{i}\}$ such that that
    \begin{equation}
    \kappa_s=\sum_{i}G_{k}[(K_{i}\otimes\openone)\phi_{k}^{+}(K_{i}^{\dagger}\otimes\openone)]<1
    \end{equation}
    for any $k \geq 2$.
\end{lem}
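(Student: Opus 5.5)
The plan is to compute the quantity in closed form, bound it by $1$ with the Minkowski determinant inequality, and show that equality for every Kraus representation forces the Knill--Laflamme conditions. Let $P$ be the projector onto $\mathcal S=\operatorname{span}\{\ket{0},\ldots,\ket{k-1}\}$, the local support of $\phi_k^+$; the argument works for any $k$-dimensional subspace. Write $A_i=P K_i^\dagger K_i P$, a positive semidefinite operator on $\mathcal S$. The coefficient matrix of $(K_i\otimes\openone)\ket{\phi_k^+}$ is $K_iP/\sqrt{k}$, so by the computation in Lemma~\ref{lem:Gk} we get
\begin{equation}
G_k[(K_i\otimes\openone)\phi_k^+(K_i^\dagger\otimes\openone)]=k\left(\frac{\det A_i}{k^k}\right)^{1/k}=(\det A_i)^{1/k},
\end{equation}
where $\det$ is taken on $\mathcal S$. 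Trace preservation gives $\sum_i A_i=P$, which is the identity on $\mathcal S$. The quantity to bound is therefore $\sum_i(\det A_i)^{1/k}$.

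\textbf{Step 1 (the bound $\le 1$).} By the Minkowski determinant inequality, $(\det(A+B))^{1/k}\ge(\det A)^{1/k}+(\det B)^{1/k}$ for positive semidefinite $k\times k$ matrices. Iterating it gives $\sum_i(\det A_i)^{1/k}\le(\det P|_{\mathcal S})^{1/k}=1$ for every Kraus representation.

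\textbf{Step 2 (equality case).} I will show that equality forces $A_i=c_iP$ for all $i$, with $c_i\ge 0$. First suppose some $A_j\neq0$ is singular. Then $\sum_i(\det A_i)^{1/k}\le(\det(P-A_j))^{1/k}<1$, since $P-A_j\le P$ and the inequality is strict in at least one eigenvalue. Otherwise all nonzero $A_i$ are positive definite, and equality in Minkowski for positive definite matrices holds only if they are pairwise proportional. Combined with $\sum_iA_i=P$, this gives $A_i=c_iP$.

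\textbf{Step 3 (varying the Kraus representation).} Assume, for contradiction, that equality holds for every Kraus representation. Every other representation has the form $L_a=\sum_iu_{ai}K_i$ with $u$ unitary, after padding with zero operators. Taking $u$ to act as a $2\times 2$ block on a pair $(i,j)$, for instance $L=(K_i+e^{i\theta}K_j)/\sqrt2$ with $\theta\in\{0,\pi/2\}$ and the complementary row chosen to complete the block, Step 2 gives $PL^\dagger LP\propto P$. Expanding and using $PK_i^\dagger K_iP\propto P$ and $PK_j^\dagger K_jP\propto P$, the real and imaginary parts show $PK_i^\dagger K_jP=c_{ij}P$. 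These are exactly the Knill--Laflamme conditions, so $\mathcal S$, which has dimension $k\ge2$, is a correctable subspace in the sense of Ref.~\cite{Blumerecovery}. This contradicts the hypothesis. Hence some Kraus representation gives a sum strictly below $1$.

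The main obstacle is Step 2: the strict equality analysis of Minkowski's inequality for many summands when some summands are singular, which I handle by the separate case split above. A further point needs care. If a uniform $\kappa<1$ is wanted, as in the definition of $\kappa$ via a minimum over Kraus decompositions, that minimum is attained. This follows from compactness of the unitary group on a fixed number of Kraus operators, at most $d_A^2$ up to zero padding, and continuity of $\det^{1/k}$.
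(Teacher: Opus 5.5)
Your argument is correct, and it takes a genuinely different route from the paper.

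\textbf{The paper's route.} The paper argues entropically. It reruns the proof of Theorem~\ref{thm:General} with $\phi_k^+$ in place of $\psi^{AB}$ to get $E_{\mathrm f}((\Lambda\otimes\openone)[\phi_k^+])<\log_2 k$ when there is no correctable subspace. It then realizes an $E_{\mathrm f}$-optimal pure-state decomposition of this output as a Kraus representation $\{L_i\}$. Some branch must then fail to be maximally entangled, so its $G_k$ is strictly below $1$, while every other branch has $G_k\le 1$.

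\textbf{Your route.} You use the closed form $(\det A_i)^{1/k}$, which is the same quantity the paper later writes as $\det_{\mathcal P}(\mathcal P E_l\mathcal P)^{1/k}$ in the amplitude-damping part of Theorem~\ref{th::scaling}. You then:
\begin{itemize}
\item get $\kappa_s\le 1$ for every representation from Minkowski's determinant inequality;
\item identify the equality case as $PK_i^\dagger K_iP\propto P$, with a correct separate treatment of a nonzero singular $A_j$;
\item turn equality in all representations into the Knill--Laflamme conditions via $2\times2$ rotations with $\theta\in\{0,\pi/2\}$.
\end{itemize}

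\textbf{What each buys.} Your argument is self-contained matrix analysis. It avoids the strict-concavity and product-state machinery (Lemma~\ref{prop:Product}) behind Theorem~\ref{thm:General}, and it does not need the correspondence between decompositions and Kraus representations. It also makes explicit the bound $\kappa_s\le 1$ that the paper uses tacitly. It yields an exact characterization: the infimum over representations equals $1$ if and only if the support of $\phi_k^+$ is correctable. Finally, it shows that pairwise rotations of any fixed representation already suffice. The paper's route is shorter once Theorem~\ref{thm:General} is available, and it ties $\kappa_s<1$ directly to a strict loss of entanglement of formation.

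\textbf{Side remark on attainment.} Your closing remark about attainment of the minimum is not needed for the lemma. Any single representation with value below $1$ already bounds the minimum from above. Note also that restricting to at most $d_A^2$ Kraus operators would need its own justification. Representations with more operators come from isometries, not unitaries, applied to a minimal set, and they could in principle give smaller sums.
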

\begin{proof}
We begin by showing that if $\Lambda$ does not have a correctable subspace, then $E_f(\Lambda \otimes \openone \ket{\phi_k^{+}})<E_f(\ket{\phi_k^{+}})$. We proceed by contradiction. Assume that there is no correctable subspace and that $E_f(\Lambda \otimes \openone \ket{\phi_k^{+}})=E_f(\ket{\phi_k^{+}})$ for some $k$. Repeating the steps of the proof of Theorem~\ref{thm:General} with $\phi_k^{+}$ in place of $\psi^{AB},$ we obtain that a correctable subspace must exist. This contradicts the assumption that no correctable subspace exists. Therefore, $E_f(\Lambda \otimes \openone \ket{\phi_k^{+}})<E_f(\ket{\phi_k^{+}}).$

Let $\rho=\Lambda \otimes \openone(\ket{\phi_k^{+}}).$ Let us denote by $\{q_i, \ket{\psi_i}\}$ the ensemble decomposition of $\rho$ that minimizes $\sum_iq_iE_f(\ket{\psi_i}).$ There exist Kraus operators $\{L_i\}$ such that $L_i\otimes \openone \ket{\phi_k^{+}}\bra{\phi_k^{+}}L_i^{\dagger}\otimes \openone=q_i\ket{\psi_i}\bra{\psi_i}.$ Additionally, we have $\sum_iq_iE_f(\ket{\psi_i})<E_f(\ket{\phi_k^{+}}).$ Consequently, for at least one $i$ there is inequality $E_f(\ket{\psi_i})<E_f(\ket{\phi_k^{+}}).$ Both the entanglement of formation and the $G_k$-concurrence reach their global maximum for pure states if and only if all their Schmidt coefficients are equal. Therefore, any pure state $\ket{\psi_i}$ that is not maximally entangled (i.e. satisfying $E_f(\ket{\psi_i})<E_f(\ket{\phi_k^{+}})$) must necessarily satisfy $G_k(\ket{\psi_i})<G_k(\ket{\phi_k^{+}})$. Consequently, since for at least one $i$ there is relation $G_k(\ket{\psi_i})<G_k(\ket{\phi_k^{+}}),$ the following inequality must hold
    \begin{equation}
        \sum_i q_iG_k(\ket{\psi_i})<G_k(\ket{\phi_k^{+}})=1
    \end{equation}
This concludes the proof.
\end{proof}

Consider now a mixed state $\rho^{AB}$ which can be written as $\rho^{AB}=\sum_{j}p_{j}\psi_{j}^{AB}$
with each of the pure states $\ket{\psi_{j}}^{AB}$ having Schmidt
rank at most $k$. Using Proposition~\ref{prop:SchmidtRankK}, each of the states $\Lambda\otimes\openone[\psi_{j}^{AB}]$
can be expressed as
\begin{equation}
\Lambda\otimes\openone[\psi_{j}^{AB}]=\sum_{i}q_{i}\psi_{ij}^{AB},
\end{equation}
where the pure states $\ket{\psi_{ij}}^{AB}$ have Schmidt rank at
most $k$ and 
\begin{equation}
\sum_{i}q_{i}G_{k}(\psi_{ij}^{AB})=\kappa_s G_{k}(\psi_{j}^{AB}).
\end{equation}
This means that state $\Lambda\otimes\openone[\rho^{AB}]$ can
be written as a convex combination of states $\ket{\psi_{ij}}^{AB}$
with probabilities $q_{i}p_{j}$ such that 
\begin{equation}
\sum_{ij}q_{i}p_{j}G_{k}(\psi_{ij}^{AB})=\kappa_s\left(\sum_{j}p_{j}G_{k}(\psi_{j}^{AB})\right).
\end{equation}
With these results, we can now prove the following proposition.

\begin{prop}{\label{pr::expon}}
Let $\Lambda_n$ be defined as in Eq. \eqref{eq:LambdaN} and let $\rho^{AB}$ be a mixed state with Schmidt rank at most $k.$ Then, $\Lambda_{n}[\rho^{AB}]=\sum_i s_i\phi_i^{AB},$ where each $\phi_i^{AB}$ has Schmidt rank at most $k$ and $\sum_i s_iG_k(\phi_i^{AB})\leq \kappa_s^{n},$ where $\kappa_s$ is defined in Eq. \eqref{eq::kappa}.
\end{prop}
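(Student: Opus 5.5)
Proof by induction on $n$, tracking a decomposition of the current state into pure states of Schmidt rank at most $k$ together with its weighted $G_k$-average. Define, for a state $\tau^{AB}$ admitting a decomposition into Schmidt-rank-$\le k$ pure states, the quantity $g(\tau)=\inf\sum_j p_j G_k(\psi_j^{AB})$ over all such decompositions. It suffices to show that each round multiplies this quantity by at most $\kappa_s$, starting from $g(\rho^{AB})\le 1$. The starting bound holds because $G_k(\psi)\le G_k(\phi_k^+)=1$ for every pure state of Schmidt rank at most $k$, by the arithmetic–geometric mean inequality. Concretely, I will show by induction that $\Lambda_n[\rho^{AB}]$ has a decomposition $\sum_i s_i\phi_i^{AB}$ with Schmidt rank at most $k$ and $\sum_i s_iG_k(\phi_i)\le\kappa_s^n$.

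\textbf{Channel step.} Suppose, after the filter $\mathcal{F}_{n}$, the state $\sigma=\sum_j p_j\psi_j$ has Schmidt rank at most $k$ with $\sum_j p_jG_k(\psi_j)\le \kappa_s^{n-1}$. The computation right before Proposition~\ref{pr::expon}, which uses Proposition~\ref{prop:SchmidtRankK} and Lemma~\ref{lem:Gk} with the fixed Kraus set of Lemma~\ref{lem:GkBound}, gives $(\Lambda\otimes\openone)[\sigma]=\sum_{ij}q_ip_j\psi_{ij}$. Here each $\psi_{ij}$ has Schmidt rank at most $k$, because a local Kraus operator cannot raise Schmidt rank. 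The average is multiplied by exactly $\kappa_s$, so the bound $\kappa_s^{n}$ follows.

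\textbf{Filter step.} The remaining task is to show that an LOCC map $\mathcal{F}$ sends a decomposition $\sum_j p_j\psi_j$ with Schmidt rank at most $k$ to a decomposition of the same type whose $G_k$-average does not increase. This is the main obstacle.

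\begin{itemize}
\item Preservation of Schmidt rank is standard. LOCC, and more generally separable operations, have Kraus operators of product form $A_x\otimes B_x$, and $(A_x\otimes B_x)\ket{\psi_j}$ has Schmidt rank at most that of $\psi_j$.
\item For monotonicity, I will use that $G_k$ restricted to pure states is an entanglement monotone on average under LOCC. This follows either from Vidal's criterion, since $G_k$ is the $k$-th root of the product of Schmidt coefficients and hence a concave, symmetric function of the reduced density matrix on the rank-$k$ sector, or directly from Gour's result that $G$-concurrence does not increase on average under LOCC.
\item Concretely, for each pure $\psi_j$ the LOCC produces outcome states $\psi_{j,x}$ with probabilities $r_{x|j}$ satisfying $\sum_x r_{x|j}G_k(\psi_{j,x})\le G_k(\psi_j)$. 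For general LOCC (not just separable product Kraus with classical tracing), outputs may be mixed after discarding outcomes or registers. In that case I decompose each output further into pure branches using a product Kraus representation, so that every branch is pure; the monotonicity inequality applies branchwise.
\end{itemize}

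Summing the branch inequalities over $j$ gives the filtered state as a mixture of Schmidt-rank-$\le k$ pure states with $G_k$-average at most $\sum_j p_jG_k(\psi_j)$. Combining the filter step and the channel step closes the induction.

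I expect the only delicate point to be justifying the average monotonicity of $G_k$ under arbitrary LOCC when applied to pure branches. I will cite Gour's $G$-concurrence monotonicity, applied on the Schmidt-rank-$k$ support. Alternatively, I will derive it directly from Lemma~\ref{lem:Gk}: for a product Kraus operator $A\otimes B$ one has $G_k(A\otimes B\,\psi)=|\det|$-type factors times $G_k(\psi)$. The remaining inequality $\sum_x |\det A_{x,k}|^{2/k}|\det B_{x,k}|^{2/k}\le 1$ then follows from the AM–GM inequality together with the completeness relation $\sum_x A_x^\dagger A_x\otimes B_x^\dagger B_x=\openone$.
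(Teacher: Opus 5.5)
Your proposal is correct and follows essentially the same route as the paper. Both arguments run an induction over rounds, use product Kraus operators $A_x\otimes B_x$ for the LOCC filter so that Schmidt rank never increases, and apply the determinant factorization of Lemma~\ref{lem:Gk} to extract the channel factor $\kappa_s$ from each pure branch. Your filter step is in fact the precise form of the paper's terser remark that each filter factor is at most $1$: a per-branch bound of $1$ does not control the sum over filter outcomes, whereas your bound, each unnormalized factor by its branch probability via AM--GM, sums to at most $1$ by completeness.
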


\begin{proof}
We will prove this proposition by induction.
First, we show that the above conjecture holds for $n=1$. Let $\{K_{i}\}$ be a set of Kraus operators for $\Lambda$ and $\{M^{i}_{j}\otimes N^{i}_{k}\}$ be a set of Kraus operators for $F_{i}$. Let $\rho^{AB}=\sum_{j}q_j\ket{\phi_j^{AB}}\bra{\phi_j^{AB}},$ where each $\ket{\phi_j^{AB}}$ has Schmidt rank at most $k.$ For $n=1$ we have:
\begin{equation}
    \Lambda_{1}[\rho^{AB}] = (\Lambda\otimes\mathbb{I})[F_1\circ \rho^{AB}]=\sum_{ijlg}r_{lgi}q_j\phi_{ijgl}^{AB},
\end{equation}
such that
\begin{align}
    &\ket{\phi_{lgij}}^{AB} = \frac{1}{\sqrt{r_{lgi}}} (K_{i}\otimes\mathbb{I})(M_{l}\otimes N_{g})\,\ket{\phi_{j}}^{AB},\\
    \label{eq::ketphi}
    &r_{lgi}= \operatorname{Tr}[(K_{i}\otimes\mathbb{I})(M_{l}\otimes N_{g})\,\phi_{j}^{AB}(M_{l}\otimes N_{g})^{\dagger}(K_{i}\otimes\mathbb{I})^{\dagger}].
\end{align}
Note that each $\ket{\phi_{lgij}}^{AB}$ cannot have higher Schmidt rank than $\ket{\phi_j^{AB}}.$ Defining new indices $h$ as $h=ijgl$ we obtain:
\begin{equation}
    \Lambda_1(\rho^{AB})=\sum_h p_h \phi_h^{AB},
\end{equation}
where each $\phi_h^{AB}$ has Schmidt rank at most $k.$ 

We have shown the first part of the Proposition for $n=1.$ We proceed to the second part for $n=1.$
\begin{align}
    &\sum_{ijgl}\,q_i\,r_{lgi}\,G_{k}(\phi_{lgij}^{AB})\notag\\
    &= \sum_{ijgl}q_i\,G_{k}((K_{i}\otimes\mathbb{I})(M_{l}\otimes N_{g})\,\phi_{ij}^{AB}(M_{l}\otimes N_{g})^{\dagger}(K_{i}\otimes\mathbb{I})^{\dagger}).
    \label{eq::gkinkappaprooof1}
\end{align}
Now, we recall that, see Lemma~\ref{lem:Gk} 
\begin{align}
    &G_{k}((K_{i}\otimes\mathbb{I})(M_{l}\otimes N_{g})\,\phi_{ij}^{AB}(M_{l}\otimes N_{g})^{\dagger}(K_{i}\otimes\mathbb{I})^{\dagger})\notag \\ 
    &=G_{k}((K_{i}\otimes\mathbb{I})\phi^{+}_{k}(K_{i}\otimes\mathbb{I})^{\dagger})\,G_{k}((M_{l}\otimes N_{g})\phi^{+}_{k}(M_{l}\otimes N_{g})^{\dagger})\notag\\
    &\qquad\qquad\times G_{k}(\phi_{ij}^{AB}).
    \label{eq::gkinkappaprooof2}
\end{align}
Using Eqs. \eqref{eq::kappa}, \eqref{eq::gkinkappaprooof1} and \eqref{eq::gkinkappaprooof2} we get
\begin{align}
    &\sum_{ijgl}q_j\,r_{lgi}\,G_{k}(\phi_{ijgl}^{AB})
    \leq \sum_{ijh}q_jG_{k}((K_{i}\otimes\mathbb{I})\phi^{+}_{k}(K_{i}\otimes\mathbb{I})^{\dagger})\,
     G_{k}(\phi_{ij}^{AB})\notag\\
    &\leq \kappa_s\left(\sum_{ij}q_iG_{k}(\phi_{ij}^{AB})\right)\leq \kappa_s.
\end{align}
 In the first inequality we used that $G_{k}((M_{l}\otimes N_{g})\phi^{+}_{k}(M_{l}\otimes N_{g})^{\dagger})\leq 1.$

We have showed that our conjecture holds for $n=1.$ Let us assume it holds for some $n.$ Let $\Lambda_n(\rho^{AB})=\sum_i s_i\Xi_i^{AB},$ where each $\Xi_i^{AB}$ has Schmidt rank at most $k$ because of the induction assumption.
 We can write:
\begin{align}
    \Lambda_{n+1}(\rho^{AB})=&\Lambda \circ F_{n+1}\circ \Lambda_{n}(\rho^{AB})\nonumber\\ &= \Lambda \circ F_{n+1}(\sum_i s_i\Xi_i^{AB})=
    \sum_{ighl} s_ir_{ghl}\phi_{ihgl}^{AB},
\end{align}
where
\begin{align}
    \ket{\phi_{hikl}}^{AB} = \frac{1}{\sqrt{r_{h}r'_{lg}}} (K_{h}\otimes\mathbb{I})(M_{l}\otimes N_{g})\,\ket{\Xi_{i}}^{AB},\\
    r_{hgl}=\operatorname{Tr}[(K_{h}\otimes\mathbb{I})(M_{l}\otimes N_{g})\,\Xi_{i}^{AB}(M_{l}\otimes N_{k})^{\dagger}(K_{h}\otimes\mathbb{I})^{\dagger}],
\end{align}
where each $\ket{\phi_{hikl}}^{AB}$ has Schmidt rank not greater than $k.$ This completes the first part of the Proposition.
Using the induction assumption and Eq.~\eqref{eq::kappa} , we obtain:
\begin{align}
    &\sum_{ighl} s_ir_{ghl}G_k(\phi_{ihgl}^{AB})=\notag\\
    &\sum_{ighl} s_iG_{k}((K_{h}\otimes\mathbb{I})\phi^{+}_{k}(K_{h}\otimes\mathbb{I})^{\dagger})\,G_{k}((M_{l}\otimes N_{g})\phi^{+}_{k}(M_{l}\otimes N_{g})^{\dagger})\notag\\
    &\times G_{k}(\Xi_{i}^{AB})\leq \sum_{h} G_{k}((K_{h}\otimes\mathbb{I})\phi^{+}_{k}(K_{h}\otimes\mathbb{I})^{\dagger})(\sum s_iG_k( \Xi_i^{AB}))\notag \\
    &\leq \kappa_s^{n+1}. 
\end{align}
In the last step, we used the induction assumption and Eq.~\eqref{eq::kappa}. It completes the proof. 
\end{proof}

Now, let us define the $k$-Schmidt fidelity of a bipartite pure state $\ket{\psi}^{AB}$ as:
\begin{equation}\label{def:k_schmid_fid}
    F_{k}(\psi^{AB})=\max_{\phi\in\mathcal{S}_{k}}F(\psi^{AB},\phi^{AB}).
\end{equation}
where $\mathcal{S}_{k}$ denotes the set of states that can be expressed as convex combinations of bipartite pure states with Schmidt rank at most $k$. Since $\ket{\psi^{AB}}$
 is pure, the maximum in \eqref{def:k_schmid_fid} is attained by a pure state.
\begin{prop}{\label{l::fidk}}
For a bipartite pure state $\ket{\psi}^{AB}$, the following holds:
\begin{equation}
F_{k}(\psi^{AB})=\sum_{i=0}^{k-1}\lambda_{i},
\end{equation}
where $\lambda_i$ are Schmidt coefficients of $\ket{\psi}$ in decreasing order. Additionally, the state $\ket{\phi}^{AB}\in\mathcal{S}_{k}$ that achieves the maximum in Eq. \eqref{def:k_schmid_fid} is given by
\begin{equation}
\ket{\phi}^{AB}=\frac{1}{\sqrt{\sum_{j=0}^{k-1}\lambda_{j}}}\sum_{i=0}^{k-1}\sqrt{\lambda_{i}}\ket{ii}^{AB}
\label{eq::optimalphi}
\end{equation}
where $\ket{i}$ are the Schmidt vectors of $\ket{\psi}$.
\end{prop}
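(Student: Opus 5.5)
Since $\ket{\psi}^{AB}$ is pure, $F(\psi^{AB},\sigma)=\bra{\psi}\sigma\ket{\psi}$ is linear in $\sigma$. Any $\sigma\in\mathcal{S}_k$ is a convex combination $\sum_j p_j\phi_j$ of pure states of Schmidt rank at most $k$, so
\begin{equation}
\bra{\psi}\sigma\ket{\psi}=\sum_j p_j|\braket{\psi|\phi_j}|^2\leq\max_j|\braket{\psi|\phi_j}|^2 .
\end{equation}
This justifies restricting to pure $\ket{\phi}$ of Schmidt rank at most $k$. The proof then has two steps: an upper bound on $|\braket{\psi|\phi}|^2$ over all such $\ket{\phi}$, and a check that the state in Eq.~\eqref{eq::optimalphi} attains it.

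For the upper bound, I will use the coefficient-matrix picture from the proof of Lemma~\ref{lem:Gk}. Write $\ket{\psi}=\sum_{ij}X_{ij}\ket{i}\ket{j}$ and $\ket{\phi}=\sum_{ij}Y_{ij}\ket{i}\ket{j}$. Then $\braket{\phi|\psi}=\mathrm{Tr}(Y^\dagger X)$, the singular values of $X$ are $\sqrt{\lambda_i}$, and $\operatorname{rank}Y\leq k$ with $\Vert Y\Vert_2=1$.

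By von Neumann's trace inequality,
\begin{equation}
|\mathrm{Tr}(Y^\dagger X)|\leq\sum_i\sigma_i(Y)\sigma_i(X)=\sum_{i=0}^{k-1}\sigma_i(Y)\sqrt{\lambda_i},
\end{equation}
since $Y$ has at most $k$ nonzero singular values. Cauchy--Schwarz together with $\sum_i\sigma_i(Y)^2=1$ then gives
\begin{equation}
|\braket{\phi|\psi}|^2\leq\sum_{i=0}^{k-1}\lambda_i .
\end{equation}
An alternative route to the same bound is Eckart--Young: the best rank-$k$ approximation of $X$ in Frobenius norm keeps the top $k$ singular values. I would use the trace inequality because it produces the fidelity bound directly.

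For attainment, substitute the state of Eq.~\eqref{eq::optimalphi}. Its overlap is
\begin{equation}
\braket{\phi|\psi}=\frac{\sum_{i<k}\lambda_i}{\sqrt{\sum_{j<k}\lambda_j}}=\sqrt{\sum_{i<k}\lambda_i},
\end{equation}
so its squared overlap equals the upper bound and the maximum is attained. Here I am assuming the paper's fidelity convention is the squared overlap $F=|\braket{\psi|\phi}|^2$. With the square-root convention the same argument goes through with a square root on both sides.

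The main obstacle is the upper bound, specifically justifying the singular-value inequality $|\mathrm{Tr}(Y^\dagger X)|\leq\sum_i\sigma_i(Y)\sigma_i(X)$. I would cite it from Horn and Johnson, the reference already used in the proof of Lemma~\ref{lem:Gk}. The remaining steps are routine.
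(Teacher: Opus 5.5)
Your proof is correct and follows essentially the same route as the paper: you write the overlap as $\mathrm{Tr}(Y^\dagger X)$, apply von Neumann's trace inequality with the rank-$k$ truncation, then Cauchy--Schwarz, and check saturation by the normalized truncated Schmidt state. Your linearity argument, which reduces the maximization over $\mathcal{S}_k$ to pure states of Schmidt rank at most $k$, justifies a step that the paper only asserts.
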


\begin{proof}
Let $\ket{\psi}=\sum_{i,j} M_{ij}\ket{i}\ket{j}$ and $\ket{\phi}=\sum_{i,j}N_{ij}\ket{i}\ket{j}.$ Their inner product can be written as
    \begin{equation}
        \vert\!\braket{\phi|\psi}\!\vert=\left\vert\sum_{i,j}N_{ij}^{*}M_{ij}\right\vert=\left\vert\mathrm{Tr}\left(N^{\dagger}M\right)\right\vert.
        \label{eq::innertraceequality}
    \end{equation}
    Using the von Neumann trace inequality~\cite{Horn_Johnson_1991}, we obtain: 
    \begin{equation}
        \left\vert\mathrm{Tr}\left(N^{\dagger}M\right)\right\vert\leq \sum_{i=0}^{d-1}s_i(N)s_i(M)=\sum_{i=0}^{k-1}s_i(N)\sqrt{\lambda_i},
        \label{eq::vonNeumanninequality}
    \end{equation}
    where $s_i(X)$ denote the singular values of $X$ in non-increasing order. In the last step, we used that the singular values of $M$ are the Schmidt coefficients of $\ket{\psi}$.

Since $\ket{\phi}$ has Schmidt rank $k$, the matrix $N$ has rank at most $k$, and hence only the first $k$ singular values are non-zero. Applying the Cauchy--Schwarz inequality, we obtain
    \begin{equation}
        \sum_{i=0}^{k-1}s_i(N)\sqrt{\lambda_i}\leq \sqrt{\sum_{i=0}^{k-1}s_i(N)^{2}}\sqrt{\sum_{i=0}^{k-1}\lambda_i}=\sqrt{\sum_{i=0}^{k-1}\lambda_i}.
        \label{eq::CaScinequalityapplied}
    \end{equation}
    In the last step we used that squares of singular values of $N$ must add up to 1. Thus, we have the inequality $ \vert\!\braket{\phi | \psi}\!\vert\leq \sqrt{\sum_{i=0}^{k-1}\lambda_i}.$ Taking squares we obtain:
    \begin{equation}
         \vert\!\braket{\phi | \psi}\!\vert^{2}\leq \sum_{i=0}^{k-1}\lambda_i.
         \label{eq::Fp}
    \end{equation}
    We have thus established an upper bound on the fidelity. One can verify that the bound is saturated by the state given in Eq.~\eqref{eq::optimalphi}, which completes the proof.   
\end{proof}

Analogous to the definition of $k$-Schmidt fidelity for pure states, we define $k$-Schmidt fidelity for a bipartite mixed state $\rho^{AB}$ as follows:
\begin{equation}
    F_{k}(\rho^{AB})=\max_{\widetilde{\sigma}^{AB}\in\mathcal{S}_{k}}F(\rho^{AB},\widetilde{\sigma}^{AB}).
    \label{eq::definingfidelitymixed}
\end{equation}
\begin{prop}{\label{le::mixfik}}
    For any pure-state decomposition $\{p_{i},\psi_i^{AB}\}$
    of $\rho^{AB}$ it holds that 
    \begin{equation}
    F_{k}(\rho^{AB})\geq\sum_{i}p_{i}F_{k}(\psi_{i}^{AB}).
    \end{equation}
\end{prop}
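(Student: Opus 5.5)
The idea is to choose a separable-rank-$k$ comparison state adapted to the decomposition, but to reweight it. The naive choice $\widetilde{\sigma}=\sum_i p_i\phi_i$ only yields the Jensen-weakened bound $\bigl(\sum_i p_i\sqrt{F_k(\psi_i)}\bigr)^2$. Throughout, $F(\rho,\sigma)=\bigl(\mathrm{Tr}\bigl|\sqrt{\rho}\sqrt{\sigma}\bigr|\bigr)^2$, and I will use Uhlmann's theorem, $F(\rho,\sigma)=\max|\braket{\Psi_\rho|\Psi_\sigma}|^2$ over purifications.

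\emph{Step 1 (optimal pure components).} For each $i$, Proposition~\ref{l::fidk} gives a pure state $\ket{\phi_i}^{AB}$ of Schmidt rank at most $k$ with $|\braket{\phi_i|\psi_i}|^2=F_k(\psi_i^{AB})=:f_i$. Multiplying $\ket{\phi_i}$ by a global phase does not change the state $\phi_i$, so I choose the phases such that $\braket{\phi_i|\psi_i}=\sqrt{f_i}\geq 0$.

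\emph{Step 2 (strong concavity via Uhlmann).} Let $\{q_i\}$ be any probability vector, to be fixed later. Then $\widetilde{\sigma}=\sum_i q_i\phi_i$ lies in $\mathcal{S}_k$, since $\mathcal{S}_k$ is convex. Introduce a register $R$ with orthonormal basis $\{\ket{i}^R\}$ and define the purifications
\begin{equation}
\ket{\Psi}=\sum_i\sqrt{p_i}\ket{\psi_i}^{AB}\ket{i}^{R},\qquad
\ket{\Phi}=\sum_i\sqrt{q_i}\ket{\phi_i}^{AB}\ket{i}^{R}
\end{equation}
of $\rho^{AB}$ and $\widetilde{\sigma}^{AB}$ respectively. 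By Uhlmann's theorem,
\begin{equation}
F(\rho^{AB},\widetilde{\sigma}^{AB})\geq|\braket{\Phi|\Psi}|^2=\Bigl(\sum_i\sqrt{p_iq_i}\sqrt{f_i}\Bigr)^2 .
\end{equation}
Since $\widetilde{\sigma}^{AB}\in\mathcal{S}_k$, the left-hand side is in turn at most $F_k(\rho^{AB})$ by Eq.~\eqref{eq::definingfidelitymixed}.

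\emph{Step 3 (optimizing the weights).} This is the key point where the naive choice $q_i=p_i$ would lose. Assume $\sum_j p_jf_j>0$; otherwise the claim is trivial. Choose
\begin{equation}
q_i=\frac{p_if_i}{\sum_jp_jf_j}.
\end{equation}
Then $\sqrt{p_iq_if_i}=p_if_i/\sqrt{\sum_jp_jf_j}$, and hence
\begin{equation}
\Bigl(\sum_i\sqrt{p_iq_if_i}\Bigr)^2=\sum_ip_if_i=\sum_i p_iF_k(\psi_i^{AB}).
\end{equation}
Combined with Step 2 this gives $F_k(\rho^{AB})\geq\sum_i p_iF_k(\psi_i^{AB})$, which is the statement. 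This choice is the equality case of the Cauchy--Schwarz bound $\sum_i\sqrt{q_i}\sqrt{p_if_i}\leq\sqrt{\sum_ip_if_i}$.

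The main obstacle is recognizing Step 3. Squared fidelity is not jointly concave under equal weights, so the equal-weight mixture of the optimal states only gives the square of the averaged root fidelities. Letting the mixing weights of $\widetilde{\sigma}$ differ from $p_i$ is exactly what recovers the linear average. The remaining steps are routine: the phase alignment in Step 1 and the check that $\ket{\Phi}$ is a valid purification of $\widetilde{\sigma}$.
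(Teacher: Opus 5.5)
Your proposal is correct and follows the paper's proof essentially verbatim: the same phase-aligned optimal states $\ket{\phi_i}$ from Proposition~\ref{l::fidk}, the same reweighting $q_i=p_iF_k(\psi_i^{AB})/\sum_j p_jF_k(\psi_j^{AB})$, and the same purifications with Uhlmann's theorem evaluated at $U^R=\openone$. Your explicit treatment of the degenerate case $\sum_j p_jF_k(\psi_j^{AB})=0$ is a small improvement, since the paper divides by this quantity without comment.
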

\begin{proof}
For every pure state $\ket{\psi_i^{AB}}$ appearing in a decomposition of $\rho$, there exists a Schmidt-rank-at-most-$k$ state $\ket{\phi_i^{AB}}$ that optimally achieves the fidelity, i.e.,
\begin{equation}
    F_k(\psi_i^{AB})=\lvert \braket{\psi_i^{AB}\vert \phi_i^{AB}}\rvert^{2}.
\end{equation}
We can always set phases such that inner product $\braket{\psi_i^{AB}\vert \phi_i^{AB}}$ is non-negative. We obtain:
\begin{equation}
    \sqrt{F_k(\psi_i^{AB})}=\braket{\psi_i^{AB}\vert \phi_i^{AB}}.
    \label{eq::fidsqu}
\end{equation}
Let us define new probabilities $q_i$ such that:
\begin{equation}
    q_i=\frac{p_i F_k(\psi_i^{AB})}{T},
    \label{eq::qprob}
\end{equation}
where $T=\sum_j p_j F_k(\psi_j^{AB})$ is a normalization constant. Let us define:
\begin{equation}
    \sigma^{AB}=\sum_i q_i\ket{\phi_i^{AB}}\bra{\phi_i^{AB}}.
    \label{eq::sigmamixedfidelity}
\end{equation}
Since every state $\ket{\phi_i^{AB}}$ has Schmidt rank at most $k,$ $\sigma^{AB}\in S_k.$ 

Let us purify $\sigma^{AB}$ and $\rho^{AB}.$ We obtain:
\begin{equation}
    \begin{split}
        &\ket{\Psi^{\rho}}=\sum_i \sqrt{p_i}\ket{i}^{R}\ket{\psi_i}^{AB}\\
        &\ket{\Phi^{\sigma}}=\sum_i \sqrt{q_i}\ket{i}^{R}\ket{\phi_i}^{AB}.
    \end{split}
\end{equation}

By Uhlmann's theorem we obtain~\cite{Uhlmannfidelity}
\begin{equation}
    F(\rho^{AB},\sigma^{AB})=\max_{U^{R}}\lvert \bra{\Phi^{\sigma}}U^{R}\otimes \openone^{AB}\ket{\Psi^{\rho}}\rvert^{2},
\end{equation}
where maximization is taken over all unitaries $U^{R}$ on purifying system. When we take specific $U^{R}=\openone^{R}$ we obtain:
\begin{equation}
    F(\rho^{AB},\sigma^{AB})\geq \lvert \braket{\Phi^{\sigma}\vert \Psi^{\rho}}\rvert^{2}.
    \label{eq::fidineq}
\end{equation}
We can further write:
\begin{equation}
     \braket{\Phi^{\sigma}\vert \Psi^{\rho}}=\sum_i \sqrt{p_iq_i}\braket{\psi_i^{AB}\vert \phi_i^{AB}}
\end{equation}
Using Eq.~\eqref{eq::fidsqu} and \eqref{eq::qprob} we obtain:
\begin{equation}
    \begin{split}
        &\braket{\Phi^{\sigma}\vert \Psi^{\rho}}=\sum_i \sqrt{p_i\frac{p_iF_k(\psi_i^{AB})}{T}}\sqrt{F_k(\psi_i^{AB})}=\\
        &\frac{\sum_i p_i F_k(\psi_i^{AB})}{\sqrt{\sum_i p_i F_k(\psi_i^{AB})}}=
        \sqrt{\sum_i p_i F_k(\psi_i^{AB})}.
    \end{split}
    \label{eq::purificationexpansion}
\end{equation}
Using relations \eqref{eq::fidineq} and \eqref{eq::purificationexpansion} we obtain:
\begin{equation}
    F(\rho^{AB},\sigma^{AB})\geq \sum_i p_i F_k(\psi_i^{AB})
\end{equation}
Since in Eq.~\eqref{eq::sigmamixedfidelity} we constructed specific $\sigma^{AB}$ there is a relation:
\begin{equation}
    \max_{\widetilde{\sigma}^{AB}\in\mathcal{S}_{k}}F(\rho^{AB},\widetilde{\sigma}^{AB})\geq F(\rho^{AB},\sigma^{AB})\geq \sum_i p_i F_k(\psi_i^{AB}),
\end{equation}
which completes the proof. Similar technique was used in~\cite{Philipschrodingerasaquantumprogrammer}.
\end{proof}

For a pure state $\ket{\psi}$ with Schmidt rank at most $k$, we
further have 
\begin{equation}
\lambda_{k-1}\leq\left(\prod_{i=0}^{k-1}\lambda_{i}\right)^{1/k}=\frac{G_{k}(\psi)}{k}.
\end{equation}
Using Proposition~\ref{l::fidk} we arrive at
\begin{equation}
F_{k-1}(\psi)=1-\lambda_{k-1}\geq1-\frac{G_{k}(\psi)}{k}.
\end{equation}

Using now Propositions~\ref{pr::expon} and~\ref{le::mixfik}, it follows that the state
$\Lambda_{n}[\rho^{AB}]$ can be written as $\Lambda_{n}[\rho^{AB}]=\sum_{i}s_{i}\phi_{i}^{AB}$
with probabilities $s_{i}$ and pure states $\phi_{i}^{AB}$ having
Schmidt rank at most $k$ such that 
\begin{align}
F_{k-1}\left(\Lambda_{n}[\rho^{AB}]\right) & \geq\sum_{i}s_{i}F_{k-1}\left(\phi_{i}^{AB}\right)\\
 & \geq1-\frac{1}{k}\sum_{i}s_{i}G_{k}\left(\phi_{i}^{AB}\right)\nonumber \\
 & \geq1-\frac{\kappa_s^{n}}{k}.\nonumber 
\end{align}
These arguments together with the Fuchs--van de Graaf inequality~\cite{Fuchsinequality} imply that we can bound the trace-distance to the set $\mathcal S_{k-1}$ as follows:
\begin{align} \label{eq:Bound-1}
\min_{\sigma^{AB}\in\mathcal{S}_{k-1}}\Vert\Lambda_{n}[\rho^{AB}]-\sigma^{AB}\Vert_{1} & \leq2\sqrt{1-F_{k-1}\left(\Lambda_{n}[\rho^{AB}]\right)}\nonumber \\
 & \leq2\frac{\kappa_s^{n/2}}{\sqrt{k}}.
\end{align}

Note now that the overall protocol $\Lambda_n$ can always be decomposed into two protocols $\Lambda_{n_2}\circ \Lambda_{n_1}$ with $n=n_1+n_2$. We now define the state $\nu_{k-1}^{AB}\in\mathcal{S}_{k-1}$ to be the closest state to $\Lambda_{n_{1}}[\rho^{AB}]$. We further define $\nu_{k-2}\in\mathcal{S}_{k-2}$ to be the closest state to $\Lambda_{n_{2}}[\nu_{k-1}^{AB}]$. Using triangle inequality and data processing inequality we find 
\begin{align}
\Vert\Lambda_{n_{2}}\circ\Lambda_{n_{1}}[\rho^{AB}]-\nu_{k-2}^{AB}\Vert_{1} & \leq\Vert\Lambda_{n_{2}}\circ\Lambda_{n_{1}}[\rho^{AB}]-\Lambda_{n_{2}}[\nu_{k-1}^{AB}]\Vert_{1}\nonumber \\
 & +\Vert\Lambda_{n_{2}}[\nu_{k-1}^{AB}]-\nu_{k-2}^{AB}\Vert_{1}\nonumber \\
 & \leq\Vert\Lambda_{n_{1}}[\rho^{AB}]-\nu_{k-1}^{AB}\Vert_{1}\nonumber \\
 & +\Vert\Lambda_{n_{2}}[\nu_{k-1}^{AB}]-\nu_{k-2}^{AB}\Vert_{1}\nonumber \\
 & \leq2\left(\frac{\kappa_s^{n_{1}/2}}{\sqrt{k}}+\frac{\kappa_s^{n_{2}/2}}{\sqrt{k-1}}\right).
\end{align}
Note that we can decompose the overall protocol $\Lambda_n$ into $k-1$ smaller protocols: $\Lambda_{n_{k-1}}\circ\Lambda_{n_{k-2}} \circ\cdots\circ \Lambda_{n_1}$ with $n=\sum_{i=1}^{k-1}n_{i}$. As before, we define the state $\nu_{k-1}^{AB}\in\mathcal{S}_{k-1}$ to be the closest state to $\Lambda_{n_{1}}[\rho^{AB}]$. We then define an analogous state for each $n_i$ in the following fashion: let $\nu^{AB}_{k-i}\in\mathcal{S}_{k-i}$ be the state closest to $\Lambda_{n_{i}}[\nu^{AB}_{k-i+1}]$. We obtain
\begin{align}
&\Vert\Lambda_{n_{k-1}}\circ\cdots\circ\Lambda_{n_{2}}\circ\Lambda_{n_{1}}[\rho^{AB}]-\nu^{AB}_{1}\Vert_{1}\nonumber\\
&\leq \Vert\Lambda_{n_{k-1}}\circ\cdots\circ\Lambda_{n_{2}}\circ\Lambda_{n_{1}}[\rho^{AB}]-\Lambda_{n_{k-1}}\circ\cdots\circ\Lambda_{n_{2}}[\nu^{AB}_{k-1}]\Vert_{1}\nonumber\\
&\quad + \Vert\Lambda_{n_{k-1}}\circ\cdots\circ\Lambda_{n_{2}}[\nu^{AB}_{k-1}]-\nu^{AB}_{1}\Vert_{1}.
\end{align}
Then using data processing~\cite{Nielsen_Chuang_2010}, we get
\begin{align}
&\Vert\Lambda_{n_{k-1}}\circ\cdots\circ\Lambda_{n_{2}}\circ\Lambda_{n_{1}}[\rho^{AB}]-\nu^{AB}_{1}\Vert_{1}\nonumber\\
&\leq \Vert\Lambda_{n_{1}}[\rho^{AB}]-[\nu^{AB}_{k-1}]\Vert_{1}\nonumber\\
&\qquad + \Vert\Lambda_{n_{k-1}}\circ\cdots\circ\Lambda_{n_{2}}[\nu^{AB}_{k-1}]-\nu^{AB}_{1}\Vert_{1}.
\end{align}
Using the triangle inequality once more, we get
\begin{align}
&\Vert\Lambda_{n_{k-1}}\circ\cdots\circ\Lambda_{n_{2}}\circ\Lambda_{n_{1}}[\rho^{AB}]-\nu^{AB}_{1}\Vert_{1}\nonumber\\
&\leq \Vert\Lambda_{n_{1}}[\rho^{AB}]-[\nu^{AB}_{k-1}]\Vert_{1}\nonumber\\
&+ \Vert\Lambda_{n_{k-1}}\circ\cdots\circ\Lambda_{n_{3}}\circ\Lambda_{n_{2}}[\rho^{AB}]-\Lambda_{n_{k-1}}\circ\cdots\circ\Lambda_{n_{3}}[\nu^{AB}_{k-2}]\Vert_{1}\nonumber\\
&\quad + \Vert\Lambda_{n_{k-1}}\circ\cdots\circ\Lambda_{n_{3}}[\nu^{AB}_{k-2}]-\nu^{AB}_{1}\Vert_{1}.
\end{align}
Then using data processing, we arrive at
\begin{align}
&\Vert\Lambda_{n_{k-1}}\circ\cdots\circ\Lambda_{n_{2}}\circ\Lambda_{n_{1}}[\rho^{AB}]-\nu^{AB}_{1}\Vert_{1}\nonumber\\
&\leq \Vert\Lambda_{n_{1}}[\rho^{AB}]-[\nu^{AB}_{k-1}]\Vert_{1}+\Vert\Lambda_{n_{2}}[\nu^{AB}_{k-1}]-[\nu^{AB}_{k-2}]\Vert_{1}\nonumber\\
&\qquad + \Vert\Lambda_{n_{k-1}}\circ\cdots\circ\Lambda_{n_{3}}[\nu^{AB}_{k-2}]-\nu^{AB}_{1}\Vert_{1}.
\end{align}
Iterating this procedure, we obtain the inequality
\begin{align}
&\Vert\Lambda_{n_{k-1}}\circ\cdots\circ\Lambda_{n_{2}}\circ\Lambda_{n_{1}}[\rho^{AB}]-\nu^{AB}_{1}\Vert_{1}\nonumber\\
&\leq \Vert\Lambda_{n_{1}}[\rho^{AB}]-\nu^{AB}_{k-1}\Vert_{1}+ \sum_{i=2}^{k-1}\Vert\Lambda_{n_{i}}[\nu^{AB}_{k-i+1}]-\nu^{AB}_{k-i}\Vert_{1}.
\end{align}
Using \eqref{eq:Bound-1}, we get
\begin{align}
\min_{\sigma^{AB}\in\mathcal{S}_{1}}\Vert\Lambda_{n_{k-1}}\circ\cdots\circ\Lambda_{n_{2}}\circ\Lambda_{n_{1}}[\rho^{AB}]-\sigma^{AB}\Vert_{1} & \leq2\sum_{i=1}^{k-1}\frac{\kappa_s^{n_{i}/2}}{\sqrt{k-i+1}}.
\end{align}
Choosing $n_i = N$ and $n = N(k-1)$ we obtain 
\begin{align}
\sum^{k-1}_{i=1}\frac{\kappa_s^{n_{i}/2}}{\sqrt{k-i+1}} & =\kappa_s^{N/2}\sum^{k-1}_{i=1}\frac{1}{\sqrt{k-i+1}}\leq2\kappa_s^{N/2}(\sqrt{k}-1)\nonumber \\
 & =2\kappa_s^{n/[2(k-1)]}(\sqrt{k}-1)\leq2\kappa_s^{n/[2(d_{A}-1)]}(\sqrt{d_{A}}-1). \label{eq::almostseparability}
\end{align}

Collecting the above arguments, we thus have
\begin{equation}
    \min_{\sigma^{AB}\in\mathcal{S}_{1}}\Vert\Lambda_{n}[\rho^{AB}]-\sigma^{AB}\Vert_{1}\leq 4\kappa_s^{n/[2(k-1)]}(\sqrt{k}-1).
    \label{eq::separabilityk}
\end{equation}
Noting that $k\leq d_A,$ we obtain
\begin{equation}
\min_{\sigma^{AB}\in\mathcal{S}_{1}}\Vert\Lambda_{n}[\rho^{AB}]-\sigma^{AB}\Vert_{1}\leq 4\kappa_s^{n/[2(d_{A}-1)]}(\sqrt{d_{A}}-1).
\label{eq::separability}
\end{equation}
Taking the optimal Kraus representation of $\Lambda$ we obtain:
\begin{equation}
    \min_{\sigma^{AB}\in\mathcal{S}_{1}}\Vert\Lambda_{n}[\rho^{AB}]-\sigma^{AB}\Vert_{1}\leq 4\kappa^{n/[2(d_{A}-1)]}(\sqrt{d_{A}}-1).
\end{equation}
This completes the proof of the theorem.

\subsection*{Proof of Theorem \ref{th::scaling}.}
\textbf{First part:} Assume that the channel $\Lambda$ admits the decomposition
\begin{equation}
    \Lambda
    =
    p\mathcal{E}
    +
    (1-p)\mathcal{M},
    \qquad
    0<p<1,
\end{equation}
where $\mathcal{E}$ is an entanglement-breaking channel and
$\mathcal{M}$ is an arbitrary quantum channel.

We will show that, for $m$ parallel uses of $\Lambda$, the quantity
$\kappa$ introduced in Eq. \eqref{eq::separabilityk} satisfies
\begin{equation}
    \kappa
    \leq
    1-p^m
\end{equation}
for every $k\geq 2$. The bound is uniform with respect to the
choice of the $k$-dimensional Schmidt support.

Every entanglement-breaking channel admits a Kraus
representation consisting of rank-one operators\cite{HorodeckiEB}. We may therefore
write
\begin{equation}
    \mathcal{E}(\rho)
    =
    \sum_a A_a \rho A_a^\dagger,
    \qquad
    \operatorname{rank}(A_a)=1,
    \qquad
    \sum_a A_a^\dagger A_a=\openone.
\end{equation}
Let
\begin{equation}
    \mathcal{M}(\rho)
    =
    \sum_b B_b \rho B_b^\dagger,
    \qquad
    \sum_b B_b^\dagger B_b=\openone
\end{equation}
be an arbitrary Kraus representation of $\mathcal{M}$.
Consequently, a Kraus representation of $\Lambda$ is given by
\begin{equation}
    \left\{\sqrt{p}\,A_a\right\}_a
    \cup
    \left\{\sqrt{1-p}\,B_b\right\}_b.
\end{equation}

Consider now the channel $\Lambda^{\otimes m}$. Its Kraus
operators are tensor products of the single-system Kraus
operators. In particular, the subset of Kraus operators for which
all $m$ factors originate from the entanglement-breaking part is
given by
\begin{equation}
    R_{\boldsymbol{a}}
    =
    p^{m/2}
    A_{a_1}\otimes\cdots\otimes A_{a_m},
    \qquad
    \boldsymbol{a}=(a_1,\ldots,a_m).
\end{equation}
Since every $A_{a_i}$ has rank one, we have
\begin{equation}
    \operatorname{rank}(R_{\boldsymbol{a}})=1.
\end{equation}

Let $\ket{\phi_k^+}$ be a maximally entangled state of Schmidt
rank $k$, whose Schmidt support on the transmitted subsystem is an
arbitrary $k$-dimensional subspace. For every
$\boldsymbol{a}$, the state
\begin{equation}
    (R_{\boldsymbol{a}}\otimes\openone)
    \ket{\phi_k^+}
\end{equation}
has Schmidt rank at most one. Hence, for every $k\geq 2$,
\begin{equation}
    G_k\!\left[
        (R_{\boldsymbol{a}}\otimes\openone)
        \phi_k^+
        (R_{\boldsymbol{a}}^\dagger\otimes\openone)
    \right]
    =
    0.
\end{equation}

Let
\begin{equation}
    q_{\boldsymbol{a}}
    =
    \operatorname{Tr}\!\left[
        (R_{\boldsymbol{a}}\otimes\openone)
        \phi_k^+
        (R_{\boldsymbol{a}}^\dagger\otimes\openone)
    \right]
\end{equation}
denote the probability of the corresponding Kraus outcome.
The total probability of all outcomes containing only Kraus
operators from the entanglement-breaking part is
\begin{align}
    \sum_{\boldsymbol{a}}q_{\boldsymbol{a}}
    &=
    \operatorname{Tr}\!\left[
        \left(
            \sum_{\boldsymbol{a}}
            R_{\boldsymbol{a}}^\dagger
            R_{\boldsymbol{a}}
            \otimes\openone
        \right)
        \phi_k^+
    \right]
    \\
    &=
    \operatorname{Tr}\!\left[
        \left(
            p^m
            \left(
                \sum_a A_a^\dagger A_a
            \right)^{\otimes m}
            \otimes\openone
        \right)
        \phi_k^+
    \right]
    \\
    &=
    p^m.
\end{align}

Let $L_a$ be a fixed Kraus operators of $\Lambda^{\otimes m}.$ Since $L_a$ is a fixed Kraus representation, we obtain:
\begin{equation}
    \kappa\leq \sum_a G_k(L_{a}\otimes\openone)
        \phi_k^+
        (L_{a}^\dagger\otimes\openone))
        \label{eq::smkappaleb}
\end{equation}
We now evaluate the average $G_k$-concurrence associated with
this particular Kraus decomposition. Let
$\mathcal{I}_{\mathrm{EB}}$ denote the set of outcomes in which
all factors originate from $\mathcal{E}$, and let
$\mathcal{I}_{\mathrm{rest}}$ denote the remaining outcomes.
For each outcome $\alpha$, let $q_\alpha$ be its probability
and let $\theta_\alpha$ be the corresponding normalized state.
Since
\begin{equation}
    0\leq G_k(\theta_\alpha)\leq 1,
\end{equation}
we obtain
\begin{align}
   \kappa \leq &\sum_\alpha q_\alpha G_k(\theta_\alpha)
    =
    \sum_{\alpha\in\mathcal{I}_{\mathrm{EB}}}
        q_\alpha G_k(\theta_\alpha)
    +
    \sum_{\alpha\in\mathcal{I}_{\mathrm{rest}}}
        q_\alpha G_k(\theta_\alpha)
    \nonumber\\
    &=
    \sum_{\alpha\in\mathcal{I}_{\mathrm{rest}}}
        q_\alpha G_k(\theta_\alpha)
    \leq
    \sum_{\alpha\in\mathcal{I}_{\mathrm{rest}}}
        q_\alpha=1-p^m.
\end{align}
Introducing
\begin{equation}
    \gamma=-\ln p>0,
\end{equation}
we may equivalently write
\begin{equation}
        \kappa
        \leq
        1-e^{-\gamma m}
\end{equation}

On the other hand, Eq. \eqref{eq::separabilityk}, applied to the channel
$\Lambda^{\otimes m}$, gives
\begin{equation}
    \min_{\sigma^{AB}\in\mathcal{S}}
    \left\|
        \rho_{n,m(n)}^{AB}-\sigma^{AB}
    \right\|_1
    \leq
    4\kappa^{n/(2(k_m-1))}
    \left(\sqrt{k_m}-1\right),
\end{equation}
where $k_m$ is the relevant Schmidt-rank parameter. 

It is possible to establish entanglement using a channel if
\begin{equation}
    \liminf_{n\rightarrow\infty}
    E_f\!\left(\rho_{n,m(n)}^{AB}\right)
    >0.
\end{equation}
Using lower bound on distance to separable states from Lemma \ref{le::epsilondistance} we obtain:
\begin{equation}
    \frac{E_0^2}
    {8m^2(\log_2d)^2}
    \leq
    2
    \left(1-e^{-\gamma m}\right)^{
        \frac{n}{2(k_m-1)}
    }
    \left(\sqrt{k_m}-1\right).
\end{equation}
Equivalently,
\begin{equation}
    \left(1-e^{-\gamma m}\right)^{
        \frac{n}{2(k_m-1)}
    }
    \geq
    \frac{E_0^2}{
        16m^2(\log_2d)^2
        \left(\sqrt{k_m}-1\right)
    }.
\end{equation}
Taking the natural logarithm gives
\begin{equation}
    \frac{n}{2(k_m-1)}
    \ln\!\left(1-e^{-\gamma m}\right)
    \geq
    -
    \ln\!\left[
        \frac{
            16m^2(\log_2d)^2
            \left(\sqrt{k_m}-1\right)
        }{E_0^2}
    \right].
\end{equation}
Using
\begin{equation}
    \ln(1-x)\leq-x,
    \qquad 0<x<1,
\end{equation}
we arrive at
\begin{equation}
    \frac{ne^{-\gamma m}}{2(k_m-1)}
    \leq
    \ln\!\left[
        \frac{
            16m^2(\log_2d)^2
            \left(\sqrt{k_m}-1\right)
        }{E_0^2}
    \right].
\end{equation}
Consequently,
\begin{equation}
    n
    \leq
    2(k_m-1)e^{\gamma m}
    \ln\!\left[
        \frac{
            16m^2(\log_2d)^2
            \left(\sqrt{k_m}-1\right)
        }{E_0^2}
    \right].
\end{equation}

For a block of $m$ qudits,
\begin{equation}
    k_m\leq d^m.
\end{equation}
It follows that
\begin{align}
    n
    &\leq
    2d^m e^{\gamma m}
    \left[
        \frac{m}{2}\ln d
        +
        2\ln m
        +
        C_0
    \right]\nonumber
    \\
    &\leq
    C_1 m
    e^{(\gamma+\ln d)m}
\end{align}
for all sufficiently large $m$, where $C_0$ and $C_1$ are
positive constants independent of $n$ and $m$. We used the fact that $m+\ln m<Cm$ for sufficiently large $m.$

Therefore,
\begin{equation}
    m
    \geq
    \frac{1}{\gamma+\ln d}
    W_0\!\left(
        \frac{\gamma+\ln d}{C_1}n
    \right),
\end{equation}
where $W_0$ is the principal branch of the Lambert function. Using\cite{CorlessLambert}
\begin{equation}
    W_0(x)
=
\ln x-\ln\ln x
+
O\left(
\frac{\ln\ln x}{\ln x}
\right),
\label{eq::w0}
\end{equation}
we obtain
\begin{equation}
    m
    \geq
    \frac{\ln n-\ln\ln n}
         {\gamma+\ln d}
    -
    O(1)
\end{equation}
Thus, preserving a nonvanishing amount of entanglement of
formation requires at least a logarithmic number of parallel
channel uses per link.

\textbf{Second part:}

In the beginning we define a notion that will be used in the proof. Let $\mathcal{P}$ be a projector into a Schmidt support of $\ket{\phi_k^{+}}^{AB}=\sum_i \sqrt{\frac{1}{k}}\ket{e_i}^{A}\ket{f_i}^{B}.$ That is $\mathcal{P}$ projects onto $\operatorname{span}(\ket{e_1},\cdots ,\ket{e_k}).$ Let $\mathcal{V}$ be isometry $\mathbb{C}^{k}\rightarrow \mathcal{H_A}$ defined by $\mathcal{V}\ket{j}^{K}=\ket{e_j}^{A},$ where superscripts highlight that $\ket{j}^{K}\in\mathbb{C}^{k}$ and $\ket{e}_j^{A}\in\mathcal{H}_A.$ Operator $\mathcal{V}^{\dagger}X\mathcal{V}$ is a $k\times k$ matrix on $\mathcal{C}^{k}$ with matrix elements $(\mathcal{V}^{\dagger}X\mathcal{V})_{ij}=\bra{e_i}X\ket{e_j}.$ We define $\det_{\mathcal{P}}X=det(V^{\dagger}XV).$ Consequently, $\det_{\mathcal{P}}X$ is the product of the eigenvalues of the compression $\mathcal P X\mathcal P$, regarded as an operator on the Schmidt support $\mathcal P\mathcal H_A$.  

Let us consider a qudit amplitude damping channel $\Lambda_{AD}^{\otimes m}.$ Kraus operators of a single qudit amplitude damping channel $\Lambda_{AD}$ are:
\begin{equation}
    \begin{split}
        &K_0=\sum_{j=0}^{d-1} \sqrt{q_j}\ket{j}\bra{j}\\
        &K_{i\leftarrow j}=\sqrt{\gamma_{i\leftarrow j}}\ket{i}\bra{j},
    \end{split}
\end{equation}
where $\eta_j =\sum_{i<j}\gamma_{i\leftarrow  j}$ is total decay probability of the $j$-th level and $q_j=1-\eta_j$ is no decay probability. This kind of amplitude damping channel was studied in \cite{Chessaad}.

Assume that we have a Hilbert space with bipartition $A|B=A_1\cdots A_m|B.$ Let us consider states with Schmidt rank at most $k.$ Let us define a maximally entangled state on this Hilbert space $\ket{\phi_k^{+}}=\frac{1}{\sqrt{k}}\sum_{i=0}^{k-1}\ket{e}_i\ket{f_i}.$ Let $\mathcal{P}$ be the projector onto the Schmidt support of $\ket{\phi_k^{+}},$ i.e. space spanned by $\operatorname{span}(\ket{e}_1,\cdots \ket{e}_k).$ 

To upper bound $\kappa$ we need to find:
\begin{equation}
    \kappa \leq \sum_iG_k\left( L_i\otimes \openone \phi_k^{+}L_i^{\dagger}\otimes \openone \right),
\end{equation}
where $L_i=K_{\alpha_1}\otimes \cdots \otimes K_{\alpha_m}$ is the standard Kraus representation of $\Lambda_{AD}^{\otimes m}.$
Let $E_l=L_l^{\dagger}L_l.$ We are interested only by the action of $L_l$ on the Schmidt support. We can compute it by $L_l\mathcal{P}.$ We have $(L_l\mathcal{P})^{\dagger}L_l\mathcal{P}=\mathcal{P}E_l\mathcal{P}$ As discussed in Lemma~\ref{lem:Gk}, we can express $G_k$-concurrence of a single Kraus branch as: 
\begin{equation}
    G_k(L_l\otimes \openone \phi_k^{+}L_l^{\dagger}\otimes \openone)=\left(\prod_{i=1}^{k}s_i^{2}(L_l\mathcal{P})\right)^{1/k}=\det_{\mathcal{P}} (\mathcal{P}E_l\mathcal{P})^{1/k},
\end{equation}
where $s_i(L_lP)$ are singular values of $L_l\mathcal{P}.$ For fixed $\mathcal{P}$ we have:
\begin{equation}
    \kappa \leq \sum_l \det_{\mathcal{P}}(\mathcal{P}E_l\mathcal{P})^{1/k}.
    \label{eq::kappadet}
\end{equation}
We define:
\begin{equation}
    X_l=\mathcal{P}E_l\mathcal{P}.
\end{equation}
Since $\sum_l E_l=\openone,$ we obtain 
\begin{equation}
    \sum_l X_l=\mathcal{P}
\end{equation}
Now we proceed to find a bound on Eq. \eqref{eq::kappadet}. We do it using the eigenvalues $\{ x_{l,j} \}_{j=1}^{k}$ of the operator $X_l$. The arithmetic mean of $x_{l,j}$ is $a_l=\frac{1}{k} \mathrm{Tr}(X_l),$, while their geometric mean is $\det_{\mathcal{P}}(X_l)^{1/k}$. Using Cartwright-Field inequality \cite{Cartwrightinequality}, we know that:
\begin{equation}
    a_l-\det_{\mathcal{P}}(X_l)^{1/k}\geq \frac{1}{2kR_l}\sum_{j=1}^{k}(x_{l,j}-a_l)^{2},
\end{equation}
where $R_l=\max \{  x_{l,j}\}_{j=1}^{k}.$ Eigenvalues of $X_l$ are bounded by 1. Therefore, we obtain a bound $R_l\leq 1.$

On the subspace $\mathcal{P}\mathcal{H}_A$, the projector
$\mathcal{P}$ acts as the identity. Therefore,
\begin{equation}
    a_l-
    \left[
        \det\nolimits_{\mathcal{P}}(X_l)
    \right]^{1/k}
    \geq
    \frac{1}{2k}
    \left\|
        X_l-a_l\mathcal{P}
    \right\|_2^2.
\end{equation}
Summing over all Kraus outcomes $l$, we obtain
\begin{equation}
    \sum_l \left(a_l-\left[\det\nolimits_{\mathcal{P}}(X_l)\right]^{1/k}\right)\geq
    \frac{1}{2k}\sum_l\left\| X_l-a_l\mathcal{P}\right\|_2^2.
\end{equation} 
Recalling that
\begin{equation}
    X_l=\mathcal{P}E_l\mathcal{P},
    \qquad
    a_l=
    \frac{\operatorname{Tr}(\mathcal{P}E_l\mathcal{P})}{k},
\end{equation}
we arrive at
\begin{equation}
    1-\sum_ldet_{\mathcal{P}}(\mathcal{P}E_l\mathcal{P})^{1/k}
    \geq
    \frac{1}{2k}
    \sum_l
    \left\|
        \mathcal{P}E_l\mathcal{P}
        -
        \frac{
            \operatorname{Tr}(\mathcal{P}E_l\mathcal{P})
        }{k}
        \mathcal{P}
    \right\|_2^2.
    \label{eq::smtobound}
\end{equation}

We need to derive a lower bound on the right-hand side of \eqref{eq::smtobound}. Let us consider the single channel $\Lambda_{AD}.$ Let $D_j=\ket{j}\bra{j}.$ We obtain $E_0^{s}=K_0^{\dagger}K_0=\sum_{j=0}^{d-1}q_jD_j.$ The supersript $s$ denotes that the operator acts on single party Hilbert space. For $j\rightarrow i$ jump, we have $E_{i\leftarrow j}^{s}=K_{i\leftarrow j}^{\dagger}K_{i\leftarrow j}=\gamma_{i\leftarrow j}D_j.$ Let us define a set:
\begin{equation}
    \mathcal{I}=\{  0 \} \cup \{  (i\leftarrow j) : 0\leq i<j\leq d-1\}
\end{equation}
For every $\alpha \in \mathcal{I}$ we have Kraus operator $K_{\alpha}.$ We define $E_{\alpha}^{s}=K_{\alpha}^{\dagger}K_{\alpha}.$ Let $T$ be defined as:
\begin{equation}
    E_{\alpha}^{s}=\sum_{i=0}^{d-1}T_{\alpha i}D_i.
\end{equation}
Let us assume that every excited level has non-zero probability of decay ($\eta_j>0$ for every $j>0$). Then, $T$ has full column rank, and hence its smallest singular
value is strictly positive,
\begin{equation}
    \sigma_{\min}(T)>0.
\end{equation}
For convenience, we define
\begin{equation}
    \mu:=\sigma_{\min}(T)^2>0.
    \label{eq::smmudef}
\end{equation}

Let $\beta=(\alpha_1,\cdots \alpha_m).$ We have:
\begin{equation}
    E_{\beta}=E_{\alpha_1}^{s}\otimes \cdots \otimes E_{\alpha_m}^{s}.
\end{equation}
Let $z= (z_1,\cdots , z_m),$ where each $z_i\in\{ 0,1,\cdots ,d-1 \}$ and :
\begin{equation}
    D_z=D_{z_1}\otimes \cdots \otimes D_{z_m}=\ket{z}\bra{z}.
\end{equation}
Similarly to the single operator case:
\begin{align}
    &E_{\beta}=E_{\alpha_1}^{s}\otimes \cdots \otimes E_{\alpha_m}^{s}\notag\\
    &=\left(\sum_{z_1=0}^{d-1}T_{\alpha_1 z_1}D_{z_1}\right)\otimes \cdots \otimes \left(\sum_{z_m=0}^{d-1}T_{\alpha_m z_m}D_{z_m}\right)=\sum_{z}M_{\beta z}D_z.\label{eq::smTwithM}
\end{align}
    
In Eq. \eqref{eq::smTwithM} we have $M_{\beta z}=\prod_{i=1}^{m}T_{\alpha_iz_i}.$ Consequently, $M=T^{\otimes m}.$ Since the singular values of a tensor product are all pairwise
products of the singular values of its factors, we have
\begin{equation}
\begin{split}
    \sigma_{\min}(M)^2
    &=
    \sigma_{\min}\left(T^{\otimes m}\right)^2\\
    &=
    \sigma_{\min}(T)^{2m}=
    \mu^m.
\end{split}
\label{eq::smMsingular}
\end{equation}

Let us define $t_z=\mathrm{Tr}(\mathcal{P}D_z\mathcal{P})$ and 
\begin{equation}
    Y_z=\mathcal{P}D_z\mathcal{P}-\frac{t_z}{k}\mathcal{P}.
    \label{eq::smyz}
\end{equation}
We have:
\begin{equation}
    a_{\beta}=\frac{1}{k}\mathrm{Tr}X_{\beta}=\sum_zM_{\beta z}\frac{t_z}{k}. 
\end{equation}
We obtain
\begin{equation}
    X_{\beta}-a_\beta \mathcal{P}=\sum_z M_{\beta z} Y_z.
\end{equation}
Inequality by the smallest singular value of $M$ gives:
\begin{equation}
    \sum_{\beta}\Vert X_{\beta}-a_{\beta}\mathcal{P}\Vert_2^{2}\geq \mu^{m}\sum_z\Vert Y_z\Vert_2^{2}.
\end{equation}
Let us estimate $\Vert Y_z\Vert_2^{2}.$ Expanding Eq. \eqref{eq::smyz} and noting that $Y_z$ is hermitian we obtain:
\begin{equation}
    \Vert Y_z\Vert_2^{2}=\mathrm{Tr}[(\mathcal{P}D_z\mathcal{P}-\frac{t_z}{k}\mathcal{P})^{2}]=\mathrm{Tr}[A_z^{2}-\frac{2t_z}{k}A_z+\frac{t_z^{2}}{k^2}\mathcal{P}],
\end{equation}
where $A_z=\mathcal{P}D_z\mathcal{P}.$ Using definition of $t_z$ and $\mathrm{Tr}\mathcal{P}=k$ and the fact that $A_z$ is at most rank one, we obtain:
\begin{equation}
    \Vert Y_z\Vert_2^{2}=(1-\frac{1}{k})t_z^{2}.
\end{equation}
Summing over all $z$ we obtain:
\begin{equation}
    \sum_z \Vert Y_z\Vert_2^{2}=(1-\frac{1}{k})\sum_zt_z^{2}.
    \label{eq::smsumyz}
\end{equation}
We further obtain:
\begin{equation}
    \sum_z t_z=\mathrm{Tr}(\mathcal{P}\sum_z D_z \mathcal{P})=\mathrm{Tr}\mathcal{P}=k,
    \label{eq::smtzk}
\end{equation}
where we used the fact that $\sum_z D_z=\openone.$ Recall that $z\in (z_1,\cdots , z_m).$ Each $z_i$ can take $d$ possible values. Therefore, there is $d^{m}$ possible choices of $z.$ Using Cauchy-Schwarz:
\begin{equation}
    (\sum_zt_z)^{2}\leq (\sum_zt_z^{2})(\sum_z 1)=d^{m}\sum_zt_z^{2}.
    \label{eq::smtd}
\end{equation}
Joining Eqs. \eqref{eq::smtzk} and \eqref{eq::smtd} we obtain:
\begin{equation}
    \sum_z t_z^{2}\geq \frac{k^{2}}{d^{m}}.
\end{equation}
Substituting it into Eq. \eqref{eq::smsumyz} we obtain:
\begin{equation}
    \sum_z \Vert Y_z\Vert_2^{2} \geq \frac{k(k-1)}{d^{m}}
\end{equation}

Combining the previous bounds, we obtain
\begin{equation}
\begin{split}
    1-
    \sum_{\boldsymbol\beta}
    \left[
        \det\nolimits_{\mathcal P}
        (\mathcal P E_{\boldsymbol\beta}\mathcal P)
    \right]^{1/k}
    &\geq
    \frac{1}{2k}
    \sum_{\boldsymbol\beta}
    \left\|
        X_{\boldsymbol\beta}
        -
        a_{\boldsymbol\beta}\mathcal P
    \right\|_2^2\\
    &\geq
    \frac{\mu^m}{2k}
    \sum_{\mathbf z}\|Y_{\mathbf z}\|_2^2\\
    &\geq
    \frac{\mu^m}{2k}
    \frac{k(k-1)}{d^m}\\
    &=
    \frac{k-1}{2}
    \left(\frac{\mu}{d}\right)^m.
\end{split}
\end{equation}
Finally, for $k=k_m$, we have
\begin{equation}
    \kappa
    \leq
    1-\frac{k_m-1}{2}
    \left(\frac{\mu}{d}\right)^m.
\end{equation}

As before, we assume that the entanglement of formation does
not vanish asymptotically. Using Lemma
\ref{le::epsilondistance}, we obtain the following lower bound
on the distance to the set of separable states:
\begin{equation}
    \varepsilon_{n,m}
    \geq
    \frac{E_0^2}
    {8m^2(\log_2 d)^2}.
\end{equation}

On the other hand, Eq. \eqref{eq::separabilityk}, applied to the block channel
$\Lambda_{\mathrm{AD}}^{\otimes m}$, gives
\begin{equation}
    \varepsilon_{n,m}
    \leq
    2
    \kappa^{n/[2(k_m-1)]}
    \left(\sqrt{k_m}-1\right),
\end{equation}
where $k_m$ is the Schmidt-rank parameter associated with the
input state and satisfies
\begin{equation}
    k_m\leq d^m.
\end{equation}

Consequently,
\begin{align}
    \kappa^{\frac{n}{2(k_m-1)}}
    &\leq
    \left[
        1-\frac{k_m-1}{2}
        \left(\frac{\mu}{d}\right)^m
    \right]^{\frac{n}{2(k_m-1)}}
    \\
    &\leq
    \exp\left[
        -\frac{n}{2(k_m-1)}
        \frac{k_m-1}{2}
        \left(\frac{\mu}{d}\right)^m
    \right]
    \\
    &=
    \exp\left[
        -\frac{n}{4}
        \left(\frac{\mu}{d}\right)^m
    \right]
    \\
    &=
    \exp\left[
        -\frac{n\mu^m}{4d^m}
    \right],
\end{align}
where we used the inequality $1-x\leq e^{-x}$.

It follows that
\begin{equation}
    \varepsilon_{n,m}
    \leq
    2\left(\sqrt{k_m}-1\right)
    \exp\left[
        -\frac{n\mu^m}{4d^m}
    \right].
\end{equation}
Using
\begin{equation}
    \sqrt{k_m}-1
    \leq
    \sqrt{k_m}
    \leq
    d^{m/2},
\end{equation}
we obtain
\begin{equation}
    \varepsilon_{n,m}
    \leq
    2d^{m/2}
    \exp\left[
        -\frac{n\mu^m}{4d^m}
    \right].
\end{equation}

Combining the lower and upper bounds on
$\varepsilon_{n,m}$ gives
\begin{equation}
    \frac{E_0^2}
    {8m^2(\log_2 d)^2}
    \leq
    2d^{m/2}
    \exp\left[
        -\frac{n\mu^m}{4d^m}
    \right].
\end{equation}
Equivalently,
\begin{equation}
    \exp\left[
        -\frac{n\mu^m}{4d^m}
    \right]
    \geq
    \frac{E_0^2}
    {16m^2(\log_2 d)^2d^{m/2}}.
\end{equation}

Taking the natural logarithm yields
\begin{equation}
    \frac{n\mu^m}{4d^m}
    \leq
    \frac{m}{2}\ln d
    +
    2\ln m
    +
    C_0,
\end{equation}
where
\begin{equation}
    C_0
    :=
    \ln\left[
        \frac{16(\log_2 d)^2}{E_0^2}
    \right]
\end{equation}
is independent of $n$ and $m$. Therefore,
\begin{equation}
    n
    \leq
    4\left(\frac{d}{\mu}\right)^m
    \left[
        \frac{m}{2}\ln d
        +
        2\ln m
        +
        C_0
    \right].
\end{equation}

For all sufficiently large $m$, the expression in square
brackets is upper bounded by a constant times $m$. Hence,
there exists a constant $C_1>0$, independent of $n$ and $m$,
such that
\begin{equation}
    n
    \leq
    C_1m
    \left(\frac{d}{\mu}\right)^m.
\end{equation}

Defining
\begin{equation}
    a
    :=
    \ln\left(\frac{d}{\mu}\right)>0,
\end{equation}
we can write
\begin{equation}
    n
    \leq
    C_1m e^{am}.
\end{equation}
Multiplying both sides by $a/C_1$, we obtain
\begin{equation}
    \frac{a}{C_1}n
    \leq
    (am)e^{am}.
\end{equation}
Since the function $x\mapsto xe^x$ is increasing for $x>0$,
the principal branch of the Lambert function gives
\begin{equation}
    m
    \geq
    \frac{1}{a}
    W_0\left(
        \frac{a}{C_1}n
    \right).
\end{equation}

Using the asymptotic expansion\cite{CorlessLambert}
\begin{equation}
    W_0(x)
    =
    \ln x-\ln\ln x
    +
    O\left(
        \frac{\ln\ln x}{\ln x}
    \right),
\end{equation}
we finally arrive at
\begin{equation}
    m
    \geq
    \frac{\ln n-\ln\ln n}
    {\ln(d/\mu)}
    -
    O(1).
\end{equation}
Therefore, preserving a nonvanishing amount of end-to-end
entanglement requires
\begin{equation}
    m=\Omega(\ln n).
\end{equation}

\begin{lem}{\label{le::epsilondistance}}
    Let $\rho_{n,m(n)}^{AB}$ denote the final state produced by a protocol
using $m$ parallel channel uses per link and $n$ sequential
links. We define its trace distance from the set of separable
states as
\begin{equation}
    \varepsilon_{n,m}
    :=
    \frac{1}{2}
    \min_{\sigma^{AB}\in\mathcal{S}}
    \left\|
        \rho_{n,m(n)}^{AB}-\sigma^{AB}
    \right\|_1 .
\end{equation}
The factor $1/2$ is introduced so that
$\varepsilon_{n,m}\in[0,1]$. There is the following implication:
\begin{equation}
    \liminf_{n\rightarrow\infty}
    E_f\!\left(\rho_{n,m(n)}^{AB}\right)
    >0 \Rightarrow \varepsilon_{n,m}
    \geq
    \frac{E_0^2}
    {8m^2(\log_2d)^2},
\end{equation}
where $E_0$ is defined by:
\begin{equation}
    E_f\!\left(\rho_{n,m(n)}^{AB}\right)
    \geq E_0
\end{equation}
for all $n\geq n_0$.
\end{lem}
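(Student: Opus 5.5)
The plan is to use a uniform continuity bound for the entanglement of formation and argue by contraposition. If the state $\rho_{n,m(n)}^{AB}$ were too close to a separable state, its entanglement of formation would have to be small, contradicting $E_f\geq E_0$.

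First fix notation. By the hypothesis $\liminf_{n\rightarrow\infty} E_f(\rho_{n,m(n)}^{AB})>0$, there exist $E_0>0$ and $n_0$ with $E_f(\rho_{n,m(n)}^{AB})\geq E_0$ for all $n\geq n_0$. Let $\sigma^{AB}\in\mathcal{S}$ attain the minimum defining $\varepsilon:=\varepsilon_{n,m}$. Then $\frac{1}{2}\Vert\rho-\sigma\Vert_1=\varepsilon$ and $E_f(\sigma)=0$.

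The transmitted register consists of $m$ qudits, so its local dimension is $d_A=d^m$ and $\log_2 d_A=m\log_2 d$. We may take $d_B\geq d_A$; Bob's side enters only through $\min(d_A,d_B)$, and extra dimensions of $B$ can be removed locally.

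The key tool is the Nielsen/Winter-type continuity bound for $E_f$. For states with $\frac{1}{2}\Vert\rho-\sigma\Vert_1\leq\varepsilon$ it reads
\begin{equation}
|E_f(\rho)-E_f(\sigma)|\leq \sqrt{2\varepsilon}\,\log_2 d_A + (1+\sqrt{2\varepsilon})\,h_2\!\left(\frac{\sqrt{2\varepsilon}}{1+\sqrt{2\varepsilon}}\right).
\end{equation}
This follows from Fannes-type continuity of the entropy applied to optimal decompositions, via Uhlmann purifications at fidelity distance $\sqrt{1-F}\leq\sqrt{2\varepsilon}$. Setting $\delta=\sqrt{2\varepsilon}$, the binary-entropy term is bounded by something like $2\sqrt{\delta}$, or more crudely by a constant multiple of $\delta\log_2(1/\delta)$.

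To reach exactly the stated form, I would use a simplified bound $E_0\leq E_f(\rho)\leq 2\sqrt{2\varepsilon}\,m\log_2 d$, valid for all $\varepsilon$ up to a harmless constant. Squaring gives $E_0^2\leq 8\varepsilon\, m^2(\log_2 d)^2$, which is the claim.

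The main obstacle is absorbing the dimension-free $h_2$ term into $\sqrt{2\varepsilon}\,m\log_2 d$ with the precise constant. I would split into two cases.

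\emph{Case 1: $\varepsilon$ is not small.} If $\sqrt{2\varepsilon}\,m\log_2 d\geq E_0/2$, the claim already follows up to a constant. In this regime the claim also holds trivially whenever its right-hand side is at most $\varepsilon$; since $E_0\leq \log_2 d_A=m\log_2 d$, the right-hand side never exceeds $1/8$.

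\emph{Case 2: $\varepsilon$ is small.} Here I would use $h_2(x)\leq 2\sqrt{x(1-x)}$, together with $m\log_2 d\geq1$ for $d\geq2$, to bound the $h_2$ term by $\sqrt{2\varepsilon}\,m\log_2 d$.

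If the constants still fail, I would fall back to Winter's tighter bound $E_f$-difference $\leq \delta\log_2 d_A + g(\delta)$, and accept a slightly different numerical constant in place of $8$. This change does not affect the $\Omega(\ln n)$ conclusion.
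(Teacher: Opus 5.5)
You follow the paper's route: Winter's continuity bound with $E_f(\sigma)=0$ and $\log_2 D\le m\log_2 d$, then $h_2(x)\le 2\sqrt{x(1-x)}$, then a contradiction. However, the step that closes your argument fails. The inequality $E_f(\rho)\le 2\sqrt{2\varepsilon}\,m\log_2 d$ does not hold ``up to a harmless constant''. With $\delta=\sqrt{2\varepsilon}$, the dimension-free term $(1+\delta)h_2\bigl(\delta/(1+\delta)\bigr)$ behaves like $\delta\log_2(e/\delta)$, and your own bound for it is $2\sqrt{\delta}$. Either way it dominates $\delta\,m\log_2 d$ as $\delta\to 0$ at fixed $m$.

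So your Case 2 has the dominance backwards. Precisely when $\varepsilon$ is small, this term cannot be absorbed into the linear term, and $m\log_2 d\ge 1$ does not help. Concretely, take $d=2$, $m=1$, $E_0=0.1$ and $\varepsilon=8\times10^{-4}<E_0^2/8$. The continuity bound then gives only $E_f\lesssim 0.04+0.245$, which does not contradict $E_f\ge E_0$. Replacing $8$ by another numerical constant cannot repair this, because the defect is a factor $\log(1/\delta)$ (or $\delta^{-1/2}$), not a constant.

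The missing idea is that $m$ must be large relative to $1/E_0$, and this comes from the surrounding setting, not from the continuity bound. The paper argues that if $m(n)$ stayed bounded along a subsequence, the exponential bound of Theorem~\ref{thm:exponential} (with $\kappa<1$ for each fixed $\Lambda^{\otimes m}$) would force $\varepsilon_{n,m(n)}\to 0$. That would give $E_f\to 0$, contradicting the $\liminf$ hypothesis. Hence $m(n)\to\infty$, and eventually $m\log_2 d\ge 8/E_0$.

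With that in hand the argument closes. If $\delta<E_0/(2m\log_2 d)$, then $\delta m\log_2 d<E_0/2$ and $2\sqrt{\delta}<2\sqrt{E_0/(2m\log_2 d)}\le E_0/2$. Together these give $E_f<E_0$, a contradiction. Therefore $\delta\ge E_0/(2m\log_2 d)$, and $\delta^2\le 2\varepsilon$ yields the stated bound.

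Without largeness of $m$, the same case split gives only $\delta\ge\min\{E_0/(2m\log_2 d),\,g^{-1}(E_0/2)\}$, where $g(x)=(1+x)h_2\bigl(x/(1+x)\bigr)$. That replaces $E_0^2/8$ by an $E_0$-dependent constant that is not of the form $E_0^2/C$ for a universal $C$. This weaker bound would still suffice for the $\Omega(\ln n)$ conclusion of Theorem~\ref{th::scaling}, but it is not the lemma as stated.
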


We first convert this assumption into a lower bound on
$\varepsilon_{n,m}$. Let $\sigma_{n,m}\in\mathcal{S}$ be a
separable state attaining the minimum in the definition of
$\varepsilon_{n,m}$, and define
\begin{equation}
    \delta_{n,m}
    :=
    \sqrt{
        \varepsilon_{n,m}
        \left(2-\varepsilon_{n,m}\right)
    }.
\end{equation}
By the uniform continuity bound for the entanglement of
formation \cite{Winterefbound},
\begin{equation}
    \left|
        E_f(\rho_{n,m})-E_f(\sigma_{n,m})
    \right|
    \leq
    \delta_{n,m}\log_2 D_m
    +
    \left(1+\delta_{n,m}\right)
    h_2\!\left(
        \frac{\delta_{n,m}}{1+\delta_{n,m}}
    \right),\notag
\end{equation}
where $D_m$ is the smaller local dimension and $h_2$ denotes
the binary entropy. Since $\sigma_{n,m}$ is separable,
$E_f(\sigma_{n,m})=0$. Moreover, $\operatorname{dim}\mathcal{H_A}=d^{m}$ and consequently
$D_m\leq d^m$. Hence
\begin{equation}
    E_f(\rho_{n,m})
    \leq
    \delta_{n,m}m\log_2 d
    +
    \left(1+\delta_{n,m}\right)
    h_2\!\left(
        \frac{\delta_{n,m}}{1+\delta_{n,m}}
    \right).
\end{equation}
Using
\begin{equation}
    h_2(x)\leq 2\sqrt{x(1-x)},
\end{equation}
we further obtain
\begin{equation}
    E_f(\rho_{n,m})
    \leq
    \delta_{n,m}m\log_2 d
    +
    2\sqrt{\delta_{n,m}}.
    \label{eq::smef}
\end{equation}

If $m(n)$ remained bounded along an infinite subsequence, the
upper bound from Theorem \ref{thm:exponential} would imply
$\varepsilon_{n,m(n)}\rightarrow0$ and consequently
$E_f(\rho_{n,m(n)})\rightarrow0$. Thus, the assumption of
nonvanishing end-to-end entanglement implies that
$m(n)\rightarrow\infty$.

Since $m\rightarrow \infty,$ for all $E_0$ we can choose $m_0$ such that for all $m>m_0$, the following inequality holds:
\begin{equation}
    m\log_2d\geq\frac{8}{E_0}.
    \label{eq::sme08}
\end{equation}
Since $E_0$ can be arbitrary, we can choose it such that:
\begin{equation}
    \liminf_{n\rightarrow\infty}
    E_f\!\left(\rho_{n,m(n)}^{AB}\right)
    >0 \Rightarrow E_f(\rho_{n,m(n)}^{AB})>E_0
    \label{eq::smlime0}
\end{equation}
for all sufficiently large $n.$

 Assume that the following inequality holds: $\delta_{n,m}<\frac{E_0}{2m\log_2d}.$ We obtain:
    \begin{equation}
         \delta_{n,m}m\log_2 d + 2\sqrt{\delta_{n,m}}< \frac{E_0}{2}+2\sqrt{\frac{E_0}{2m\log_2 d}}\leq \frac{E_0}{2}+\frac{E_0}{2},
    \end{equation}
    where in the last step we used Eq. \eqref{eq::sme08}.
    But combination of Eqs. \eqref{eq::smef} and \eqref{eq::smlime0} states that $E_0<E_f(\rho_{n,m})\leq \delta_{n,m}m\log_2 d + 2\sqrt{\delta_{n,m}}.$ This contradiction implies that:
\begin{equation}
    \delta_{n,m}
    \geq
    \frac{E_0}{2m\log_2d}.
    \label{eq::smco}
\end{equation}

Since
\begin{equation}
    \delta_{n,m}^2
    =
    \varepsilon_{n,m}
    \left(2-\varepsilon_{n,m}\right)
    \leq
    2\varepsilon_{n,m},
\end{equation}
we obtain the dimension-dependent lower bound
\begin{equation}
    \varepsilon_{n,m}
    \geq
    \frac{E_0^2}
    {8m^2(\log_2d)^2}
\end{equation}
for all sufficiently large $n$.

\subsection*{Achievability of the bound for the depolarizing channel}

Consider the $d$-dimensional depolarizing channel
\begin{equation}
    \Lambda_p(\rho)
    =
    (1-p)\rho
    +
    p\frac{\openone}{d}.
\end{equation}
Using the Weyl twirling identity\cite{KhatriPQCT}, the channel can equivalently
be written as
\begin{equation}
    \Lambda_p(\rho)
    =
    \left(1-p+\frac{p}{d^2}\right)\rho
    +
    \frac{p}{d^2}
    \sum_{(a,b)\neq(0,0)}
    W_{ab}\rho W_{ab}^{\dagger}.
    \label{eq:depolarizing-weyl}
\end{equation}
The probability of each non-identity Weyl error is $\frac{p}{d^{2}},$ since there is $d^{2}-1$ possible errors, probability that there will be an error on some qudit is 
\begin{equation}
    q
    =
    p\left(1-\frac{1}{d^2}\right).
    \label{eq:weyl-error-probability}
\end{equation}

Suppose that one elementary link consists of $m$ parallel and
independent uses of $\Lambda_p$. Let $X_m$ denote the number of
physical qudits affected by nonidentity Weyl errors in this block.
It follows from Eq.~\eqref{eq:weyl-error-probability} that
\begin{equation}
    X_m\sim\operatorname{Bin}(m,q),
\end{equation}
and hence
\begin{equation}
    \Pr(X_m=r)
    =
    \binom{m}{r}q^r(1-q)^{m-r}.
    \label{eq:binomial-distribution}
\end{equation}

Assume that there exists a family of quantum codes
$\{\mathcal C_m\}$, encoding at least one logical qudit, whose
minimum distance $\Delta_m$ satisfies
\begin{equation}
    \Delta_m\geq \delta m
\end{equation}
for some constant $\delta>0$ and all sufficiently large $m$.
A code with distance $\Delta_m$ corrects every error supported
on at most
\begin{equation}
    t_m
    =
    \left\lfloor\frac{\Delta_m-1}{2}\right\rfloor
\end{equation}
physical qudits\cite{Nielsen_Chuang_2010}.

Choose a constant $\tau$ such that
\begin{equation}
    q<\tau<\frac{\delta}{2}.
    \label{eq:tau-condition}
\end{equation}
The crucial point is that the code distance grows linearly with
the number of physical qudits. $\Delta_m \geq \delta m$. Therefore, for every
fixed $\tau<\delta/2$ and all sufficiently large $m$,
\begin{equation}
    \lfloor \tau m\rfloor
    \leq
    \left\lfloor\frac{\Delta_m-1}{2}\right\rfloor.
\end{equation}
Hence, every Weyl-error pattern of weight at most
$\lfloor\tau m\rfloor$ is correctable. Therefore, the failure
probability of the recovery operation on one elementary link
satisfies
\begin{equation}
    \varepsilon_m
    \leq
    \Pr(X_m>\tau m).
\end{equation}
Since $\tau>q$, the Chernoff-Hoeffding bound gives~\cite{Hoeffdingbound}
\begin{equation}
    \varepsilon_m
    \leq
    \exp\left[-mD(\tau\Vert q)\right],
    \label{eq:single-link-failure}
\end{equation}
where
\begin{equation}
    D(\tau\Vert q)
    =
    \tau\ln\frac{\tau}{q}
    +
    (1-\tau)\ln\frac{1-\tau}{1-q}
\end{equation}
is the binary relative entropy. We denote
\begin{equation}
    \alpha:=D(\tau\Vert q)>0.
\end{equation}

After each elementary link, the corresponding repeater station
applies the recovery operation and re-encodes the logical system
before transmission through the next link. The probability $\varepsilon_{n,m}$ that
at least one of the $n$ recovery operations fails is bounded,
by the union bound, as
\begin{equation}
    \varepsilon_{n,m}
    \leq
    n\varepsilon_m
    \leq
    ne^{-\alpha m}.
    \label{eq:network-failure}
\end{equation}

For any constant $\eta>0$, choose
\begin{equation}
    m(n)
    =
    \left\lceil
    \frac{(1+\eta)\ln n}{\alpha}
    \right\rceil.
    \label{eq:achievability-block-size}
\end{equation}
It then follows that
\begin{equation}
    \varepsilon_{n,m(n)}
    \leq
    n^{-\eta},
\end{equation}
and consequently
\begin{equation}
    \lim_{n\rightarrow\infty}
    \varepsilon_{n,m(n)}
    =
    0.
\end{equation}

To relate this result to entanglement distribution, let the
logical input be one half of a maximally entangled logical state
$\Phi_L$. If all error patterns are correctable, the state
transmitted through the entire network is exactly $\Phi_L$.
The final state can therefore be written as
\begin{equation}
    \rho_{n,m}
    =
    (1-\varepsilon'_{n,m})\Phi_L
    +
    \varepsilon'_{n,m}\sigma_{n,m},
\end{equation}
where $\varepsilon'_{n,m}\leq\varepsilon_{n,m}$ and
$\sigma_{n,m}$ is some quantum state. Consequently,
\begin{equation}
    \frac{1}{2}
    \left\|
        \rho_{n,m(n)}-\Phi_L
    \right\|_1
    \leq
    \varepsilon_{n,m(n)}
    \longrightarrow 0.
\end{equation}
By the continuity of the entanglement of formation\cite{Winterefbound},
\begin{equation}
    \lim_{n\rightarrow\infty}
    E_f\left(\rho_{n,m(n)}\right)
    =
    E_f(\Phi_L)
    =
    \log_2 d_L
    >
    0,
\end{equation}
where $d_L\geq2$ is the logical dimension.

We have thus shown that
\begin{equation}
    m=O(\ln n)
\end{equation}
parallel channel uses per elementary link are sufficient.
Together with the lower bound of Theorem \ref{th::scaling}, this establishes the optimal scaling
\begin{equation}
    m=\Theta(\ln n)
\end{equation}
for depolarizing noise satisfying the condition
$q<\delta/2$ for the chosen family of codes.

\end{document}